\documentclass[11pt]{article}
\usepackage{fullpage}
\usepackage[onehalfspacing]{setspace}
\usepackage{xcolor}
\usepackage{amsmath}
\usepackage{amsthm}
\usepackage{graphicx,amssymb}
\usepackage{dsfont}
\usepackage{mathtools}
\usepackage{newpxtext}
\usepackage{newpxmath}
\usepackage{algorithm}
\usepackage[noend]{algpseudocode}
\usepackage[most]{tcolorbox}
\usepackage{multicol}
\usepackage{caption}
\usepackage{parskip}
\usepackage{tikz}
\usetikzlibrary{positioning, arrows.meta}
\usepackage{booktabs,tabularx,threeparttable}
\usepackage{subcaption}
\usepackage{hyperref}
\usepackage[nameinlink,capitalize]{cleveref}

\title{Distortion of Metric Voting with Bounded Randomness}

\date{}

\newtheorem{theorem}{Theorem}[section]
\newtheorem{lemma}[theorem]{Lemma}

\newtheorem{corollary}[theorem]{Corollary}

\newtheorem{fact}[theorem]{Fact}

\theoremstyle{definition}
\newtheorem{definition}[theorem]{Definition}

\newtheorem{remark}[theorem]{Remark}

\newtheorem{claim}[theorem]{Claim}
\newtheorem*{theorem*}{Theorem}

\AddToHook{env/lemma/begin}{\crefalias{theorem}{lemma}}
\AddToHook{env/conjecture/begin}{\crefalias{theorem}{conjecture}}
\AddToHook{env/corollary/begin}{\crefalias{theorem}{corollary}}
\AddToHook{env/proposition/begin}{\crefalias{theorem}{proposition}}
\AddToHook{env/fact/begin}{\crefalias{theorem}{fact}}
\AddToHook{env/definition/begin}{\crefalias{theorem}{definition}}
\AddToHook{env/example/begin}{\crefalias{theorem}{example}}
\AddToHook{env/remark/begin}{\crefalias{theorem}{remark}}
\AddToHook{env/question/begin}{\crefalias{theorem}{question}}
\AddToHook{env/condition/begin}{\crefalias{theorem}{condition}}
\AddToHook{env/claim/begin}{\crefalias{theorem}{claim}}

\AddToHook{cmd/appendix/before}{\crefalias{section}{appendix}}

\crefname{theorem}{Theorem}{Theorems}
\crefname{lemma}{Lemma}{Lemmas}
\crefname{conjecture}{Conjecture}{Conjectures}
\crefname{corollary}{Corollary}{Corollaries}
\crefname{proposition}{Proposition}{Propositions}
\crefname{fact}{Fact}{Facts}
\crefname{definition}{Definition}{Definitions}
\crefname{example}{Example}{Examples}
\crefname{remark}{Remark}{Remarks}
\crefname{question}{Question}{Questions}
\crefname{condition}{Condition}{Conditions}

\crefname{Program}{Program}{Programs}
\creflabelformat{Program}{(#2\textup{#1})#3}

\definecolor{boxc}{rgb}{0.5, 0.2, 0.6}
\definecolor{linkc}{rgb}{0.6, 0.2, 0.3}
\definecolor{citec}{rgb}{0.3, 0.2, 0.6}
\definecolor{urlc}{rgb}{0.2, 0.6, 0.3}
\hypersetup{
    colorlinks=true,
    linkcolor=linkc,
    citecolor=citec,
    urlcolor=urlc
}

\newtcolorbox[auto counter,number within=section]{mybox}[2][]{colback=boxc!8!white,colframe=boxc!60!black,title={#2},#1}

\newcommand{\R}{\ensuremath\mathbb{R}}
\newcommand{\N}{\ensuremath\mathbb{N}}

\newcommand{\defeq}{\coloneq}
\newcommand{\set}[1]{\ensuremath{\left\{#1\right\}}}
\newcommand{\abs}[1]{\ensuremath{\left\vert#1\right\vert}}

\newcommand{\veps}{\ensuremath\varepsilon}
\newcommand*\diff{\mathop{}\!\mathrm{d}}
\renewcommand*\d{\mathop{}\!\mathrm{d}}

\newcommand{\Unif}{\mathtt{Unif}}

\DeclareMathOperator*{\Supp}{\mathrm{Supp}}
\DeclareMathOperator*{\argmin}{arg\,min}

\DeclareMathOperator*{\Ex}{\mathbb{E}}
\newcommand{\E}[2]{\ensuremath{\Ex\limits_{#1}{\left[#2\right]}}}
\DeclareMathOperator*{\Prob}{\mathbf{Pr}}
\renewcommand{\Pr}[2]{\ensuremath{\Prob\limits_{#1}{\left[#2\right]}}}

\newcommand{\SC}{\mathrm{SC}}
\newcommand{\ML}{\mathrm{ML}}
\newcommand{\SL}{\mathrm{SL}}
\newcommand{\SDL}{\mathrm{SDL}}
\newcommand{\eML}[1]{{#1\text{-}\ML}}
\newcommand{\kSL}[1]{{#1\text{-}\SL}}
\newcommand{\kSDL}[1]{{#1\text{-}\SDL}}
\newcommand{\ekSL}[2]{{\left(#1, #2\right)\text{-}\SL}}

\begin{document}
\begin{titlepage}

\author{\begin{tabular}{cc}
\begin{minipage}[t]{0.45\textwidth}\centering
Ziyi Cai\\
\small Rutgers University\\
\small\href{mailto:zc417@cs.rutgers.edu}{zc417@cs.rutgers.edu}
\end{minipage}
&
\begin{minipage}[t]{0.45\textwidth}\centering
D. D. Gao\\
\small Rutgers University\\
\small\href{mailto:d.gao@rutgers.edu}{d.gao@rutgers.edu}
\end{minipage}
\\[8ex]
\begin{minipage}[t]{0.45\textwidth}\centering
Prasanna Ramakrishnan\\
\small Stanford University\\
\small\href{mailto:pras1712@stanford.edu}{pras1712@stanford.edu}
\end{minipage}
&
\begin{minipage}[t]{0.45\textwidth}\centering
Kangning Wang\\
\small Rutgers University\\
\small\href{mailto:kn.w@rutgers.edu}{kn.w@rutgers.edu}
\end{minipage}
\end{tabular}}

\maketitle
\thispagestyle{empty}
\setcounter{page}{0}

\begin{abstract}
We study the design of voting rules in the metric distortion framework. It is known that any deterministic rule suffers distortion of at least $3$, and that randomized rules can achieve distortion strictly less than $3$, often at the cost of reduced transparency and interpretability. In this work, we explore the trade-off between these paradigms by asking whether it is possible to break the distortion barrier of $3$ using only ``bounded'' randomness. We answer in the affirmative by presenting a voting rule that (1) achieves distortion of at most $3 - \varepsilon$ for some absolute constant $\varepsilon > 0$, and (2) selects a winner uniformly at random from a deterministically identified list of constant size. Our analysis builds on new structural results for the distortion and approximation of Maximal Lotteries and Stable Lotteries.
\end{abstract}
 
\newpage

\tableofcontents
\thispagestyle{empty}
\setcounter{page}{0}

\end{titlepage}

\section{Introduction}
Elections are fundamental to democratic governance, making the design of voting rules a matter of critical importance. One particularly significant format is the single-winner election, in which we select a single winner from a pool of candidates.

Most classical voting rules are deterministic: they can be viewed as functions that map voter preferences to a unique winning candidate. However, randomized voting rules might have a richer history and theoretical presence than one might assume. In Athenian democracy, sortition was employed to select representatives via random sampling \cite{headlam1891election}. In modern social choice theory, randomization is central to prominent voting rules such as Random Dictatorship \cite{gibbard1977manipulation,amar1984choosing}, which selects the favorite candidate of a randomly chosen voter, and Maximal Lotteries \cite{MISC:conf/kreweras1965aggregation,MISC:journals/fishburn1984probabilistic,brandt2017rolling}, which selects a winner according to a mixed-strategy Nash equilibrium. Randomization also serves as a natural method for tie-breaking and is closely related to fractional power allocation. For example, in ancient Rome, two consuls were elected every year and would rotate leadership roles every month \cite{abbott1901history}; such a fractional allocation, in many mathematical models, is equivalent to a randomization between the two consuls. For a comprehensive discussion on randomized voting rules, see the survey by Brandt \cite{brandt2017rolling}.

Theoretical research often places no limits on how randomized voting rules utilize randomness; instead, they are simply defined as functions that map voter preferences to probability distributions over candidates. While mathematically elegant, this view often conflicts with practical considerations. For instance, it would not have been feasible for the Romans to elect a vast number of consuls associated with a complicated weight vector, with each consul leading for a duration proportional to his weight. Voters can also feel uneasy with randomized voting rules, which may give a positive probability to every candidate, even if some are viewed as broadly unacceptable. These conflicts motivate us to place restrictions on the form of randomness. In particular, we ask: can we follow the Roman example and elect candidates to fill a constant number of ``consulships'' with equal power?

Such a method is highly interpretable and transparent: one only needs to publish a constant-size roster (possibly with duplicate outcomes), from which the outcome is drawn uniformly at random. This restrictive form of randomness has already been explored in several social choice settings. Flanigan, Kehne, and Procaccia \cite{DBLP:conf/nips/FlaniganKP21} studied choosing from a collection of ``fair'' panels in the context of citizens' assemblies. Ebadian, Filos-Ratsikas, Latifian, and Shah \cite{DBLP:conf/nips/EbadianFL023} analyzed the distortion of voting rules that choose uniformly from a committee of distinct alternatives in the normalized social choice framework \cite{DBLP:conf/cia/ProcacciaR06}.

Similar to \cite{DBLP:conf/nips/EbadianFL023}, our work focuses on voting rules with limited randomness; unlike their normalized distortion setting, we adopt the framework of metric distortion \cite{DBLP:journals/ai/AnshelevichBEPS18,DBLP:journals/jair/AnshelevichP17}. In the model, candidates and voters are assumed to reside in a common metric space; the voting rule, having access only to the ordinal preferences of the voters, aims to minimize the social cost---the sum of distances from all voters to the winning candidate. A voting rule is said to have a metric distortion of at most $c \in \mathbb{R}^+$ if, for every metric space, the expected social cost of the winner is at most $c$ times the optimal social cost. It has been established that deterministic voting rules can achieve a distortion of $3$ \cite{DBLP:conf/focs/GkatzelisHS20,DBLP:conf/ijcai/KizilkayaK22,DBLP:conf/sigecom/Kizilkaya023} but not better \cite{DBLP:journals/ai/AnshelevichBEPS18}; on the other hand, randomized rules can achieve a distortion of $3 - \varepsilon$ for an absolute constant $\varepsilon > 0$ \cite{DBLP:journals/jacm/CharikarRWW24,DBLP:conf/sigecom/CharikarRTW25}.

In this sense, $3$ is the critical threshold that separates randomized voting rules from deterministic ones. We ask whether we can break this threshold using voting rules with only ``bounded randomness,'' and specifically, we pose the following question:
\begin{quote}
Are there absolute constants $\veps > 0$ and $k \in \mathbb{N}$ such that there exists a voting rule with metric distortion at most $3 - \varepsilon$ that uniformly randomizes among at most $k$ options?
\end{quote}

\subsection{Our Results}

We answer the question above in the affirmative. En route, we establish several structural results regarding the distortion and approximation of Maximal Lotteries and their generalization, Stable Lotteries \cite{DBLP:journals/teco/ChengJMW20,DBLP:conf/sigecom/CharikarRTW25}.

We adopt the analytical framework of biased metrics \cite{DBLP:conf/soda/CharikarR22,DBLP:journals/jacm/CharikarRWW24}. This framework characterizes the ``hardest'' metric spaces: as shown in the work of \cite{DBLP:conf/soda/CharikarR22,DBLP:journals/jacm/CharikarRWW24}, if a voting rule achieves a distortion of at most $c$ for all biased metrics, then it also achieves a distortion of at most $c$ for all metric spaces. Consequently, restricting our attention to these metrics incurs no loss of generality. Furthermore, as observed by \cite{DBLP:journals/jacm/CharikarRWW24}, the notion of $(\alpha, \beta)$-consistency (\cref{def:consistent}) is central to the analysis; indeed, all known proofs of voting rules achieving a distortion constant less than $3$ rely on this framework, which we will follow as well.

We now describe the key ingredients of our proof, each of which may be of independent interest in the study of these social choice rules.

\paragraph{Distortion robustness of Maximal Lotteries (\cref{sec:ml_support}).}
Maximal Lotteries have particularly appealing performance in terms of metric distortion: their expected metric distortion is at most $3$, and it is strictly better than $3$ when the biased metric is not sufficiently consistent \cite{DBLP:journals/jacm/CharikarRWW24}. However, our analysis requires a robust guarantee that holds beyond expectation. What we show for this purpose is that selecting \emph{any} candidate in the support of Maximal Lotteries yields a metric distortion of at most $4 + \sqrt{17} < 8.124$. A direct implication of this result is that the rules proposed by \cite{DBLP:journals/jacm/CharikarRWW24}---which currently hold the best-known distortion guarantees---are also robust: they maintain bounded distortion regardless of the realization of the random choice.

\paragraph{Distortion of Approximate Maximal Lotteries (\cref{sec:ml_distortion}).}
Since we impose the constraint that our voting rule must only randomize over a constant number of options, Maximal Lotteries, which can have a linear-sized support in general, cannot be directly applied as part of our voting rule. We take inspiration from a recent work by Charikar, Ramakrishnan, and Wang \cite{DBLP:conf/soda/CharikarRW26}, who proved that the distribution formed by a constant number of samples from Maximal Lotteries has a positive probability of being a mixed strategy in an approximate Nash equilibrium. In our work, we extend this message to the metric distortion setting and prove the following result: the distribution formed by a constant number (which depends on a parameter $\varepsilon$) of samples from Maximal Lotteries has a positive probability to guarantee a metric distortion of at most $3 + \varepsilon$, for any prespecified constant $\varepsilon > 0$. It might be a surprising fact that this guarantee does not depend on the number of voters and candidates.

Although our result conveys a similar message to that of \cite{DBLP:conf/soda/CharikarRW26}, the proof of our result is not merely a simple reduction from theirs. Instead, we have to delve into the biased metric framework to carefully bound several components of the approximation loss.

\paragraph{Approximate Stable Lotteries (\cref{sec:stable_lotteries}).}
When the underlying biased metric is sufficiently consistent, Stable Lotteries---a generalization of Maximal Lotteries---offer strong distortion guarantees \cite{DBLP:conf/sigecom/CharikarRTW25}. Again, given our constraints on the form of randomness, we must prove that an approximate version of it can have similar guarantees. We demonstrate here that---analogous to the result of \cite{DBLP:conf/soda/CharikarRW26} for Maximal Lotteries---a uniform distribution over a constant number of samples from Stable Lotteries can, with a positive probability, ``uniformly'' (i.e., without dependence on the number of voters and candidates) approximate the guarantees of exact Stable Lotteries.

\paragraph{Putting things together: breaking the distortion barrier of $3$ with bounded randomness.}
With these components, we answer our main question affirmatively: a uniform distribution over a constant number of options can achieve a metric distortion constant strictly less than $3$.

\begin{theorem*}[Main Theorem, Informal]
    There exists a randomized voting rule that, with appropriately chosen constant probabilities, runs either
    \begin{itemize}
        \item an approximate Maximal Lottery;
        \item an approximate Stable Lottery on a carefully selected subset of candidates.
    \end{itemize}
    This rule achieves metric distortion strictly less than $3$.
\end{theorem*}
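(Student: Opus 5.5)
The plan follows the biased-metric framework of \cite{DBLP:conf/soda/CharikarR22,DBLP:journals/jacm/CharikarRWW24}. It suffices to bound the expected social cost of the rule on biased metrics, since a bound of $c$ there transfers to all metrics. The rule is a mixture with constant weights $p$ and $1-p$, and each branch outputs a uniform draw from a constant-size list. The combined rule therefore still draws uniformly from a constant-size list, as long as $p$ is rational with bounded denominator: we replicate entries of each list appropriately. The argument splits on the $(\alpha,\beta)$-consistency of the biased metric (\cref{def:consistent}), with thresholds to be tuned at the end.

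\emph{Case 1: the metric is far from consistent.} Here we use the result of \cite{DBLP:journals/jacm/CharikarRWW24} that exact Maximal Lotteries then have distortion at most $3-\delta$ for some $\delta>0$ depending on the consistency parameters. Our sampling result (\cref{sec:ml_distortion}) gives more: for every $\varepsilon'$ there is a constant-size multiset of candidates whose uniform distribution loses only an additive $\varepsilon'$ relative to the quantity controlling the ML distortion bound. Because a positive-probability event exists, the rule can deterministically pick such a list, for example by enumeration. I would make sure the approximation is stated for the same potential/inequality used in the inconsistent-case bound, so that it yields $3-\delta+\varepsilon'$ and not merely $3+\varepsilon'$. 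Choosing $\varepsilon'<\delta/2$ gives distortion at most $3-\delta/2$ in this case. Regardless of the case, this branch always costs at most $3+\varepsilon'$.

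\emph{Case 2: the metric is highly consistent.} We invoke the Stable Lottery guarantee of \cite{DBLP:conf/sigecom/CharikarRTW25} on the carefully selected candidate subset (presumably candidates identified from the approximate ML list, or those not dominated). In consistent metrics that guarantee gives distortion bounded by some $c_2<3$. The approximate version from \cref{sec:stable_lotteries} preserves this up to an additive $\varepsilon'$. In all other metrics we still need the SL branch to be bounded by a constant $C$. For this I would use the robustness result of \cref{sec:ml_support}: every candidate in the support of ML has distortion at most $4+\sqrt{17}$. Restricting the subset to such candidates therefore caps the SL branch at $C\le 4+\sqrt{17}$ uniformly.

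\emph{Combining the cases.} In Case 1 the mixture costs at most $p(3-\delta/2)+(1-p)C$. In Case 2 it costs at most $p(3+\varepsilon')+(1-p)(c_2+\varepsilon')$. Both are below $3$ when $1-p$ is small enough relative to $\delta/(C-3)$, and when $\varepsilon'$ is small relative to $(1-p)(3-c_2)$. The main obstacle is that $\delta$ and $c_2$ depend on where the consistency threshold sits, with $\delta\to0$ as the threshold approaches full consistency. The threshold must be chosen so that $\delta$ stays bounded away from $0$ and, simultaneously, the Stable Lottery guarantee on the subset still beats $3$ for all metrics past the threshold. Verifying this requires the quantitative forms of both prior results, and I would first try to match the parameters of \cite{DBLP:conf/sigecom/CharikarRTW25} directly. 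Once that is settled, $\varepsilon'$ and then the sample sizes $k$ are fixed as constants.
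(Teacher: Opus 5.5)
Your high-level architecture matches the paper's: mix an approximate Maximal Lottery with an approximate Stable Lottery on a pruned set, split on strong consistency, and give the Stable Lottery branch small weight. But that branch has a genuine gap. Case 2 hinges on the subset, which you leave unspecified, and $c_2<3$ cannot be imported for an arbitrary subset. In the paper the subset is a quasi-kernel $S$ of the graph with an edge $a\to b$ whenever $s_{a\succ b}\ge\theta$, where $\theta=\frac12+\tilde\beta$ is chosen equal to the consistency margin. Two properties of $S$ do all the work.
\begin{itemize}
\item Independence makes the remaining profile $\theta$-regular. This gives the RepApx Stable $k$-Lottery distortion $1+2\lambda(\theta,k,\veps)$ relative to the best candidate of $S$ (\cref{lem:sl_quasi_kernel}). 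That factor is below $3$ only because $\theta$ is close to $\frac12$ and $k$ is large.
\item The two-hop covering property gives some $j\in S$ joined to $i^*$ by at most two edges of weight at least $\theta$. Strong consistency turns this into $x_j\le 2\alpha R$, i.e.\ $\SC(j)\le(1+4\alpha)\SC(i^*)$ (\cref{lem:quasi-kernel-optimal}).
\end{itemize}
The consistent-case bound is the product $(1+2\lambda)(1+4\alpha)<3$.

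Your device for capping this branch on every metric destroys both properties. You restrict the subset to $\Supp(D_\ML)$ so that \cref{thm:const_dist_supp_ML} applies. But the ML support need not be $\theta$-regular for $\theta$ near $\frac12$: three candidates with $s_{a\succ b}=s_{b\succ c}=s_{c\succ a}=\frac23$ have full ML support. Nor is it forced to contain a candidate with a short path of weight-$\theta$ edges to $i^*$. The relations the ML structure does provide never guarantee a single edge of weight $\frac12+\tilde\beta$, which is what strong consistency needs. For example, $s_{D_\ML\succ i^*}\ge\frac12$ is only an average, and the two-step path of \cref{lem:supp_ML_two_steps} has weights $\theta'$ and $\frac12-\theta'$. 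So as written you cannot have both the cap $C$ and $c_2<3$.

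The paper caps the branch differently. By \cref{lem:two-hop-var}, the same quasi-kernel contains a candidate of cost at most $(\frac4\theta-3)\SC(i^*)$. Multiplying by $1+2\lambda$ gives a bound of at most $15$ for its parameters.

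The support-robustness result belongs in the ML branch instead. There it yields $\tau\le 5R$ and $\int_\tau^\infty\ell(D,t)\diff t\le 4\veps R$ (\cref{fac:tau,lem:repapx_ML_small_case}). Three ingredients together justify the additive-loss claim you treated as a black box in Case 1 (\cref{lem:approx_ML_inconsistent}):
\begin{itemize}
\item this tail control;
\item the pointwise bound $\ell(D,t)\le\min\{\frac12,r(t)\}+2\veps$ for $t<\tau$ (\cref{lem:repapx_ML_large_case});
\item the $\alpha R\times\tilde\beta$ gap rectangle coming from $r(\alpha R)\ge\frac12+\tilde\beta$.
\end{itemize}
With these bounds, the threshold you flagged as the main obstacle is not a real one. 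Any fixed $\alpha,\tilde\beta>0$ with $(1+2\lambda)(1+4\alpha)<3$ gives a gain of $2\alpha\tilde\beta$ in the inconsistent case. One then takes $1-\mu$ small relative to $\alpha\tilde\beta$, and $\veps_1$ small relative to $1-\mu$.
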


To obtain a deterministic polynomial-time algorithm to find such a distribution, we can enumerate all multisets (from small to large) of the candidates and stop if the induced distribution achieves a metric distortion of less than $3 - \varepsilon$. Its running time is guaranteed to be polynomial since, according to our main result, there exists a multiset of constant size with an induced metric distortion of less than $3 - \varepsilon$.

Our result has a notable implication for the committee selection setting described by \cite{DBLP:journals/corr/abs-2507-17063}, where the goal is to select $k$ winning candidates to minimize the social cost (defined as the double sum of distances across all voter--winner pairs). It is known that a deterministic committee selection rule can achieve a metric distortion of $3$ \cite{DBLP:journals/corr/abs-2507-17063}, and our result implies that if each candidate is allowed to occupy multiple seats, then for any sufficiently large $k$, there exists a \emph{deterministic} committee selection rule with a distortion constant strictly less than $3$. This connection hints at an underlying connection between committee selection and our model of limited randomness.

\subsection{Further Related Work}

The metric distortion framework has inspired a rich body of work over the last decade. We refer the reader to \cite{DBLP:conf/ijcai/AnshelevichF0V21} for a detailed survey.

The model was first proposed by \cite{DBLP:conf/aaai/AnshelevichBP15,DBLP:journals/ai/AnshelevichBEPS18}, who evaluated the metric distortion of several classical deterministic voting rules. Their analysis left the optimal distortion achievable by a deterministic voting rule as an intriguing open problem, with an upper bound of $5$ proven for the Copeland rule, and a lower bound of $3$ conjectured to be optimal. The upper bound was later improved to $2 + \sqrt{5}$ by \cite{DBLP:conf/ec/MunagalaW19} and finally to $3$ by \cite{DBLP:conf/focs/GkatzelisHS20}, both of which introduced completely novel voting rules in the process. \cite{DBLP:conf/ijcai/KizilkayaK22,DBLP:conf/sigecom/Kizilkaya023} later proposed simpler novel voting rules achieving the optimal distortion of $3$.

In parallel, \cite{DBLP:conf/sigecom/FeldmanFG16,DBLP:journals/jair/AnshelevichP17} showed that the simple Random Dictatorship rule achieves distortion $3$, matching the best possible for deterministic rules. Despite a lower bound of $2$ that was conjectured to be optimal by \cite{DBLP:conf/sigecom/GoelKM17}, improvements were elusive. This conjecture was eventually disproven by \cite{DBLP:conf/soda/CharikarR22}, who proved a lower bound of $2.112$, and \cite{pulyassary2021randomized}, who independently provided a lower bound of $2.063$. Recently, \cite{DBLP:journals/jacm/CharikarRWW24} gave the first constant improvement to the upper bound, introducing a new voting rule with distortion of at most $2.753$. Closing the gap between $2.112$ and $2.753$ remains a central open problem.

Besides the metric distortion framework's usefulness in evaluating existing voting rules and motivating new ones, it has also been fruitful for understanding the power of voting rules in alternative preference models, both under information constraints and when additional information is available. In the constrained setting, \cite{DBLP:conf/sigecom/GoelKM17,FL25,DBLP:conf/sigecom/CharikarRTW25} study kinds of tournament rules that solely use the aggregate comparisons between pairs of candidates. \cite{DBLP:conf/aaai/Kempe20b} considered voting rules with limited communication (also a motivating factor in \cite{DBLP:conf/ijcai/KizilkayaK22}), while \cite{DBLP:conf/aaai/GrossAX17} study voting rules where voters only rank their top few candidates. Considering additional information,
\cite{DBLP:conf/sigecom/BergerFGT24} study metric distortion in the learning-augmented setting, and \cite{DBLP:conf/stoc/GoelGM25,DBLP:journals/corr/abs-2511-00986} show that deliberations between voters can improve distortion guarantees.

\section{Preliminaries and Notation}
\subsection{Social Choice and Voting}
We consider a standard voting setting. Let $V$ be a set of $n$ voters, and $C$ be a set of $m$ candidates. A \emph{preference profile} over $V$, denoted as $\succ_V$, is a function that maps each voter $v$ to a strict total order $\succ_v$ of the candidates $C$. That is, we write $i \succ_v j$ if voter $v$ prefers candidate $i$ to candidate $j$. An \emph{election instance} is then given by a tuple $\mathcal{E} = (V, C, \succ_V)$. A \emph{deterministic voting rule} maps an election instance $\mathcal{E}$ to a single candidate in $C$, while a \emph{randomized voting rule} maps $\mathcal{E}$ to a probability distribution over $C$.

Let $\mathcal{P}(v)$ be a predicate on voters. We define $S_\mathcal{P}$ to be the set of voters that satisfy $\mathcal{P}$ and let $s_\mathcal{P} \defeq \frac1n \abs{S_\mathcal{P}}$. For example, for candidate $i, j \in C$, we write $s_{i \succ j}$ to denote the fraction of voters who prefer candidate $i$ to candidate $j$. To simplify subsequent definitions and proofs, we stipulate that $s_{j \succ j} = \frac12$ for each candidate $j \in C$.

We sometimes interpret the quantities $s_{i \succ j}$ for candidates $i, j \in C$ through the lens of a weighted tournament graph. Specifically, this is a directed graph whose vertex set is $C$, and which contains, for every ordered pair of distinct candidates $i, j \in C$, an edge from $i$ to $j$ with weight $s_{i \succ j}$. 

For a set of candidates $I \subseteq C$ and a candidate $j \in C$, we let $s_{I \succ j}$ denote the fraction of voters who prefer every candidate in $I$ to $j$ (and vice versa for $s_{j \succ I}$). If $j \in I$, then $S_{I \succ j} = \varnothing$ and $s_{I \succ j} = 0$. 

We also allow $\mathcal{P}(v)$ to be a randomized indicator. For instance, given a candidate $i$ and a distribution $D$ over the candidates, we define
\[
s_{i \succ D} \defeq \frac1n\sum_{v \in V}\E{j \sim D}{\mathds{1}\left[i \succ_v j\right]} = \frac1n\sum_{v \in V}\Pr{j \sim D}{i \succ_v j}.
\]
More generally, for two distributions $D$ and $D^\prime$ over the candidates, we define
\[
s_{D \succ D^\prime} \defeq \frac1n\sum_{v \in V}\E{i \sim D, j \sim D^\prime}{\mathds{1}\left[i \succ_v j\right]} = \frac1n\sum_{v \in V}\Pr{i \sim D, j \sim D^\prime}{i \succ_v j}.
\]

The following simple fact relates the quantities $s_{i \succ j}, s_{I \succ j}$, and $s_{D \succ j}$.
\begin{fact}\label{fac:set2dis}
  For any candidate set $I \subsetneq C$ and any distribution $D$ supported on $I$, if $j \not\in I$, then
  \[
  s_{I \succ j} \leq \min_{i \in I}s_{i \succ j} \leq s_{D \succ j} \leq \max_{i \in I}s_{i \succ j}  ~~~ \text{and} ~~~ s_{j \succ I} \leq \min_{i \in I}s_{j \succ i} \leq s_{j \succ D} \leq \max_{i \in I}s_{j \succ i}.
  \]
\end{fact}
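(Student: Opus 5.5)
The plan is to verify the three inequalities in each chain one at a time, working voter by voter. Write $s_{D \succ j}$ as an average over $i \sim D$. For each fixed voter $v$, the quantity $\Pr{i \sim D}{i \succ_v j}$ equals $\sum_{i \in I} D(i)\,\mathds{1}[i \succ_v j]$. Averaging over voters and swapping the two finite sums gives the identity
\[
s_{D \succ j} = \sum_{i \in I} D(i)\, s_{i \succ j}.
\]
This holds because $j \notin I$, so the convention $s_{j\succ j}=\tfrac12$ never enters. Thus $s_{D\succ j}$ is a convex combination of the numbers $s_{i\succ j}$ for $i \in I$. Any convex combination lies between the minimum and the maximum of the terms, which gives the middle two inequalities of the first chain.

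For the leftmost inequality, fix any $i \in I$. Every voter in $S_{I\succ j}$ ranks all of $I$ above $j$, so in particular ranks $i$ above $j$. Hence $S_{I\succ j}\subseteq S_{i\succ j}$, and so $s_{I\succ j}\le s_{i\succ j}$. Taking the minimum over $i \in I$ gives $s_{I\succ j}\le \min_{i\in I}s_{i\succ j}$. Here $I$ is nonempty because $D$ is supported on it.

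The second chain follows by the symmetric argument. We have $s_{j\succ D}=\sum_{i\in I} D(i)\,s_{j\succ i}$, again using $j\notin I$. We also have $S_{j\succ I}\subseteq S_{j\succ i}$ for each $i \in I$.

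There is no real obstacle in this argument. The only point needing care is that $j\notin I$, which ensures every term is a genuine pairwise comparison rather than the artificial self-comparison value $\tfrac12$. The hypothesis $I\subsetneq C$ is simply what makes a choice of $j\notin I$ possible.
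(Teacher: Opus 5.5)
Your proposal is correct and follows the paper's proof. The leftmost inequality comes from $S_{I\succ j}=\bigcap_{i\in I}S_{i\succ j}\subseteq S_{i\succ j}$, the middle ones come from writing $s_{D\succ j}$ as the $D$-weighted average of the $s_{i\succ j}$, and the second chain is handled symmetrically, with your remarks on $j\notin I$ and the nonemptiness of $I$ being accurate details the paper leaves implicit.
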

\begin{proof}[Proof of \cref{fac:set2dis}]
  By definition \[S_{I \succ j} = \set{v \in V:I \succ_v j} = \bigcap_{i \in I} \set{v \in V: i \succ_v j},\] Hence $S_{I \succ j} \subseteq S_{i \succ j}$ for all candidate $i \in I$, which implies $s_{I \succ j} \leq \min_{i \in I}s_{i \succ j}$. Since $s_{D \succ j}$ is the expectation of $s_{i \succ j}$ over $i \sim D$, it must lie between $\min_{i \in I}s_{i \succ j}$ and $\max_{i \in I}s_{i \succ j}$.

  The second chain of inequalities follows by an analogous argument, exchanging the positions of $i$ and $j$.
\end{proof}

Throughout this work we write $D(I)$ to denote the distribution $D$ conditioned on the selected candidate lying in $I$.

We use the term \emph{strategy} to refer to either a deterministically selected candidate $j$ or a probability distribution $D$ over the candidates $C$. The following lemma establishes a triangle inequality over strategies.

\begin{lemma}[see, e.g.\@, {\cite[Claim 2]{DBLP:journals/jacm/CharikarRWW24}}]\label{lem:triangle}
  For any three strategies $X, Y$, and $Z$, we have \[s_{X \succ Y} \leq s_{X \succ Z} + s_{Z \succ Y}.\]
\end{lemma}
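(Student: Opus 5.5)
The plan is to reduce everything to a pointwise statement about a single voter and then average. By definition, for strategies $X,Y$ (a deterministic candidate being viewed as a point-mass distribution),
\[
s_{X \succ Y} = \frac1n\sum_{v \in V}\Pr{i \sim X,\, j \sim Y}{i \succ_v j},
\]
where $X$ and $Y$ are sampled independently. So it suffices to prove, for each fixed voter $v$, that
\[
\Pr{x\sim X, y \sim Y}{x \succ_v y} \le \Pr{x\sim X, z \sim Z}{x \succ_v z} + \Pr{z\sim Z, y \sim Y}{z \succ_v y}.
\]
Summing over $v$ and dividing by $n$ then gives the claim.

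To prove the per-voter inequality, I would couple the three strategies: draw $x \sim X$, $y \sim Y$ and $z \sim Z$ independently. Each pair is then distributed as the corresponding product distribution, so the three probabilities above are the probabilities of the events $\{x \succ_v y\}$, $\{x \succ_v z\}$ and $\{z \succ_v y\}$ under one common probability space. It therefore suffices to show the event inclusion
\[
\{x \succ_v y\} \subseteq \{x \succ_v z\} \cup \{z \succ_v y\},
\]
and apply a union bound.

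The inclusion follows because $\succ_v$ is a strict total order. Suppose $x \succ_v y$. If $z \succ_v y$, we are done. Otherwise $z = y$ or $y \succ_v z$. In the first case $x \succ_v z$ directly; in the second, transitivity gives $x \succ_v y \succ_v z$, so again $x \succ_v z$.

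The main care point is the convention $s_{j\succ j}=\frac12$, which makes ties contribute $\frac12$ rather than $0$. So I would really prove the inequality for the indicator $\mathbf{1}[a\succ_v b]$ with ties counted as $\frac12$, i.e.\ check $f(x,y)\le f(x,z)+f(z,y)$ with $f\in\{0,\frac12,1\}$ case by case:
\begin{itemize}
\item If $x=y$, the left side is $\frac12$. Either $z=x$, giving right side $1$, or exactly one of $x\succ_v z$ and $z\succ_v x$ holds, giving right side $\geq 1$.
\item If $x\neq y$ and $z\in\{x,y\}$, the right side equals the left side plus $\frac12$.
\item Otherwise the inclusion argument above applies.
\end{itemize}
Taking expectations then yields the lemma.
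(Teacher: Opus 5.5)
Your proof is correct and matches the standard argument behind the cited claim (the paper itself gives no proof, only a citation to Charikar et al.): a per-voter union bound over independent draws, averaged over voters. Your explicit case check of the $s_{j\succ j}=\frac12$ tie convention is worthwhile, since the paper's definition via a strict indicator is in slight tension with that stipulation, and later uses such as $s_{D(I)\succ D(I)}=\frac12$ rely on ties counting as $\frac12$; your argument covers both readings.
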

\subsection{Metric Distortion}
In this work, we adopt the standard \emph{metric distortion} framework. In this setting, the set of voters $V$ and the set of candidates $C$ are embedded in a common pseudometric space $(V \sqcup C, d)$.

\begin{definition}[Pseudometric Space] A \emph{pseudometric space} is an ordered pair $(M, d)$ where $M$ is a set and the function $d \colon M \times M \rightarrow \R_{\geq 0}$ is a \emph{pseudometric} on $M$. They satisfy the following axioms for all $x, y, z \in M$:
  \begin{enumerate}
    \item Identity: $d(x, x) = 0$.
    \item Symmetry: $d(x, y) = d(y, x)$.
    \item Triangle inequality: $d(x, z) \leq d(x, y) + d(y, z)$.
  \end{enumerate}
\end{definition}

Given a pseudometric $d$, the \emph{social cost} of selecting a candidate $i$ is defined as \[\SC(i) \defeq \frac1n \sum_{v \in V}d(i, v).\] We denote by $i^*$ a candidate of minimum social cost, i.e.\@, \[i^* \in \argmin_{i \in C}\SC(i).\]

We now define \emph{metric distortion}.
\begin{definition}[Metric Distortion]
Given a (possibly randomized) voting rule $f$, the \emph{metric distortion} of $f$ is the smallest ratio $\alpha \geq 1$ such that for all pseudometric spaces, the (expected) social cost of the candidate selected by $f$ is at most $\alpha$ times the social cost of selecting any candidate.
\end{definition}

It is worth noting that the voting rule $f$ does not have access to the underlying pseudometric $d$. Instead, it observes only the preference profile $\succ_V$ which is assumed to be consistent with $d$. That is, for any candidate $i, j \in C$ and any voter $v \in V$, we have $i \succ_v j$ holds only if $d(i, v) \leq d(j, v)$.

\subsection{Biased Metrics}
\emph{Biased metrics} were introduced in \cite{DBLP:conf/soda/CharikarR22} and have since been shown to be a powerful tool for analyzing metric distortion \cite{DBLP:conf/soda/CharikarR22,DBLP:journals/jacm/CharikarRWW24,DBLP:conf/sigecom/CharikarRTW25}. Our work also builds on this technique. We review this framework here for completeness.

\begin{definition}[Biased Metrics] Let $(x_1, \dots, x_m) \in \R^m_{\geq 0}$ be a nonnegative vector such that $x_{i^*} = 0$ for some $i^* \in [m]$. The \emph{biased metric} induced by $(x_1, \dots, x_m)$ is defined as follows. For every voter $v$ and every candidate $j$, let
  \begin{align*}
  d(i^*, v) &\defeq \frac12\max_{i, j: i\succeq_v j}(x_i - x_j),\\
  d(j, v) - d(i^*, v) &\defeq \min_{k:j\succeq_v k}x_k.
  \end{align*}
\end{definition}

Note that under a biased metric, we always have $d(j, v) \geq d(i^*, v)$ for all voters $v$ and candidate $j$. Intuitively, a biased metric assumes $d(j, i^*) = x_j$ for each candidate $j \in C$, and places voters as close as possible to candidate $i^*$, while pushing voters as far as possible from candidates other than $i^*$, subject to consistency.

The work of \cite{DBLP:conf/soda/CharikarR22} shows that it suffices to consider biased metrics when computing metric distortion. Specifically, for any (possibly randomized) voting rule $f$ and any metric $d^\prime$ that satisfies $d^\prime(j, i^*) = x_j$ for every candidate $j \in C$, the social cost of $f$ under the biased metric $d$ induced by $(x_1, \dots, x_m)$ is at least its social cost under $d^\prime$.

We now show how to derive metric distortion bounds using these biased metrics. Fix a vector $(x_1, \dots, x_m) \in \R^m_{\geq 0}$ with $x_{i^*} = 0$, and let $d$ denote the biased metric induced by $(x_1, \dots, x_m)$. For $t\geq 0$, define $I_t \defeq \set{k \in C:x_k \leq t}$. We begin with two basic facts.

\begin{fact}\label{fac:d_to_s_1} For any $t \geq 0$, we have
  \[\Pr{v \sim V}{d(j, v) - d(i^*, v) > t} = s_{I_t \succ j}.\]
\end{fact}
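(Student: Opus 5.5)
By the definition of the biased metric, for every voter $v$ and candidate $j$ we have $d(j,v)-d(i^*,v)=\min_{k:\, j\succeq_v k} x_k$. This minimum ranges over $j$ itself and every candidate that $v$ ranks below $j$. The plan is to prove, voter by voter, the equivalence
\[
d(j,v)-d(i^*,v) > t \iff I_t \succ_v j.
\]
Averaging the indicator of this event over $v \sim V$ then gives exactly $s_{I_t \succ j}$ by the definition of $s_{\mathcal{P}}$.

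For the forward direction, suppose $\min_{k:\, j\succeq_v k} x_k > t$. Then every candidate $k$ with $j \succeq_v k$ has $x_k > t$, so $k \notin I_t$. Since $\succ_v$ is a strict total order, every $k \in I_t$ therefore satisfies $k \succ_v j$, which is the statement $I_t \succ_v j$. Note that $I_t$ is nonempty because $x_{i^*}=0\le t$, so the predicate is not vacuous.

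For the converse, suppose $I_t \succ_v j$, meaning every candidate of $I_t$ is strictly preferred to $j$. Then no $k$ with $j \succeq_v k$ lies in $I_t$, since such a $k$ is either $j$ itself or ranked below $j$. Every term in the minimum therefore exceeds $t$. Because the set of candidates is finite, the minimum is attained and so is itself $>t$.

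The only subtle point is the boundary case $j \in I_t$. Here $s_{I_t\succ j}=0$ by the stated convention, and on the other side $k=j$ appears in the minimum, giving $d(j,v)-d(i^*,v)\le x_j\le t$. Both sides vanish, consistent with the equivalence. There is no real obstacle beyond this bookkeeping with the weak order $\succeq_v$, which includes $k=j$.
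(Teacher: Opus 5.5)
Your proposal is correct and follows the paper's proof, which establishes the same per-voter equivalence $\min_{k:\, j\succeq_v k} x_k > t \iff I_t \succ_v j$ through a single chain of equivalences and then averages over $v \sim V$. You spell out the two directions and the boundary case $j \in I_t$ separately, whereas the paper leaves these implicit.
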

\begin{fact}\label{fac:d_to_s_2} For any $t \geq 0$, we have
  \[\Pr{v \sim V}{2d(i^*, v) \leq t} = s_{\forall i \succ j, x_i - x_j \leq t}.\]
\end{fact}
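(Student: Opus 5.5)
The plan is to unfold the definition of the biased metric and note that the event $2d(i^*, v) \le t$ is, voter by voter, exactly the predicate defining $S_{\forall i \succ j,\, x_i - x_j \le t}$. No earlier lemma should be needed beyond the definitions; this is a pointwise identity of events, followed by averaging over the uniformly random voter $v \sim V$.

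First, fix a voter $v$. By the definition of the biased metric,
\[
2d(i^*, v) = \max_{i, j:\, i \succeq_v j} (x_i - x_j).
\]
A maximum over a finite nonempty set is at most $t$ if and only if every term is at most $t$. So $2d(i^*, v) \le t$ holds if and only if $x_i - x_j \le t$ for every ordered pair $(i, j)$ with $i \succeq_v j$. The index set is nonempty because it always contains the diagonal pairs.

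Second, I will drop the diagonal pairs $i = j$ from this condition. Each such pair contributes the term $x_i - x_i = 0$, and $0 \le t$ holds automatically because $t \ge 0$. Hence the condition becomes: for all $i, j$ with $i \succ_v j$, we have $x_i - x_j \le t$. This is precisely the predicate $\mathcal{P}(v)$ whose satisfying set is $S_{\forall i \succ j,\, x_i - x_j \le t}$.

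Finally, since the two events coincide for every $v$, drawing $v$ uniformly from $V$ gives
\[
\Pr{v \sim V}{2d(i^*, v) \le t} = \frac{1}{n}\abs{S_{\forall i \succ j,\, x_i - x_j \le t}} = s_{\forall i \succ j,\, x_i - x_j \le t}.
\]
I do not expect a real obstacle. The only points needing care are the reflexive pairs in $\succeq_v$, which the hypothesis $t \ge 0$ handles, and keeping the non-strict inequality $\le t$ consistent on both sides.
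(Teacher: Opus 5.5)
Your proposal is correct and follows essentially the same argument as the paper. Both proofs unfold $2d(i^*,v)=\max_{i\succeq_v j}(x_i-x_j)$, use that a maximum is at most $t$ exactly when every term is, and discard the diagonal pairs $i=j$ because $0\le t$. Your explicit final averaging over $v\sim V$ is the step the paper leaves implicit.
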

\begin{proof}[Proof of \cref{fac:d_to_s_1}]
  By definition, for all voter $v$ we have
  \[
  d(j, v) - d(i^*, v) > t \iff \min_{k: j\succeq_v k}x_k > t \iff x_k > t \text{ if } j \succeq_v k \iff k \succ_v j \text{ if } x_k \leq t.
  \] Therefore, voter $v$ prefers every candidate in $I_t$ to $j$, or equivalently, we have $I_t \succ_v j$.
\end{proof}
\begin{proof}[Proof of \cref{fac:d_to_s_2}] By definition, for all voter $v$ we have
  \[
  2d(i^*, v) \leq t \iff \max_{i, j:i \succeq_v j}(x_i - x_j) \leq t \iff x_i - x_j \leq t \text{ if } i \succeq_v j.
  \] Since $x_i - x_j = 0 \leq t$ always holds when $i = j$, the condition is equivalent to $\forall i \succ_v j, x_i - x_j \leq t$.
\end{proof}

The metric distortion of a voting rule $f$ can be characterized by the following theorem.

\begin{theorem}[\cite{DBLP:journals/jacm/CharikarRWW24}]\label{thm:biased_metric} Let $f$ be a voting rule that selects candidate $j$ with probability $p_j$ and let $D$ denote the induced distribution over candidates. The metric distortion of $f$ is at most $1 + 2\lambda$, if and only if, for every preference profile and every vector $(x_1, \dots, x_m) \in \R_{\geq 0}^m$ that satisfies $x_{i^*}=0$, the following condition holds:
  \[
  \int_0^\infty\ell(D, t) \diff t\leq \lambda \int_0^\infty r(t) \diff t,
  \] where
  \[
  \ell(D, t) = \sum_{j \notin I_t}s_{I_t\succ j}p_j \quad \text{ and } \quad r(t) = 1 - s_{\forall i \succ j, x_i - x_j \leq t}.
  \]
\end{theorem}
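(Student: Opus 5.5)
The plan is to express both the expected social cost of $f$ and the optimal social cost $\SC(i^*)$ as integrals over $t$, using \cref{fac:d_to_s_1} and \cref{fac:d_to_s_2}, and then reduce to biased metrics.

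First, by the reduction of \cite{DBLP:conf/soda/CharikarR22}, it suffices to consider biased metrics. For any metric $d'$, set $x_j = d'(j, i^*)$ and pass to the induced biased metric $d$. This can only increase the cost of $f$. For the ``only if'' direction I would also need that $\SC(i^*)$ does not increase under this passage; I will aim to show that the biased metric itself is a valid metric on which the inequality must hold, which suffices.

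Next, for a biased metric, write
\[
\E{j\sim D}{\SC(j)} = \SC(i^*) + \frac1n\sum_v \sum_j p_j\bigl(d(j,v)-d(i^*,v)\bigr).
\]
By the layer-cake formula and \cref{fac:d_to_s_1}, the second term equals $\int_0^\infty \sum_j p_j s_{I_t\succ j}\diff t$. For $j\in I_t$ we have $s_{I_t\succ j}=0$, so this is exactly $\int_0^\infty \ell(D,t)\diff t$. Similarly, by \cref{fac:d_to_s_2},
\[
2\SC(i^*) = \int_0^\infty \Pr{v}{2d(i^*,v)>t}\diff t = \int_0^\infty r(t)\diff t.
\]

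Putting these together, distortion at most $1+2\lambda$ means $\SC(i^*) + \int \ell \le (1+2\lambda)\SC(i^*)$, i.e.\ $\int \ell \le 2\lambda\,\SC(i^*) = \lambda\int r$. This gives the equivalence for each biased metric.

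The ``if'' direction then follows from the reduction, since the cost of $f$ only grows. One caveat: the reduction must also keep $\SC(i^*)$ unchanged or smaller. In the biased metric $d(i^*,v)$ is the minimal feasible value, because consistency forces $d'(i^*,v)\ge \tfrac12(x_i-x_j)$ whenever $i\succeq_v j$. So $\SC$ under $d$ is at most that under $d'$, and the ratio under $d'$ is bounded by the ratio under $d$.

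For the ``only if'' direction, every biased metric with $x_{i^*}=0$ is consistent with the profile, so the condition is necessary. The main obstacle is checking that the biased metric is indeed a pseudometric consistent with $\succ_V$ and that $i^*$ is optimal in it. This was established in the cited works, and I would invoke it there rather than reprove it.
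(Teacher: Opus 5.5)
Your computation is the same as the paper's proof. The layer-cake formula with \cref{fac:d_to_s_1} turns the excess $\sum_j p_j(\SC(j)-\SC(i^*))$ into $\int_0^\infty \ell(D,t)\diff t$, and \cref{fac:d_to_s_2} gives $2\SC(i^*)=\int_0^\infty r(t)\diff t$. Rearranging then gives the equivalence on each biased metric. Like you, the paper takes the reduction to biased metrics, and the fact that biased metrics are valid, from the cited work.

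One justification in your reduction is wrong as stated, though it is easily repaired. Write $\SC'$ for social cost under $d'$ and $\SC$ for social cost under the biased metric $d$. Passing to $d$ need \emph{not} increase the cost of $f$: $d(i^*,v)$ can be far below $d'(i^*,v)$, and $d(j,v)$ drops with it. For example, take two candidates with $x_j=1$, one voter located at $i^*$, and one voter ranking $j\succ_v i^*$ at distances $10$ and $9.5$ from $i^*$ and $j$. Then $i^*$ is optimal under $d'$, yet $\SC'(j)=5.25$ while $\SC(j)=0.75$. So ``the cost of $f$ only grows'' cannot be the reason the ratio increases.

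What is monotone is the excess. For every $k$ with $j\succeq_v k$, consistency and the triangle inequality give $d'(j,v)\le d'(k,v)\le x_k+d'(i^*,v)$. Hence
\[
d'(j,v)-d'(i^*,v)\le\min_{k:j\succeq_v k}x_k=d(j,v)-d(i^*,v).
\]
Combining this with your correct bound $d(i^*,v)\le d'(i^*,v)$ gives
\[
\sum_j p_j\bigl(\SC'(j)-\SC'(i^*)\bigr)\le\int_0^\infty\ell(D,t)\diff t\le\lambda\int_0^\infty r(t)\diff t=2\lambda\,\SC(i^*)\le 2\lambda\,\SC'(i^*),
\]
which is the ``if'' direction. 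This form also avoids dividing by a possibly zero $\SC(i^*)$. With this fix your proposal is complete, modulo the cited validity of biased metrics, exactly as in the paper.
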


We will write $L(D) = \int_0^\infty\ell(D, t) \diff t$ and $R = \int_0^\infty r(t) \diff t$ throughout the remainder of the paper. \cref{fig:biased_metric} illustrates the monotone functions $\ell(D, t)$ and $r(t)$; the area corresponding to $L(D)$ is shaded in blue, while the one corresponding to $R$ is shaded in red.

\begin{figure}[t!]
    \centering
        \begin{tikzpicture}[
    scale=0.65,
myredline/.style={color=red!80!black, line width=1pt},
    myredfill/.style={color=red!60, opacity=0.8},
    myblueline/.style={color=blue!40!teal!80!black, line width=1pt},
    mybluefill/.style={color=blue!40!teal!30, opacity=0.8},
    axis/.style={line width=1pt, black},
    dashed guide/.style={dashed, thin, black!80, dash pattern=on 5pt off 3pt},
dot/.style={circle, fill=white, draw=#1, line width=1pt, inner sep=1pt},
filldot/.style={circle, fill=#1, inner sep=1.3pt}
]

\def\yTop{7}
    \def\yB{4.2}
    \def\yHalf{3.5}
    \def\yZero{0}
    \def\xMax{9}
    \def\xTau{4.0}
    \def\xAlphaR{5.0}
    \def\xMu{5.9}

    \begin{scope}
\fill[myredfill] 
            (0,0) -- (0,6.2) -- (2.1,6.2) -- (2.1,5.5) -- (3.3,5.5) -- 
            (3.3,4.0) -- (5.9, 4.0) -- 
            (5.9,2.6) -- (7.8,2.6) -- (7.8,0) -- (\xMax,0) -- cycle;
    \end{scope}

    \fill[mybluefill] 
        (0,\yZero) -- (0,5) -- (2.3,5) -- (2.3,3.6) -- (6.3,3.6) -- (6.3,2) -- (7,2) -- (7,\yZero) -- cycle;

\draw[axis] (0, \yTop) -- (0, 0) -- (\xMax, 0) node[below, xshift=-0.1cm] {$t$};

    \draw[line width=1pt] (0, \yTop - 0.03) -- (-0.2, \yTop - 0.03) node[left] {$1$};

\coordinate (B_start) at (0, 5);
    \coordinate (B1_R) at (2.3, 5); \coordinate (B1_low) at (2.3, 3.6);
    \coordinate (B2_R) at (6.3, 3.6); \coordinate (B2_low) at (6.3, 2);
    \coordinate (B3_R) at (7, 2); \coordinate (B3_low) at (7, 0);
    \coordinate (B_end) at (\xMax, 0);

    \draw[myblueline] (B_start) -- (B1_R);
    \draw[myblueline] (B1_low) -- (B2_R);
    \draw[myblueline] (B2_low) -- (B3_R);
    \draw[myblueline] (B3_low) -- (B_end);

    \draw[myblueline, dashed] (B1_R) -- (B1_low);
    \draw[myblueline, dashed] (B2_R) -- (B2_low);
    \draw[myblueline, dashed] (B3_R) -- (B3_low);

\def\blueColor{blue!40!teal!80!black}
    
    \node[dot=\blueColor] at (B1_R) {};
    \node[filldot=\blueColor] at (B1_low) {};
    
    \node[dot=\blueColor] at (B2_R) {};
    \node[filldot=\blueColor] at (B2_low) {};
    
    \node[dot=\blueColor] at (B3_R) {};
    \node[filldot=\blueColor] at (B3_low) {};

\coordinate (R_start) at (0, 6.2);
    \coordinate (R1_R) at (2.1, 6.2); \coordinate (R1_low) at (2.1, 5.5);
    \coordinate (R2_R) at (3.3, 5.5); \coordinate (R2_low) at (3.3, 4.0);
    \coordinate (R3_R) at (5.9, 4.0); \coordinate (R3_low) at (5.9, 2.6);
    \coordinate (R4_R) at (7.8, 2.6); \coordinate (R4_low) at (7.8, 0);
    \coordinate (R_end) at (\xMax, 0);

    \draw[myredline] (0, 6.2) -- (R1_R);
    \draw[myredline] (R1_low) -- (R2_R);
    \draw[myredline] (R2_low) -- (R3_R);
    \draw[myredline] (R3_low) -- (R4_R);
    \draw[myredline] (R4_low) -- (R_end);

    \draw[myredline, dashed] (R1_R) -- (R1_low);
    \draw[myredline, dashed] (R2_R) -- (R2_low);
    \draw[myredline, dashed] (R3_R) -- (R3_low);
    \draw[myredline, dashed] (R4_R) -- (R4_low);

\def\redColor{red!80!black}

    \node[dot=\redColor] at (R1_R) {};
    \node[filldot=\redColor] at (R1_low) {};
    \node[dot=\redColor] at (R2_R) {};
    \node[filldot=\redColor] at (R2_low) {};
    \node[dot=\redColor] at (R3_R) {};
    \node[filldot=\redColor] at (R3_low) {};
    \node[dot=\redColor] at (R4_R) {};
    \node[filldot=\redColor] at (R4_low) {};

\node[blue!40!teal!80!black, anchor=west] at (4.2, 1.8) {$\ell(D, t)$};
    
\node[red!80!black, anchor=south west] at (7.9, 1.7) {$r(t)$};

\end{tikzpicture}

         \caption{$\ell(D, t)$ and $r(t)$.}
        \label{fig:biased_metric}
\end{figure}

\begin{proof}[Proof of \cref{thm:biased_metric}]
Fix a vector $(x_1, \dots, x_m)$ and let $d$ be the biased metric induced by $(x_1, \dots, x_m)$.

On the one hand, for any candidate $j$, we can compute
\begin{align*}
\SC(j) - \SC(i^*) &= \E{v \sim V}{d(j, v) - d(i^*, v)}\\
&= \int_0^\infty \Pr{v\sim V}{d(j, v) - d(i^*, v) > t}\diff t \tag{$d(j, v) - d(i^*, v) \geq 0$}\\
&= \int_0^\infty s_{I_t \succ j}\diff t. \tag{\cref{fac:d_to_s_1}}
\end{align*}
When $j$ is sampled according to the distribution $D$, it follows that
\[
\sum_{j \in C}p_j\left(\SC(j) - \SC(i^*)\right) = \int_0^\infty\sum_{j \in C}s_{I_t \succ j}p_j \diff t = \int_0^\infty\ell(D, t)\diff t.
\]

On the other hand, we have
\begin{align*}
2\SC(i^*) &= \E{v\sim V}{2d(i^*, v)}\\
&= \int_0^\infty \Pr{v\sim V}{2d(i^*, v) > t}\diff t\\
&= \int_0^\infty \bigl(1 - s_{\forall i \succ j, x_i - x_j \leq t}\bigr) \diff t. \tag{\cref{fac:d_to_s_2}}
\end{align*}
The proof is concluded by observing that
\[
\frac{\sum_{j \in C}p_j\SC(j)}{\SC(i^*)} = 1 + 2 \cdot \frac{\sum_{j \in C}p_j\left(\SC(j) - \SC(i^*)\right)}{\SC(i^*)}
\]
and applying the given condition.
\end{proof}

Sometimes, it is more convenient to work with the following sufficient condition.

\begin{corollary}[\cite{DBLP:conf/soda/CharikarR22,DBLP:journals/jacm/CharikarRWW24}]\label{cor:biased_metric}
The metric distortion of $f$ is at most $1 + 2\lambda$ if for all preference profiles, the following condition holds:
\[
\sum_{j\in J}s_{i^* \succ j}p_j \leq \lambda(1 - s_{i^* \succ J}) \qquad\text{for all }J\subseteq C \setminus \set{i^*}.
\]
\end{corollary}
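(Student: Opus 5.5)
To prove \cref{cor:biased_metric}, I will derive the integral condition of \cref{thm:biased_metric} by integrating the assumed set-wise inequality over $t$. Fix a preference profile and a vector $(x_1,\dots,x_m)$ with $x_{i^*}=0$. For each $t\geq 0$, I will apply the hypothesis with $J \defeq C\setminus I_t$. This is a valid choice because $i^*\in I_t$ (as $x_{i^*}=0\le t$), so $J\subseteq C\setminus\set{i^*}$. The goal is then the pointwise inequality $\ell(D,t)\le \lambda\, r(t)$ for every $t$; integrating it gives $L(D)\le\lambda R$, and \cref{thm:biased_metric} finishes.

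For the left-hand side: since $i^*\in I_t$, every voter in $S_{I_t\succ j}$ also lies in $S_{i^*\succ j}$. Hence $s_{I_t\succ j}\le s_{i^*\succ j}$, which is the first inequality of \cref{fac:set2dis} with the singleton $\set{i^*}\subseteq I_t$. Therefore
\[
\ell(D,t)=\sum_{j\notin I_t}s_{I_t\succ j}p_j\le \sum_{j\in J}s_{i^*\succ j}p_j\le \lambda\bigl(1-s_{i^*\succ J}\bigr).
\]

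The remaining step, and the only one with real content, is to show $1-s_{i^*\succ J}\le r(t)$, i.e.\ $s_{\forall i\succ j,\,x_i-x_j\le t}\le s_{i^*\succ J}$. I will do this by a voter-wise containment. Take a voter $v$ with $x_i-x_j\le t$ whenever $i\succ_v j$, and suppose $v$ does not prefer $i^*$ to every member of $J$. Then $v$ ranks some $j\in J$ above $i^*$, so $x_j - x_{i^*} = x_j\le t$. This means $j\in I_t$, which contradicts $j\in J=C\setminus I_t$. So every such voter satisfies $i^*\succ_v J$, and the inclusion of voter sets gives the inequality of fractions.

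Combining the two bounds yields $\ell(D,t)\le\lambda r(t)$ for all $t\ge0$. Integrating over $t$ and invoking \cref{thm:biased_metric} gives distortion at most $1+2\lambda$. I expect no real obstacle beyond getting the containment direction right. The one subtlety is the strict versus non-strict threshold: $I_t$ uses $x_k\le t$, and the condition in $r(t)$ also uses $\le t$, so the two match exactly.
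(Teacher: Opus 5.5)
Your proposal is correct and follows essentially the same route as the paper. Take $J = I_t^c$, bound $s_{I_t \succ j} \le s_{i^* \succ j}$, and show $S_{\forall i \succ j,\, x_i - x_j \le t} \subseteq S_{i^* \succ I_t^c}$ by the same voter-wise containment; this gives $\ell(D,t) \le \lambda\, r(t)$ pointwise, and integrating yields the condition of \cref{thm:biased_metric}.
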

\begin{proof}[Proof of \cref{cor:biased_metric}]
Fix a preference profile and a biased metric consistent with it. A sufficient condition for \cref{thm:biased_metric} is that $\ell(D, t) \leq \lambda \cdot r(t)$ for all $t \geq 0$. To verify this, it suffices to show
\[
s_{I_t \succ j} \leq s_{i^* \succ j} \qquad \text{ and } \qquad s_{\forall i \succ j, x_i - x_j \leq t} \leq s_{i^* \succ I_t^c}
\]
in which case setting $J = I_t^c$ yields the claim.

By \cref{fac:set2dis}, since $i^* \in I_t$, we have $s_{I_t \succ j} \leq s_{i^* \succ j}$ for all candidate $j$.

Now suppose that a voter $v$ belongs to $S_{\forall i \succ j, x_i - x_j \leq t}$. For any candidate $k \in I_t^c$, we have $x_k > t$. Hence, if $k \succ_v i^*$ were to hold, then $x_k - x_{i^*} > t$, contradicting the defining condition of the set. Therefore, voter $v$ must prefer $i^*$ to every candidate in $I_t^c$.

This shows that $S_{\forall i \succ j, x_i - x_j \leq t} \subseteq S_{i^*\succ I_t^c}$, and thus $s_{\forall i \succ j, x_i - x_j \leq t} \leq s_{i^* \succ I_t^c}$. The corollary follows.
\end{proof}

\subsection{Maximal Lotteries and Stable \texorpdfstring{$k$}{k}-Lotteries}
We describe two voting rules that play a central role in our approach.
\subsubsection{Maximal Lotteries}
Maximal Lotteries are defined via a constant-sum game between two players, Alice and Bob, known as the \emph{Condorcet game}.

\begin{definition}[Condorcet Game]
  Fix an election instance. Alice picks a distribution $D_A$ and Bob picks a distribution $D_B$ over the candidates simultaneously. Alice and Bob's payoffs are $s_{D_A \succ D_B}$ and $s_{D_B \succ D_A}$, respectively.
\end{definition}

Equivalently, consider the constant-sum game in which Alice and Bob choose candidates (as pure strategies), and Alice's payoff when candidates $a$ and $b$ are chosen is $s_{a \succ b}$. This game admits a mixed-strategy Nash equilibrium, which can be computed in polynomial time. Such a mixed equilibrium corresponds exactly to a Nash equilibrium of the Condorcet game above, since a mixed strategy over candidates is precisely a distribution over candidates.

This observation allows us to define the Maximal Lotteries voting rule.

\begin{definition}[Maximal Lotteries]
  Given a fixed election, compute any symmetric pure-strategy Nash equilibrium $D_\ML$ of the Condorcet game. A \emph{Maximal Lottery} chooses a candidate according to $D_\ML$.
\end{definition}
By a slight abuse of terminology, we also refer to the equilibrium $D_\ML$ itself as the Maximal Lottery.

The value of the Condorcet game is $1/2$, as formalized below.

\begin{theorem}[The Value of the Condorcet game; see, e.g., {\cite{brandt2017rolling}}]
\label{thm:ML_distribution}
  In any election, a Maximal Lottery $D_\ML$ satisfies that for every distribution $D$ over the candidates,
  \[
    s_{D_\ML \succ D} \geq \frac12.
  \]
\end{theorem}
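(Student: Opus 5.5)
The plan is to prove the statement by the minimax theorem applied to the Condorcet game, using the antisymmetry of its payoffs. The payoff matrix $M_{ab} \defeq s_{a \succ b}$ satisfies $M_{ab} + M_{ba} = 1$ for all candidates $a, b$. For distinct $a \ne b$ this holds because every voter ranks them strictly. For $a = b$ it holds by the convention $s_{j \succ j} = \frac12$. By linearity of $s_{D \succ D'}$ in each argument, we get $s_{D \succ D'} + s_{D' \succ D} = 1$ for all distributions $D, D'$. Hence the Condorcet game is constant-sum with total payoff $1$, and shifting by $\frac12$ turns it into a symmetric zero-sum game with skew-symmetric matrix $A_{ab} = s_{a\succ b} - \frac12$.

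First, I will show the value of this zero-sum game is $0$, i.e.\ that the value of the Condorcet game for Alice is $\frac12$. By von Neumann's minimax theorem a value $v$ exists, with
\[
v = \max_{D_A}\min_{D_B} \bigl(s_{D_A \succ D_B} - \tfrac12\bigr) = \min_{D_B}\max_{D_A}\bigl(s_{D_A \succ D_B} - \tfrac12\bigr).
\]
Choosing $D_B = D_A$ in the inner minimum gives $s_{D_A \succ D_A} - \frac12 = 0$, so $v \le 0$. By the symmetric argument, $\min_{D_B}\max_{D_A} \geq 0$ by taking $D_A = D_B$, so $v \ge 0$. Therefore $v = 0$.

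Second, I will connect the equilibrium $D_\ML$ to this value. $D_\ML$ is a symmetric Nash equilibrium: both players play $D_\ML$, and neither can gain by deviating. In particular Bob's deviation to any $D$ gives Bob payoff $s_{D \succ D_\ML} \le s_{D_\ML \succ D_\ML} = \frac12$. By the complementarity identity above, $s_{D_\ML \succ D} = 1 - s_{D \succ D_\ML} \ge \frac12$, which is the claim.

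Alternatively, one can argue directly from the first step: every equilibrium strategy of a zero-sum game is a maximin strategy guaranteeing the value, which is $\frac12$ here. The only potential subtlety is the well-definedness of the identity $s_{D\succ D'} + s_{D'\succ D} = 1$ on the diagonal, which is exactly why the convention $s_{j\succ j}=\frac12$ was imposed. There is no real obstacle; the statement's phrase ``pure-strategy'' should be read as the mixed equilibrium of the underlying candidate game.
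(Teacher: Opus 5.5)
Your proof is correct. The paper does not prove this statement itself; it quotes it as a standard fact about the Condorcet game, with a pointer to Brandt's survey. Your argument is the standard one: a skew-symmetric zero-sum game has value $0$, and an equilibrium strategy guarantees the value.

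Your second step is self-contained. It uses only that $(D_\ML, D_\ML)$ is an equilibrium, that $s_{D_\ML \succ D_\ML} = \frac12$, and complementarity, so computing the value via minimax is not needed for the inequality. Where the minimax theorem genuinely enters is the existence of a symmetric equilibrium, which the paper's definition of Maximal Lotteries presupposes. Indeed, a maximin strategy $x^*$ of the skew-symmetric game is automatically optimal for Bob as well, so $(x^*, x^*)$ is an equilibrium.

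You are also right that the diagonal convention is essential rather than cosmetic. Here $s_{D \succ D'}$ has to be read as the bilinear extension $\sum_{a,b} D(a) D'(b)\, s_{a \succ b}$ with $s_{j \succ j} = \frac12$. With the literal indicator formula from the preliminaries, one gets $s_{D \succ D} = \frac12\bigl(1 - \sum_a D(a)^2\bigr) < \frac12$ for every $D$, so the claim would already fail at $D = D_\ML$.
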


\subsubsection{Stable Lotteries}
Stable $k$-Lotteries \cite{DBLP:journals/teco/ChengJMW20,DBLP:conf/sigecom/CharikarRTW25} are natural generalizations of Maximal Lotteries. Instead of selecting a single candidate from a distribution, they select a multiset of $k$ candidates. For a distribution $D$ over the candidates and an integer $k \in \N^+$, we write $D^k$ for the random multiset obtained by drawing $k$ independent samples from $D$ with replacement.

To define comparisons between multisets, for any multiset $A$, let $n_A(x)$ denote the multiplicity of candidate $x$ in $A$. For any two multisets $A$ and $B$ and any voter $v$, let $x = \max_{\succ_v}\set{A \cup B}$ be the most preferred candidate (according to $v$) appearing in either multiset. We define
\[
\Pr{}{A \succ_v B} \defeq \frac{n_A(x)}{n_A(x) + n_B(x)} \quad \text{ and } \quad \Pr{v \sim V}{A \succ_v B} \defeq \E{v \sim V}{\Pr{}{A \succ_v B}}.
\] These definitions extend naturally to random multisets such as $D^k$.

\begin{definition}[$k$-vs-$1$ Condorcet Game]
  Given a fixed election, Alice picks a distribution $D_A$ and Bob picks a distribution $D_B$ over the candidates simultaneously. Alice and Bob's payoffs are $\Pr{v \sim V}{D^k_A \succ_v D_B}$ and $\Pr{v \sim V}{D_B \succ_v D^k_A}$, respectively.
\end{definition}

In this game, each player's pure strategy is a distribution over candidates, while a mixed strategy is a distribution over such distributions. Thus, a pure-strategy Nash equilibrium is a pair of distributions over candidates. Such an equilibrium always exists in the $k$-vs-$1$ Condorcet game and can be computed in polynomial time \cite[Theorem 5.5]{DBLP:conf/sigecom/CharikarRTW25}. We define the Stable $k$-Lotteries voting rule based on this equilibrium.

\begin{definition}[Stable Lotteries]
  For any $k \in \N^+$, given a fixed election, compute any pure-strategy Nash equilibrium $\bigl(D_\kSL{k},D_\kSDL{k}\bigr)$ of the $k$-vs-$1$ Condorcet game. A \emph{Stable $k$-Lottery} chooses a candidate according to $D_\kSL{k}$ and a \emph{Stable Defensive $k$-Lottery} chooses a candidate according to $D_\kSDL{k}$.
\end{definition}

With a slight abuse of terminology, we also refer to the equilibrium $D_\kSL{k}$ itself as the Stable $k$-Lottery. In particular, a Maximal Lottery is exactly a Stable $1$-Lottery.

The value of the $k$-vs-$1$ Condorcet game is $1 - \frac1{k + 1}$. In particular, we have the following theorem.

\begin{theorem}[The Value of the $k$-vs-$1$ Condorcet Game; {\cite[Definition 5.1]{DBLP:conf/sigecom/CharikarRTW25}}]
In any election and for any $k \in \N^+$, the distribution $D_\kSL{k}$ satisfies that for every distribution $D$ over the candidates,
  \[
    \Pr{v\sim V}{D_\kSL{k}^k \succ D} \geq 1-\frac1{k + 1}.
  \]
\end{theorem}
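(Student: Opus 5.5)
The plan is to combine the two equilibrium conditions of the pure-strategy Nash equilibrium $\bigl(D_\kSL{k},D_\kSDL{k}\bigr)$, whose existence we take as given, with one symmetry computation. For distributions $A,B$ over candidates, write $u(A,B) \defeq \Pr{v\sim V}{A^k \succ_v B}$ for Alice's payoff.

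\textbf{Step 1 (constant sum).} First I would check that the $k$-vs-$1$ game is constant-sum. For any realized multisets and any voter $v$, let $x$ be $v$'s favorite candidate in the union. By definition the two players' shares are $\frac{n_A(x)}{n_A(x)+n_B(x)}$ and $\frac{n_B(x)}{n_A(x)+n_B(x)}$, and these sum to $1$. Taking expectations, Bob's payoff equals $1-u(A,B)$. Hence Bob's best-response condition at the equilibrium says
\[
u\bigl(D_\kSL{k},D\bigr) \geq u\bigl(D_\kSL{k},D_\kSDL{k}\bigr) \quad\text{for every distribution } D.
\]

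\textbf{Step 2 (the symmetric computation).} For any distribution $B$, I would show $u(B,B)=\frac{k}{k+1}$. Draw $k+1$ i.i.d.\ samples $y_1,\dots,y_{k+1}\sim B$. Give the first $k$ to Alice and $y_{k+1}$ to Bob, and fix a voter $v$. Let $x$ be $v$'s favorite among the samples and let $T$ be the set of indices $i$ with $y_i=x$. Alice's share is then $|T\cap[k]|/|T|$, which equals $\frac{1}{|T|}\sum_{i\le k}\mathds{1}[i\in T]$.

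The samples are exchangeable and the set $T$ is a symmetric function of them. So $\E{}{\mathds{1}[i\in T]/|T|}$ is the same for every $i$. These $k+1$ terms sum to $\E{}{|T|/|T|}=1$, so each equals $\frac1{k+1}$. Alice's expected share is therefore $\frac{k}{k+1}$ for every voter, and averaging over voters gives $u(B,B)=\frac{k}{k+1}$.

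\textbf{Step 3 (chaining).} Alice's best-response condition gives $u\bigl(D_\kSL{k},D_\kSDL{k}\bigr)\ge u\bigl(A,D_\kSDL{k}\bigr)$ for every $A$. Taking $A=D_\kSDL{k}$ and using Step 2, this is at least $\frac{k}{k+1}$. Combining with Step 1 yields, for every $D$,
\[
u\bigl(D_\kSL{k},D\bigr) \;\ge\; u\bigl(D_\kSL{k},D_\kSDL{k}\bigr) \;\ge\; u\bigl(D_\kSDL{k},D_\kSDL{k}\bigr) \;=\; 1-\frac1{k+1},
\]
which is the claim.

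\textbf{Main obstacle.} The only genuinely nontrivial ingredient is the existence of a \emph{pure} equilibrium. Alice's payoff is not concave in her distribution, so standard minimax does not apply directly. Since the paper takes this existence as given via \cite[Theorem 5.5]{DBLP:conf/sigecom/CharikarRTW25}, the delicate part left is the exchangeability argument in Step 2. There one must handle ties between identical samples correctly, which is exactly why the shares use the multiplicity-weighted fraction $\frac{n_A(x)}{n_A(x)+n_B(x)}$.
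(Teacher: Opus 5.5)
Your argument is correct and complete, given the existence of a pure equilibrium. The paper offers no proof of its own here; it only points to Definition 5.1 of the cited work.

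This paper \emph{defines} $D_\kSL{k}$ as Alice's half of a pure equilibrium of the $k$-vs-$1$ game, so your three steps supply the short derivation that the citation leaves implicit:
\begin{itemize}
\item the constant-sum property turns Bob's best response into $u(D_\kSL{k},D)\ge u(D_\kSL{k},D_\kSDL{k})$;
\item Alice's deviation to Bob's own strategy gives $u(D_\kSL{k},D_\kSDL{k})\ge u(D_\kSDL{k},D_\kSDL{k})$;
\item the symmetric payoff equals $\frac{k}{k+1}$.
\end{itemize}
Your exchangeability computation of the symmetric payoff is sound. It can also be read off from the interval representation in \cref{sec:stable_lotteries}: for each voter $v$, the intervals $I_{a,v}$ built from $B$ tile $[0,1]$. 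Hence Bob's expected share against $B^k$ is $\sum_a B(a)\,\E{Z\sim I_{a,v}}{Z^k}=\int_0^1 z^k\diff z=\frac1{k+1}$.

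One correction to your closing remark: Alice's payoff \emph{is} concave in her distribution. Fix a voter $v$ and a Bob candidate $b$, and let $c$ be the $A$-mass of candidates that $v$ ranks below $b$. Then Bob's share against $A^k$ equals $\int_0^1\bigl(c+s\,A(b)\bigr)^k\diff s$. This is an average of the convex function $x\mapsto x^k$ on $[0,\infty)$ applied to linear functions of $A$, so it is convex in $A$. Consequently $u(A,B)$ is concave in $A$ and linear in $B$ on a product of simplices, and Sion's minimax theorem directly yields a pure saddle point. This does not affect your proof, which only uses existence, but existence is not the obstacle you describe.
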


\section{RepApx Stable Lotteries}
\label{sec:stable_lotteries}
In this section, we first discuss prior work on approximating Maximal Lotteries, and then provide a generalization of the prior result to the case of Stable $k$-Lotteries. 

\subsection{Representative Approximate Lotteries}
An \emph{$\veps$-Nash equilibrium} ($\veps$-NE) \cite{SciDir:journals/jet/Radner80} in a game is an approximation of a Nash equilibrium. For a two-player constant-sum game and an $\veps$-Nash equilibrium in it, each player's expected payoff is at most $\veps$ less than the value of the game.

Since Maximal Lotteries and Stable Lotteries can be characterized as equilibria of two-player constant-sum games, they admit natural $\veps$-approximate counterparts. For a Maximal Lottery (resp.\@ Stable $k$-Lottery), whose value of the game is $\frac12$ (resp.\@ $1 - \frac{1}{k + 1}$), an $\veps$-approximation guarantees that the expected payoff against any opposing strategy is at least $\frac12 - \veps$ (resp.\@ $1 - \frac{1}{k + 1} - \veps$).

We define $\veps$-Representative Approximate Maximal Lotteries and $\veps$-Representative Approximate Stable $k$-Lotteries below. For our definitions, in order to apply these concepts in future sections, we only consider a restricted class of approximate equilibria whose support is contained within that of an exact equilibrium. 

\begin{definition}[$\veps$-Representative Approximate Maximal Lotteries]\label{def:repapx_ML}
  Given an election instance, a distribution $D$ over the candidates is called an \emph{$\veps$-Representative Approximate Maximal Lottery} (abbreviated \emph{$\veps$-RepApx Maximal Lottery}) if both of the following conditions hold:
  \begin{itemize}
    \item there exists a Maximal Lottery $D_\ML$ such that $\Supp(D) \subseteq \Supp(D_\ML)$;
    \item for every candidate $a \in C$, it holds that $s_{a \succ D} \leq \frac12 + \veps$.
  \end{itemize}
\end{definition}

\begin{definition}[$\veps$-Representative Approximate Stable $k$-Lotteries]\label{def:repapx_SL}
  Given an election instance, a distribution $D$ over the candidates is called an \emph{$\veps$-Representative Approximate Stable $k$-Lottery} (abbreviated \emph{$\veps$-RepApx Stable $k$-Lottery}) if both of the following conditions hold:
  \begin{itemize}
    \item there exists a Stable $k$-Lottery $D_\kSL{k}$ such that $\Supp(D) \subseteq \Supp(D_\kSL{k})$;
    \item for every candidate $a \in C$, it holds that $\Pr{v \sim V}{a \succ_v D^k} \leq \frac1{k + 1} + \veps$.
  \end{itemize}
\end{definition}

An $\veps$-RepApx Maximal Lottery with a support size of $O(\varepsilon^{-2})$ always exists, and it can be computed by taking independent samples from a Maximal Lottery; the uniform distribution over these samples constitutes an $\veps$-RepApx Maximal Lottery with high probability \cite{DBLP:conf/soda/CharikarRW26,DBLP:conf/focs/BourneufCT25}.

\begin{mybox}[label={box:emp_ML},nameref={Empirical Maximal Lotteries Sampling}]{$q$-Empirical Maximal Lotteries Sampling}
  \begin{itemize}
    \item Compute any symmetric Nash equilibrium distribution $D_\ML$ of the Condorcet game.
    \item Draw independent samples $c_1, \dots, c_q \sim D_\ML$ with replacement.
    \item Return $\Unif\left(\set{c_1, \dots, c_q}\right)$. (In the returned distribution, each candidate is selected with probability proportional to its multiplicity in the multiset.)
  \end{itemize}
\end{mybox}

\begin{theorem}[Existence of Small-Support $\veps$-RepApx Maximal Lotteries, {\cite[Theorem 1]{DBLP:conf/soda/CharikarRW26}}]\label{thm:repapx_ML_existence}
  For all constant $\veps \in (0, 1)$, in every election, $\left((1+o(1))\frac{\pi}{8}\veps^{-2}\right)$-\nameref{box:emp_ML} outputs an $\veps$-RepApx Maximal Lottery with positive probability, and an $(1+\gamma)\veps$-RepApx Maximal Lottery with probability at least $\frac{\gamma}{1+\gamma}$. 
\end{theorem}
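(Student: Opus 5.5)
The plan is a one-sided Kolmogorov--Smirnov bound applied to each voter separately, followed by Markov's inequality. Let $D_\ML$ be the computed Maximal Lottery, let $c_1,\dots,c_q$ be the samples, and let $D=\Unif(\set{c_1,\dots,c_q})$. The support condition of \cref{def:repapx_ML} holds automatically, so only the second condition needs work. Define the random variable
\[
M \defeq \max_{a\in C} s_{a\succ D}-\tfrac12 .
\]
I will show two things: $M\ge 0$ always, and $\Ex[M]\le(1+o(1))\sqrt{\pi/8}\,/\sqrt q$. Choosing $q=(1+o(1))\frac{\pi}{8}\veps^{-2}$ then gives $\Ex[M]\le\veps$. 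Since $M\ge0$, Markov's inequality gives $\Pr{}{M>(1+\gamma)\veps}\le\frac1{1+\gamma}$, which is the second claim. The first claim follows because a variable with mean at most $\veps$ takes a value at most $\veps$ with positive probability. Nonnegativity holds because $s_{D\succ D}=\frac12$ (using the convention $s_{j\succ j}=\frac12$), and $s_{D\succ D}$ is an average of the values $s_{a\succ D}$ over $a\sim D$. Hence the maximum is at least $\frac12$.

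\textbf{Bounding $\Ex[M]$ without dependence on $m$.} A union bound over the $m$ candidates would give a dependence on $m$, so I avoid it by decomposing voter by voter. For a voter $v$ and a candidate $a$, define the empirical quantity
\[
F_v(a)\defeq\tfrac1q\,\abs{\set{i: a\succ_v c_i}},
\]
counting ties $c_i=a$ with weight $\frac12$. Its expectation under $D_\ML$ is $G_v(a)\defeq\Pr{c\sim D_\ML}{a\succ_v c}$, again with half weight on $c=a$. Then $s_{a\succ D}=\Ex_v[F_v(a)]$ and $s_{a\succ D_\ML}=\Ex_v[G_v(a)]\le\frac12$ by \cref{thm:ML_distribution}. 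Therefore
\[
M\le \max_a \Ex_v\bigl[F_v(a)-G_v(a)\bigr]\le \Ex_v\Bigl[\sup_{b}\bigl(F_v(b)-G_v(b)\bigr)\Bigr].
\]
The maximum has now moved inside the average over voters. For a fixed $v$, the order $\succ_v$ turns $D_\ML$ into a distribution on a line, and $F_v-G_v$ is exactly the deviation of the empirical CDF from the true CDF, evaluated at midpoints of atoms. Taking expectations over the samples, it suffices to bound the expected one-sided Kolmogorov--Smirnov statistic $\Ex\bigl[\sup_b (F_v(b)-G_v(b))\bigr]$ uniformly over all distributions on a line.

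\textbf{Main obstacle: the sharp constant $\sqrt{\pi/8}$.} The key step is to show that this expected statistic is at most $(1+o(1))\sqrt{\pi/8}/\sqrt q$, uniformly in the distribution. I would first reduce to the continuous case. Smearing each atom of $D_\ML$ over an interval gives a continuous distribution, and the midpoint evaluation (the half-weight convention) only takes the supremum over a subset of the points of the continuous process. So the discrete statistic is stochastically dominated by the continuous one, whose law is distribution-free: it is the one-sided statistic $D_q^+$ of the uniform distribution. Then I would use Smirnov's exact distribution, or Donsker's theorem together with uniform integrability (the DKW tail bound $\Pr{}{\sqrt q D_q^+>x}\le e^{-2x^2}$ gives it), to conclude $\sqrt q\,\Ex[D_q^+]\to\Ex[\sup_t B^\circ(t)]$. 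Here $B^\circ$ is a Brownian bridge, and $\Ex[\sup_t B^\circ(t)]=\int_0^\infty e^{-2x^2}\diff x=\sqrt{\pi/8}$. This yields $\Ex[M]\le(1+o(1))\sqrt{\pi/8}/\sqrt q$, which completes the argument above. Getting the exact leading constant, rather than just $O(1/\sqrt q)$, is the delicate part. If only $O(\veps^{-2})$ samples were needed, DKW alone would already suffice.
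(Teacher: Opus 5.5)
Your proof is correct. The paper does not prove this statement; it cites it. The natural comparison is the paper's proof of the Stable-Lottery analogue (\cref{thm:repapx_SL_existence}), and your skeleton matches it. You order candidates by each voter's preference and couple the samples with i.i.d.\ uniform points through the intervals $I_{c,v}$. You then move the maximum over candidates inside the average over voters, so no union bound over $m$ is needed. Finally you reduce to a Kolmogorov--Smirnov statistic and finish with Markov.

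The real difference is the probabilistic lemma. The paper integrates the two-sided Massart bound $2e^{-2\lambda^2}$ to get $\sqrt{\pi/(2q)}$ (\cref{cor:DKW-target}). That bound is explicit and non-asymptotic, but at $k=1$ it only gives $\lceil\frac{\pi}{2}\veps^{-2}\rceil$ samples, a factor $4$ above the stated $\frac{\pi}{8}\veps^{-2}$. You note that only an upper bound on $s_{a\succ D}$ is needed, so the one-sided statistic $D_q^+$ suffices. You then pin down $\sqrt{q}\,\Ex[D_q^+]\to\sqrt{\pi/8}$ through the Brownian-bridge limit plus uniform integrability. This recovers exactly the stated constant, at the price of being only asymptotic, which is what the $(1+o(1))$ in the statement permits. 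Your check that $M\ge 0$ is also exactly what Markov needs.

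One phrasing should be corrected. Let $\Phi(r)\defeq\frac1q\abs{\set{i:X_i<r}}-r$ under the coupling $X_i\sim I_{c_i,v}$, and let $p_a$ be the mass of $a$ under $D_\ML$. The half-weight quantity is not $\Phi$ at a single point. It is the average $F_v(a)-G_v(a)=\frac12\bigl(\Phi(\ell_{a,v})+\Phi(\ell_{a,v}+p_a)\bigr)$ of its values at the two endpoints of $I_{a,v}$. This is still at most $\sup_r\Phi(r)$, so the domination holds pointwise under the coupling and your conclusion stands.
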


\cref{{thm:repapx_ML_existence}} is not only an existence result but also an algorithmic one: as mentioned in the work of \cite{DBLP:conf/soda/CharikarRW26}, since the sampling process succeeds with constant probability, we can restart the process if the previous one does not succeed, and the expected number of reruns is at most a constant. It is a direct consequence of \cref{{thm:repapx_ML_existence}} that Maximal Lotteries can be $\veps$-approximated by a distribution with support size $O(\veps^{-2})$. This contrasts with support sizes for the general case of $\veps$-Nash equilibrium, where it is known that some games may require a distribution with support size $\Omega(\log(m)\veps^{-2})$, where $m$ in the number of pure strategies \cite{DBLP:conf/sigecom/FederNS07}.

As Stable Lotteries are natural generalizations of Maximal Lotteries, it is natural to ask whether \cref{thm:repapx_ML_existence} also generalizes to the case of Stable $k$-Lotteries. In the remainder of the section, we prove that the more general result does in fact hold via an analogous sampling procedure. 

\begin{mybox}[label={box:emp_SL},nameref={Empirical Stable $k$-Lotteries Sampling}]{$q$-Empirical Stable $k$-Lotteries Sampling}
  \begin{itemize}
    \item Compute any Nash equilibrium distribution $D_\kSL{k}$ of the $k$-vs-$1$ Condorcet game.
    \item Draw independent samples $c_1, \dots, c_q \sim D_\kSL{k}$ with replacement.
    \item Return $\Unif\left(\set{c_1, \dots, c_q}\right)$. (In the returned distribution, each candidate is selected with probability proportional to its multiplicity in the multiset.)
  \end{itemize}
\end{mybox}

\begin{theorem}[Existence of Small-Support $\veps$-RepApx Stable $k$-Lotteries]\label[theorem]{thm:repapx_SL_existence}

For all constants $\veps \in (0, 1)$ and $k\in\N$, in every election, $\left\lceil \frac{\pi}{2}k^2\veps^{-2}\right\rceil$-\nameref{box:emp_SL} outputs an $\veps$-RepApx Stable $k$-Lottery with positive probability, and an $(1+\gamma)\veps$-RepApx Stable $k$-Lottery with probability at least $\frac{\gamma}{1+\gamma}$. 
\end{theorem}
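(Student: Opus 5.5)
Let $D^* = D_\kSL{k}$ be the Stable $k$-Lottery computed in the first step, let $q = \lceil \frac{\pi}{2}k^2\veps^{-2}\rceil$, and let $\hat D = \Unif(\set{c_1,\dots,c_q})$. The support condition of \cref{def:repapx_SL} holds automatically, since every $c_i \in \Supp(D^*)$. So the whole task is the second condition: with positive probability, $\Pr{v\sim V}{a \succ_v \hat D^k} \le \frac1{k+1}+\veps$ simultaneously for all $a\in C$. I would follow the strategy of \cite{DBLP:conf/soda/CharikarRW26}. The plan is to control the \emph{expected maximum violation} $\E{}{\max_a(\Pr{v}{a\succ_v \hat D^k} - \frac1{k+1})^+}$ by a second-moment bound that is uniform in $a$. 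After that, a Markov-type argument gives both conclusions.

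\textbf{Step 1 (linearization).} For a voter $v$ and candidate $a$, let $B_v(a)$ be the set of candidates that $v$ ranks above $a$ (ties measure zero). Since $a$ is a single candidate, $a \succ_v \hat D^k$ holds exactly when none of the $k$ draws from $\hat D$ lands in $B_v(a)\cup\{a\}$, with $a$ itself counting as a tie of weight $\frac{1}{1+n}$, where $n$ is the number of draws equal to $a$. Write $p = \hat D(B_v(a))$ and $p^* = D^*(B_v(a))$. Then $\Pr{}{a \succ_v \hat D^k}$ is a polynomial of degree $k$ in the empirical frequencies. Its derivative in the direction $\hat D - D^*$ is bounded by $k$ times the change in the relevant mass, using $|(1-p)^k-(1-p^*)^k|\le k|p-p^*|$. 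The contribution of the tie term $a$ should be handled the same way, as an extended set $B_v(a)\cup\{a\}$ with fractional weight.

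\textbf{Step 2 (the key structure).} The crucial point, as in the Maximal Lottery case, is that against the exact $D^*$ the payoff is \emph{at most} $\frac1{k+1}$ for every $a$, with equality on the relevant support. Moreover, $\hat D$ is an average of $q$ i.i.d.\ point masses whose mean is $D^*$. I would write $f_a(\hat D)=\Pr{v}{a\succ_v \hat D^k}$. Then I would expand it as a U-statistic over $k$-tuples of distinct sample indices, plus a correction of order $k^2/q$ for tuples that share an index. The U-statistic has mean $f_a(D^*)\le\frac1{k+1}$. By the Hoeffding decomposition, its fluctuation is dominated by the linear term $\frac{k}{q}\sum_i h_a(c_i)$, where $h_a$ is centered and takes values in $[-1,1]$.

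\textbf{Step 3 (uniformity over $a$).} This is the main obstacle. A union bound over $m$ candidates would cost $\log m$, so it cannot be used. Instead I would copy the trick of \cite{DBLP:conf/soda/CharikarRW26}: exploit the fact that the $k$-vs-$1$ game is constant-sum and that $D^*$ is an equilibrium. This forces the worst-case $a$ to be "self-referential". Concretely, one bounds $\max_a f_a(\hat D)$ by the payoff of a best response computed from the sample itself. The resulting quantity can be rewritten as a symmetric functional of the sample whose expectation is computed exactly rather than bounded by a union bound. That should give $\E{}{\max_a f_a(\hat D)} - \frac1{k+1} \le k\sqrt{\frac{2}{\pi q}}(1+o(1))$, where the Gaussian mean-absolute-deviation constant $\sqrt{2/\pi}$ and the factor $k$ from Step 1 produce exactly the threshold $q=\frac{\pi}{2}k^2\veps^{-2}$. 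If this exact computation fails, I would fall back on a Jensen bound $\E{}{|X|}\le\sqrt{\mathrm{Var}}$, at a worse constant.

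\textbf{Step 4 (conclusion).} Let $Z = \max_a f_a(\hat D)-\frac1{k+1}$. Step 3 gives $Z\ge 0$ essentially, with $\E{}{Z}\le\veps$. Because some outcome must lie at or below the mean, we get $\Pr{}{Z\le \veps}>0$. Markov's inequality gives $\Pr{}{Z>(1+\gamma)\veps}\le\frac1{1+\gamma}$, which is the stated $\frac{\gamma}{1+\gamma}$ success probability.
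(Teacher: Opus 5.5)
\textbf{The gap is Step 3.} Your Step 1 (the Lipschitz bound $x^k-y^k\le k\abs{x-y}$) and Step 4 (Markov) are the right bookends. But Step 3 is where the whole theorem lives, and it contains no argument. You correctly see that a union bound over the $m$ candidates must be avoided, but the replacement is a placeholder. $\max_a f_a(\hat D)$ already \emph{is} the best-response payoff against $\hat D$, and nothing explains why it becomes a symmetric functional of the sample with an exactly computable expectation. In the paper's proof, which the authors say mirrors the Maximal Lottery case, the equilibrium property of $D^*$ is used only to get $\Pr{v\sim V}{a\succ_v (D^*)^k}\le\frac1{k+1}$; it plays no role in uniformity over $a$.

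Your fallback, a Jensen/variance bound, controls one fixed $a$ at a time, so it brings back exactly the union bound you wanted to avoid. The constant $\sqrt{2/\pi}$ you predict is the mean absolute value of a single Gaussian, again a fixed-$a$ quantity. Also, at $q=\frac{\pi}{2}k^2\veps^{-2}$ it would give $\frac{2}{\pi}\veps$, not $\veps$. Finally, the U-statistic route with an $O(k^2/q)$ correction and a $(1+o(1))$ factor cannot yield the exact, non-asymptotic sample size $\lceil\frac{\pi}{2}k^2\veps^{-2}\rceil$ in the statement.

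\textbf{The missing idea: treat each voter's ranking as a line.} Define $\delta_v=\max_a\bigl(\Pr{}{a\succ_v\hat D^k}-\Pr{}{a\succ_v (D^*)^k}\bigr)$. Since $\max_a\E{v}{\cdot}\le\E{v}{\max_a\cdot}$, and the expectation over the sample commutes with the voter average, it suffices to bound $\E{}{\delta_v}$ for each fixed $v$.

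Lay out intervals $I_{c,v}\subseteq[0,1]$ of length $D^*(c)$ in $v$'s order, least preferred at $0$. Take $Z$ uniform on $I_{a,v}$ and $X_i$ uniform on $I_{c_i,v}$. Then one has exactly, ties included and with no with-replacement corrections,
$\Pr{}{a\succ_v (D^*)^k}=\E{}{Z^k}$ and $\Pr{}{a\succ_v\hat D^k}=\E{}{G_v(Z)^k}$, where $G_v(r)=\frac1q\sum_i\Pr{X_i\sim I_{c_i,v}}{X_i<r}$.

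Your Lipschitz step then gives $\delta_v\le k\sup_{r\in[0,1]}\abs{G_v(r)-r}\le k\,\E{X}{\sup_r\abs{\widehat F(r)-r}}$, where $\widehat F$ is the empirical CDF of $X_1,\dots,X_q$. Once $c_i\sim D^*$ is integrated out, each $X_i$ is marginally $\Unif[0,1]$. So $\E{}{\delta_v}$ is at most $k$ times the expected Kolmogorov--Smirnov statistic of $q$ uniform points. Integrating Massart's tail bound $2e^{-2\lambda^2}$ shows this is at most $k\sqrt{\pi/(2q)}$, independent of $m$ and $n$. That equals $\veps$ at the stated $q$.

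Your Step 4 then gives both conclusions. Your $Z$ is indeed nonnegative: $\max_a f_a(\hat D)\ge\Pr{v\sim V}{\hat D\succ_v\hat D^k}=\frac1{k+1}$ by exchangeability of $k+1$ i.i.d.\ draws.
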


Our proof follows a similar structure to the original proof of \cref{thm:repapx_ML_existence}, in that we will utilize the same known probabilistic machinery to show that there is a positive probability that a uniform distribution over random samples from a Stable $k$-Lottery is a $\veps$-RepApx Stable $k$-Lottery.

\subsection{The DKW Inequality}
In our proof, we will utilize general results about uniform distributions over a multiset of samples from some known distribution. Uniform distributions constructed in this way are known as \emph{empirical distributions}. Given a distribution with cumulative distribution function (CDF) $F$, along with i.i.d.\@ samples $s_1,\ldots,s_q\sim F$, we can formally define the CDF of the empirical distribution as $\widehat{F}(r)\coloneq \frac{1}{q}\sum_{i=1}^q \mathds{1}[s_i\leq r]$. Bounds on the maximum of the difference $\widehat{F}(r)-F(r)$, are well understood and related to statistical tests such as Kolmogorov--Smirnov test for closeness of fit (see, e.g., \cite[Pages 283--287]{dodge2008concise}). Notably, Dvoretzky, Kiefer, and Wolfowitz proved a tight (up to the leading constant) concentration bound for $\sup_r(\widehat{F}(r) - F(r))$ which showed that this difference is unlikely to be too far from $0$  \cite{dvoretzky1956asymptotic}. The original concentration bound of \cite{dvoretzky1956asymptotic} was later improved by Massart to have a tight constant  \cite{massart1990tight}. We state this result formally below. 

\begin{lemma}[\cite{massart1990tight}]\label[lemma]{lem:Massart} For any probability distribution defined by CDF $F$, and its empirical distribution, $\widehat{F}$, constructed from random samples from $F$, we have 
  \[
  \Pr{s_1,\ldots,s_q\sim F}{\sqrt{q} \sup_{r}\left|\widehat{F}(r)-F(r)\right|>\lambda}\leq 2e^{-2\lambda^2}.
  \]
\end{lemma}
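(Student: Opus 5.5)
The plan is to follow the classical route: reduce to the uniform distribution, split the two-sided deviation into two one-sided ones, and then prove the sharp one-sided bound
\[
\Pr{}{\sqrt{q}\,\sup_r\bigl(\widehat{F}(r)-F(r)\bigr)>\lambda}\le e^{-2\lambda^2},
\]
which is exactly the tail of the supremum of a Brownian bridge.

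\textbf{Reduction to the uniform case.} Write each sample as $s_i=F^{-1}(U_i)$, where $U_i\sim\Unif[0,1]$ and $F^{-1}$ is the generalized inverse. Then $\widehat{F}(r)=\widehat{G}(F(r))$, where $\widehat{G}$ is the empirical CDF of the $U_i$. Consequently $\sup_r|\widehat{F}(r)-F(r)|\le\sup_{t\in[0,1]}|\widehat{G}(t)-t|$, with equality when $F$ is continuous. Atoms only shrink the set of $t$ over which the supremum is taken, so it suffices to treat $F(t)=t$ on $[0,1]$.

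\textbf{Splitting into one-sided bounds.} The two-sided event is the union of the events $\sup_t(\widehat{G}(t)-t)>\lambda/\sqrt{q}$ and $\sup_t(t-\widehat{G}(t))>\lambda/\sqrt{q}$. The map $U_i\mapsto 1-U_i$ sends one event to the other, up to null-probability boundary effects, so the two have equal probability. A union bound then reduces the lemma to the one-sided bound displayed above. We may also assume $e^{-2\lambda^2}\le\frac12$, since otherwise $2e^{-2\lambda^2}\ge 1$ and the lemma is trivial.

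\textbf{The one-sided bound.} Set $x=\lambda/\sqrt{q}$. I would start from the exact Smirnov / Birnbaum--Tingey formula, which comes from decomposing according to the last time $j/q$ at which the empirical process touches the line $t+x$:
\[
\Pr{}{\sup_t(\widehat{G}(t)-t)\ge x}=x\sum_{j=0}^{\lfloor q(1-x)\rfloor}\binom{q}{j}\Bigl(1-x-\frac{j}{q}\Bigr)^{q-j}\Bigl(x+\frac{j}{q}\Bigr)^{j-1}.
\]
Each summand is $x/(x+j/q)$ times a binomial point probability $\Pr{}{\mathrm{Bin}(q,x+j/q)=j}$. I would then compare this sum term by term with its continuum analogue, the Riemann-sum form of the Brownian-bridge integral
\[
\int_0^{1}\frac{\lambda}{\sqrt{2\pi}\,u^{3/2}(1-u)^{1/2}}\exp\!\Bigl(-\frac{\lambda^2}{2u(1-u)}\Bigr)\diff u=e^{-2\lambda^2}.
\]
For this comparison I would use Stirling-type bounds with explicit error terms, together with the elementary inequality $\log(1+y)\ge y-y^2/2$, to bound each binomial point probability by the corresponding Gaussian density, up to a correction factor. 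The aim is to show that the correction factors are at most $1$ once summed. This step is where the hypothesis $e^{-2\lambda^2}\le\frac12$, equivalently $\lambda\ge\sqrt{\ln 2/2}$, is used.

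\textbf{Main obstacle.} The hard part is the uniform-in-$q$ comparison between the discrete Birnbaum--Tingey sum and $e^{-2\lambda^2}$. The sharp constant $2$ in the exponent leaves no slack. Crude Chernoff or martingale arguments yield only bounds like $Ce^{-2\lambda^2}$ with $C>1$, or $e^{-c\lambda^2}$ with $c<2$. My first attempt would be to show that the ratio of the discrete sum to $e^{-2\lambda^2}$ is nondecreasing in $q$ for fixed $\lambda$ in the relevant range. The ratio tends to $1$ as $q\to\infty$ by Donsker's theorem, so monotonicity would finish the proof. If monotonicity fails for small $\lambda$ or small $q$, I would fall back on direct numerical verification in a bounded region, which is essentially how Massart closes the argument.
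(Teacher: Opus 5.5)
Your preliminary reductions are sound. The quantile transform $s_i=F^{-1}(U_i)$ reduces to the uniform case, the reflection $U_i\mapsto 1-U_i$ equates the two one-sided tails, and the regime $e^{-2\lambda^2}>\frac12$ is vacuous. Both the Birnbaum--Tingey formula and the Brownian-bridge first-passage identity are correct. The paper itself does not prove this lemma; it imports it from Massart.

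The gap is that the sharp one-sided bound $P_q(\lambda)\le e^{-2\lambda^2}$ for all $q$ is the entire content of Massart's theorem, and you only state it as an aim. Here $P_q(\lambda)$ denotes the one-sided tail. Moreover, the mechanism you propose cannot work term by term. Put $u=x+j/q$. The $j$-th binomial point probability behaves like $\exp\bigl(-q\,\mathrm{KL}(u-x\,\|\,u)\bigr)$, where $\mathrm{KL}$ is the Bernoulli relative entropy, and
\[
\mathrm{KL}(u-x\,\|\,u)=\frac{x^2}{2u(1-u)}-\frac{(2u-1)x^3}{6u^2(1-u)^2}+O(x^4).
\]
Combined with the Stirling prefactor, the ratio of the $j$-th term to its Gaussian counterpart is, to first order in $1/\sqrt q$,
\[
1+\frac{(2u-1)\lambda}{2u(1-u)\sqrt q}\Bigl(\frac{\lambda^2}{3u(1-u)}-1\Bigr).
\]
This exceeds $1$ on a nontrivial range of $u$ for every $\lambda$. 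For instance, it does so for all $u$ near $1$, where the failure count is Poisson-like and the excess is huge.

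So the correction factors are not at most $1$. The excesses must be cancelled, after summation, by the terms where the ratio is below $1$, with the Riemann-sum error also tracked. All of these effects are of the same order as the true net margin. By Smirnov's expansion $P_q(\lambda)=e^{-2\lambda^2}\bigl(1-\frac{2\lambda}{3\sqrt q}+O(q^{-1})\bigr)$, that margin is only a relative $O(\lambda/\sqrt q)$. The bookkeeping must also hold uniformly for $\lambda$ up to $\sqrt q$. This includes $\lambda$ of order $q^{1/6}$ and beyond, where $\lambda^3/\sqrt q$ is not small and such expansions break down.

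Your fallbacks do not close this.
\begin{itemize}
\item Monotonicity of $q\mapsto P_q(\lambda)e^{2\lambda^2}$ would imply the lemma, so it is a stronger claim, and you give no argument for it.
\item Donsker's theorem gives convergence for each fixed $\lambda$ with no rate, so it cannot certify the inequality at any finite $q$. Since the ratio tends to $1$, there is no slack to absorb an unquantified error.
\item Numerical checks cover only finitely many $q$, and they need rigorous error control over a continuum of $\lambda$.
\end{itemize}
You would still need an explicit analytic estimate valid for all $q\ge q_0$ and all admissible $\lambda$. That uniform estimate is exactly what Massart establishes. As written, your proposal reduces the two-sided statement to Massart's one-sided theorem, which is what the paper does by citation.

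If you want a self-contained argument for the paper's only use of this lemma (\cref{cor:DKW-target}), the sharp constant is unnecessary. Since $(\widehat G(t)-t)/(1-t)$ is a martingale in $t$, Doob's $L^2$ maximal inequality on $[0,\frac12]$ gives a bound of $2/\sqrt q$ there. The reflection $U_i\mapsto 1-U_i$ gives the same bound on $[\frac12,1]$, so $d(q)\le 4/\sqrt q$. That yields \cref{thm:repapx_SL_existence} with a larger constant in the sample size.
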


The following corollary converts this probability bound to an expectation bound, which we will use in our proof of \cref{thm:repapx_SL_existence}.

\begin{corollary}\label[corollary]{cor:DKW-target}
Letting $d(q)\coloneq \E{s_1,\ldots,s_q\sim [0,1]}{\sup_{r\in [0,1]}\left|\widehat{F}(r)-r\right|}$, we have
\[
d(q)\leq \sqrt{\frac{\pi}{2q}}.
\]
\end{corollary}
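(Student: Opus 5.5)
We prove \cref{cor:DKW-target} by writing the expectation as an integral of the tail probability and bounding that tail with \cref{lem:Massart}. The uniform distribution on $[0,1]$ has CDF $F(r)=r$ there. For $r<0$ and $r>1$, both $F$ and $\widehat{F}$ equal $0$ or $1$, so they agree. Hence the supremum over $r\in[0,1]$ in the definition of $d(q)$ equals the supremum over all $r\in\R$ that appears in \cref{lem:Massart}.

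Set $Z\defeq \sqrt{q}\,\sup_{r}\bigl|\widehat{F}(r)-r\bigr|$, a nonnegative random variable. Then $d(q)=\E{}{Z}/\sqrt{q}$. We write the expectation through the tail formula,
\[
\E{}{Z}=\int_0^\infty \Pr{}{Z>\lambda}\diff\lambda .
\]
For each $\lambda$, \cref{lem:Massart} bounds the integrand by $2e^{-2\lambda^2}$. We would bound the integral directly by
\[
\int_0^\infty 2e^{-2\lambda^2}\diff\lambda = 2\cdot\frac12\sqrt{\frac{\pi}{2}}=\sqrt{\frac{\pi}{2}},
\]
using the Gaussian integral $\int_0^\infty e^{-a\lambda^2}\diff\lambda=\frac12\sqrt{\pi/a}$ with $a=2$.

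Dividing by $\sqrt{q}$ gives $d(q)\le\sqrt{\pi/(2q)}$, as claimed. Note that for small $\lambda$ the Massart bound exceeds $1$, so one could truncate the tail at $1$ to get a slightly better constant. The stated bound does not need this, so we use the untruncated bound throughout.

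The only step that needs care is the reduction in the first paragraph. We must check that the supremum over $[0,1]$ coincides with the supremum over all of $\R$, which holds because $F$ and $\widehat{F}$ agree outside $[0,1]$. We also need measurability of $Z$, which follows because the supremum can be taken over rational $r$ together with the sample points. After that, the argument is just the tail formula followed by a Gaussian integral.
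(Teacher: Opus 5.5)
Your proposal is correct and matches the paper's proof: you specialize \cref{lem:Massart} to the uniform distribution, write $\sqrt{q}\,d(q)$ as the integral of the tail probability, and bound that integral by $\int_0^\infty 2e^{-2\lambda^2}\,\mathrm{d}\lambda=\sqrt{\pi/2}$. Your remarks that the supremum over $[0,1]$ equals the supremum over $\mathbb{R}$, and on measurability, are valid details that the paper leaves implicit.
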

\begin{proof}
    When $F$ is a uniform distribution over the interval $[0,1]$, \cref{lem:Massart} gives us that \[\Pr{s_1,\ldots,s_q\sim F}{\sqrt{q} \sup_{r\in [0,1]}\left|\widehat{F}(r)-r\right|>\lambda}\leq 2e^{-2\lambda^2}.\] To convert the probability bound to an expectation bound, note that
    \begin{align*}
        \sqrt{q}\cdot d(q)&\leq \int_0^\infty \Pr{s_1,\ldots,s_q\sim [0,1]}{\sqrt{q} \sup_{r\in [0,1]}\left|\widehat{F}(r)-r\right|>\lambda}\d \lambda \\
        &\leq \int_0^\infty 2e^{-2\lambda^2} \d \lambda = \sqrt{\frac{\pi}{2}}. \qedhere
    \end{align*}
\end{proof}

\subsection{Sampling Process}
In this subsection, we outline a technique that allows us to equate the process of sampling a candidate from a random lottery (i.e. a distribution over a discrete set of candidates) with the process of sampling a point uniformly at random from an interval. This equivalence allows us to apply \cref{cor:DKW-target} to bound the difference between a Stable $k$-Lottery and its empirical distribution. 

Given any random lottery, we can associate each voter $v$ with an interval $[0,1]$ that we can partition in a way that captures both the random lottery distribution and the preference order of $v$. Specifically, given a distribution $D$ that select each candidate $c$ with probability $p_c$, and some voter $v$ with a preference order $c_1^*\prec c_2^*\prec \dots \prec c_m^*$, we partition $[0, 1]$ into consecutive intervals $I_{c_i^*,v}$ with lengths equal to $p_{c_i^*}$. We arrange these intervals in the following way: The candidate $c_1^*$ (who is the least preferred candidate of $v$) is associated with the interval $I_{c_1^*,v}=[0,p_{c_1^*})$. For every $i \in [m-1]$, the upper bound of $I_{c_i^*,v}$ becomes the lower bound of $I_{c_{i + 1}^*,v}$, until we reach $I_{c_n^*,v}=[1-p_{c_n^*},1)$. We use 
the notation
\[
\ell_{c_i^*,v}=\sum_{c : c \prec_v c_i^*}p_c
\]
to denote the lower bound of the interval $I_{c_i^*,v}$. We will associate candidates with points in $[0, 1]$ in this way, and in particular, note that voter $v$ prefers points with larger values (i.e., closer to $1$) over those with smaller ones.

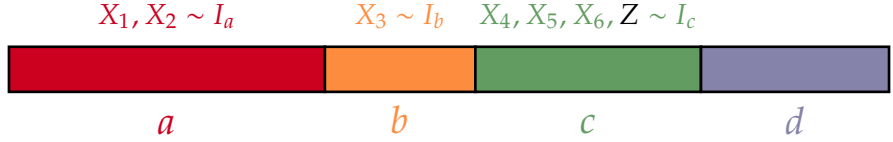
\begin{figure}[t!]
  \centering
  \begin{tikzpicture}[font=\Large\sffamily, thick]

\definecolor{sketchRed}{RGB}{204, 0, 31}    
\definecolor{sketchOrange}{RGB}{254, 140, 62} 
\definecolor{sketchGreen}{RGB}{98, 156, 96} 
\definecolor{sketchBlue}{RGB}{133, 131, 170}  

    \node at (5.5, 2.5) {
        $S = \left\{ \textcolor{sketchRed}{a, a}, \textcolor{sketchOrange}{b}, \textcolor{sketchGreen}{c, c, c} \right\}$
    };

    \def\barH{0.6}  
    \def\yPos{0}    
    \def\gap{0} 

    \filldraw[fill=sketchRed, draw=black, line width=1pt] 
        (0, \yPos) rectangle ++(4.2, \barH);
    \node[text=sketchRed, below=5pt] at (2.1, \yPos) {$a$};
    \node[text=sketchRed, above = 20pt] at (2.1, \yPos) {\normalsize $X_1, X_2 \sim I_a$};

    \filldraw[fill=sketchOrange, draw=black, line width=1pt] 
        (4.2 + \gap, \yPos) rectangle ++(2.0, \barH);
    \node[text=sketchOrange, below=5pt] at (4.2 + \gap + 1, \yPos+0.135) {$b$};
    \node[text=sketchOrange, above=20pt] at (4.2 + \gap + 1, \yPos) {\normalsize $X_3 \sim I_b$};

    \filldraw[fill=sketchGreen, draw=black, line width=1pt] 
        (6.2 + 2*\gap, \yPos) rectangle ++(3.0, \barH);
    \node[text=sketchGreen, below=5pt] at (6.2 + 2*\gap + 1.5, \yPos+0.01) {$c$};
    \node[text=sketchGreen, above=20pt] at (6.2 + 2*\gap + 1.5, \yPos) {\normalsize $X_4, X_5, X_6, \textcolor{black}{Z} \sim I_c$};

    \filldraw[fill=sketchBlue, draw=black, line width=1pt] 
        (9.2 + 3*\gap, \yPos) rectangle ++(2.5, \barH);
    \node[text=sketchBlue, below=5pt] at (9.2 + 3*\gap + 1.25, \yPos+0.13) {$d$};

\end{tikzpicture}
   \caption{Partition of the interval $[0, 1]$ induced by the preference order $d \succ_v c \succ_v b \succ_v a$. The multiset is $S = \set{a, a, b, c, c, c}$. To compute $\Pr{}{c \succ_v S}$, we sample $X_1, \dots, X_6$ and $Z$ independently from the corresponding subintervals of $[0, 1]$. In this example, we have $\Pr{}{c \succ_v S} = \Pr{}{\max\set{X_1, \dots, X_6} < Z} = 1/4.$}
  \label{fig:interval}
\end{figure}

We now relate the comparison between a candidate $a$ and a multiset $S = \set{s_1, \dots, s_k}$ to this interval representation in a way that is consistent with our tie-breaking model (see \cref{fig:interval}). Fix a voter $v$. Consider sample a point $Z$ uniformly from $I_{a, v}$ and a point $X_{i, v}$ uniformly from $I_{s_i, v}$ for each $i \in [k]$. All random variables are mutually independent.

If $a$ is the most preferred element of $\set{a} \cup S$ according to $v$, then the largest among $\set{Z, X_{1, v}, \dots, X_{k, v}}$ is equally likely to be any of these $n_S(a) + 1$ points. Therefore, the probability of $Z$ being the largest one is exactly $1/(n_S(a) + 1)$. Otherwise, $a$ is dominated by some candidate in $S$, in which case $Z$ cannot be the largest point and the probability is $0$.

Consequently, we have
\begin{equation}
\Pr{}{a \succ_v S} = \Pr{\substack{Z \sim I_{a, v}\\X_{i, v} \sim I_{s_i, v} \text{ for all }i \in [k]}}{\max_{i \in [k]}\set{X_{i, v}} < Z} = \E{Z \sim I_{a, v}}{\prod_{i = 1}^k\Pr{X_{i, v} \sim I_{s_i,v}}{X_{i, v} < Z}}.\label{eq:multiset_cmp}
\end{equation}

 We now independently sample the candidates $c_1, \dots, c_q \sim D$ with replacement to form the empirical distribution $\hat{D}$. For each voter $v$, the interval construction associates each sampled candidate $c_i$ with a random point $X_{i,v}\sim I_{c_i,v}$. By considering the conditional probability of choosing any value of $X_{i, v}$, we see that from the point of view of each voter, each $X_{i,v}$ can be viewed as a uniform random sample from $[0, 1]$. This sampling process ensures that we can sample one multiset and compare samples from that multiset with another candidate for \emph{all} voters simultaneously. This method of associating sampled candidates with sampled points will be formalized in our proof of \cref{thm:repapx_SL_existence} and is key to establish our desired bounds.

\subsection{Proof of \texorpdfstring{\cref{thm:repapx_SL_existence}}{a Theorem}}
\begin{proof}[Proof of \cref{thm:repapx_SL_existence}]
With our interval construction in place, we are now able to show the existence of $q$ samples that satisfy \cref{thm:repapx_SL_existence}. We will also explicitly calculate $q$ in terms of $\veps$. 

Let $D$ be a Stable $k$-Lottery $D_\kSL{k}$. Let $\widehat{D}$ denote a distribution that chooses uniformly at random from a multiset of candidates $\{c_1,\dots,c_q\}$. The following definitions will assume that the multiset $\{c_1,c_2,\ldots,c_q\}$ has already been fixed, but the remainder of the proof deal with random multisets formed by $q$ random samples from $D$. By \cref{eq:multiset_cmp}, our sampling method gives us that the winning probability of candidate $a$ over $\widehat{D}^k$ from the view of voter $v$ is
\begin{equation}
  \begin{split}
	&\Pr{}{a \succ_v \widehat{D}^k} = \E{Z \sim I_{a,v}}{\E{s_1, \dots, s_k \sim \widehat{D}}{\prod_{i=1}^k\Pr{X_{i, v}\sim I_{s_i, v}}{X_{i, v} < Z}}} \\
    ={}&{}\E{Z \sim I_{a,v}}{\left(\Pr{\substack{c \sim \widehat{D}\\X \sim I_{c,v}}}{X < Z}\right) ^ k} =  \E{Z \sim I_{a,v}}{\left(\E{X_1 \sim I_{c_1,v}, \dots, X_q \sim I_{c_q,v}}{\frac1q\sum_{i = 1}^q\mathds{1}[X_i < Z]}\right) ^ k},
  \end{split}\label{eq:pr_a_beats_d}
\end{equation}
while the winning probability of candidate $a$ over $D$ from the view of voter $v$ is
\begin{equation}
  \begin{split}
	&\Pr{}{a \succ_v D^k} = \E{Z \sim I_{a,v}}{\E{s_1, \dots, s_k \sim D}{\prod_{i=1}^k\Pr{X_{i, v}\sim I_{s_i, v}}{X_{i, v} < Z}}} \\={}&{} \E{Z \sim {I}_{a,v}}{\left(\Pr{X \sim [0, 1]}{X < Z}\right) ^ k}= \E{Z \sim {I}_{a,v}}{Z ^ k}.
  \end{split}\label{eq:pr_a_beats_dslk}
\end{equation}

Notably, this sampling procedure preserves the fact that with respect to each voter, samples from $D$ can be viewed as uniform samples from the unit interval. We denote the maximum difference, for a fixed $v$ over all possible $a\in C$, between \cref{eq:pr_a_beats_d,eq:pr_a_beats_dslk} as
\[
\delta_v \defeq \max_{a\in C} \left(\Pr{}{a \succ_v \widehat{D}^k} - \Pr{}{a \succ_v D^k}\right).
\]
First, we show that for any voter, the expected difference, $\E{c_1, \dots, c_q \sim D}{\delta_v}$, is not too large. 
\begin{align*}
	& \E{c_1, \dots, c_q \sim D}{\delta_v} \\
    = {}&{}\E{c_1, \dots, c_q \sim D}{\max_{a\in C}\left(\E{Z \sim I_{a,v}}{\left(\E{X_1 \sim I_{c_1,v}, \dots, X_q \sim I_{c_q,v}}{\frac1q\sum_{i = 1}^q\mathds{1}[X_i < Z]}\right) ^ k - Z^k}\right)} \tag{\cref{eq:pr_a_beats_d,eq:pr_a_beats_dslk}}\\
    \leq {}&{}\E{c_1, \dots, c_q \sim D}{\max_{a\in C}\sup_{r_a \in I_{a,v}}\left({\left(\E{X_1 \sim I_{c_1,v}, \dots, X_q \sim I_{c_q,v}}{\frac1q\sum_{i = 1}^q\mathds{1}[X_i < r_a]}\right) ^ k - r_a^k}\right)}\tag{$\E{}{f} \leq \sup f$}  \\
	= {}&{} \E{c_1, \dots, c_q \sim D}{\sup_{r\in[0, 1]}\left(\left( \E{X_1 \sim I_{c_1,v}, \dots, X_q \sim I_{c_q,v}}{\frac1q\sum_{i = 1}^q\mathds{1}[X_i < r]}\right) ^ k - r ^ k \right)} \\
	\leq {}&{} k \cdot \E{c_1, \dots, c_q \sim D}{{\sup_{r\in[0, 1]}\abs{\E{X_1 \sim I_{c_1,v}, \dots, X_q \sim I_{c_q,v}}{\frac1q\sum_{i = 1}^q\mathds{1}[X_i < r] }-r}}} \tag{$x ^ k - y ^ k \leq k \cdot \abs{x - y}$}\\
	\leq {}&{} k \cdot \E{c_1, \dots, c_q \sim D}{{\sup_{r\in[0, 1]}\E{X_1 \sim I_{c_1,v}, \dots, X_q \sim I_{c_q,v}}{\abs{\frac1q\sum_{i = 1}^q\mathds{1}[X_i < r]-r}}}} \tag{$\abs{\E{}{f}} \leq \E{}{\abs{f}}$}\\
	\leq {}&{} k \cdot \E{c_1, \dots, c_q \sim D}{\E{X_1 \sim I_{c_1,v}, \dots, X_q \sim I_{c_q,v}}{\sup_{r\in[0, 1]}\abs{\frac1q\sum_{i = 1}^q\mathds{1}[X_i < r] - r}}} \tag{$\sup\E{}{f} \leq \E{}{\sup f}$} \\
	= {}&{} k \cdot \E{X_1, \dots, X_q \sim [0, 1]}{\sup_{r \in [0, 1]}\abs{\frac 1q\sum_{i = 1}^q\mathds{1}[X_i < r] - r}}\\
	\leq {}&{} k \cdot d(q) \tag{\cref{cor:DKW-target}}.
\end{align*}

Linearity of expectation now allows us to show that the above bound holds for the \emph{average} voter as well.  
\begin{equation}
\E{c_1, \dots, c_q \sim D}{\E{v\sim V}{\delta_v}} = \E{v\sim V}{\E{c_1, \dots, c_q \sim D}{\delta_v}} \leq \E{v\sim V}{k\cdot d(q)} = k\cdot d(q) \leq k \sqrt{\frac{\pi}{2q}}. \label{eq:stable_lottery_expectation}
\end{equation}
In particular, this means that there is a positive probability of choosing $q$ samples $c_1,\ldots,c_q$ such that $\widehat{D}$, the uniform distribution over the multiset $\{c_1,\ldots, c_q\}$, satisfies
\[
\frac{1}{n}\sum_{v\in V}\left(\Pr{}{a \succ_v \widehat{D}^k}-\Pr{}{a \succ_v D^k}\right)\leq k\sqrt{\frac{\pi}{2q}}
\]
for all candidate $a\in C$. This same distribution $\widehat{D}$ must satisfy
\[
\Pr{v\sim V}{a\succ _v \widehat{D}^k}=\frac{1}{n}\sum_{v\in V}\Pr{}{a\succ_v \widehat{D}^k}\leq \frac{1}{n}\sum_{v\in V}\left(\Pr{}{a\succ_v D^k}+k\sqrt{\frac{\pi}{2q}}\right)\leq \frac{1}{k+1}+k\sqrt{\frac{\pi}{2q}}
\]
for all $a\in C$ as well, and hence is a $\left(k\sqrt{\frac{\pi}{2q}}\right)$-RepApx Stable $k$-Lottery. 

Therefore, for a fixed $k$ and desired approximation error $\veps$, it suffices to have $\frac{\pi k^2}{2\veps^2}$ samples from a Stable $k$-Lottery $D_{\kSL{k}}$. 
\end{proof}

Although we present \cref{thm:repapx_SL_existence} as an existence result, we can turn our result into an algorithmic one with a small overhead. Note that we can apply Markov's inequality to the expectation bound of \cref{eq:stable_lottery_expectation} and derive that if we draw $q$ random samples, the uniform distribution over these samples has a probability of at least $\frac{\gamma}{1+\gamma}$ to be a $\left((1+\gamma)k\sqrt{\frac{\pi}{2q}}\right)$-Stable $k$-Lottery. We can therefore repeat this process until success. See also \cite[Remark 2]{DBLP:conf/soda/CharikarRW26}.

\section{Support of Maximal Lotteries Has Constant Distortion}
\label{sec:ml_support}
In this section, we prove that every candidate in the support of Maximal Lotteries (i.e., every candidate selected with positive probability by the Maximal Lotteries rule) has metric distortion of at most $4 + \sqrt{17}$. Along the way, we establish several properties in \cref{sec:tournament} that relate local pairwise-comparison patterns to upper bounds on the social-cost ratio between two candidates.

We first state the main theorem of this section.
\begin{theorem}[Deterministic Selection in the Support of Maximal Lotteries]
\label{thm:const_dist_supp_ML}
  In any election, for any candidate $j^* \in \Supp(D_\ML)$, the distortion of $j^*$ is at most $4 + \sqrt{17} < 8.124$.
\end{theorem}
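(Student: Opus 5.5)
The plan is to reduce the statement to pairwise-comparison facts about $j^*$, $i^*$ and one intermediate candidate. Those facts come from the equilibrium structure of $D_\ML$. Two consequences of \cref{thm:ML_distribution} and complementary slackness are what I will use. First, for every candidate $a$ we have $s_{a \succ D_\ML} \le \frac12$. Second, since $j^* \in \Supp(D_\ML)$, $j^*$ is a best response, so $s_{j^* \succ D_\ML} = \frac12$ exactly. Averaging over $k \sim D_\ML$ then gives
\[
\E{k \sim D_\ML}{s_{j^* \succ k} + s_{k \succ i^*}} = s_{j^* \succ D_\ML} + 1 - s_{i^* \succ D_\ML} \ge 1 .
\]
Hence some $k \in \Supp(D_\ML)$ has $s_{j^* \succ k} + s_{k \succ i^*} \ge 1$. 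This is a two-step path from $j^*$ to $i^*$ along which the comparisons are "balanced" in aggregate.

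Next I would set up the metric lemmas (the content I expect in \cref{sec:tournament}). The basic one is this: if a $\beta$ fraction of voters weakly prefer $x$ to $y$, then each such voter satisfies $d(x,y) \le 2d(y,v)$. Therefore $\beta\, d(x,y) \le 2\SC(y)$, and
\[
\SC(x) \le \SC(y) + d(x,y) \le \Bigl(1 + \frac{2}{\beta}\Bigr)\SC(y).
\]
Chaining this naively through $k$ with $a = s_{j^* \succ k}$ and $b = s_{k \succ i^*}$ gives $(1+2/a)(1+2/b)$, which is $25$ at $a = b = \frac12$. That is far too weak. So the real work is a sharper two-step lemma. For $d(j^*,k)$, I would split the voters preferring $j^*$ to $k$ according to whether they also prefer $k$ to $i^*$. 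I would then bound $d(j^*,i^*)$ through the triangle inequality using $\SC(i^*)$ directly, rather than routing through $\SC(k)$.

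The target is a bound of the form
\[
\SC(j^*) \le \SC(i^*) + \frac{c_1}{a}\SC(i^*) + \frac{c_2}{b}\SC(i^*).
\]
This should be derived in the biased-metric language, via \cref{thm:biased_metric} with $p_{j^*} = 1$. I would also combine it with the direct one-step bound $\SC(j^*) \le (1 + 2/p)\SC(i^*)$, where $p = s_{j^* \succ i^*}$.

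Finally I would split into cases on a threshold for $p$. If $p$ is large, the direct bound suffices. If $p$ is small, then $s_{i^* \succ j^*}$ is large. By \cref{lem:triangle}, $s_{i^* \succ j^*} \le s_{i^* \succ k} + s_{k \succ j^*}$, which forces the intermediate $k$ to be good on at least one side. I would then optimize the threshold together with the split of $a+b \ge 1$. The value $4+\sqrt{17}$ is the positive root of $x^2 - 8x - 1 = 0$, which suggests the optimum equalizes two linear-fractional bounds, and I would tune the case split to land exactly there.

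The main obstacle is the sharpened two-step lemma. The naive multiplicative chaining loses too much. Getting a bound additive in $1/a$ and $1/b$ with the right constants requires carefully charging each voter's distance in the biased metric to $\SC(i^*)$ rather than to $\SC(k)$. My first attempt there would be to use \cref{fac:d_to_s_1} and \cref{fac:d_to_s_2} to write $\SC(j^*)-\SC(i^*)$ as $\int_0^\infty s_{I_t\succ j^*}\diff t$. I would then bound the integrand pointwise using whether $k \in I_t$.
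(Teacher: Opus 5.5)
\textbf{There is a genuine gap at the selection step, which is where the paper does its main work.}

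Your averaging is correct, but it only produces some $k\in\Supp(D_\ML)$ with $a+b\ge 1$, where $a=s_{j^*\succ k}$ and $b=s_{k\succ i^*}$. That does not control $\min\{a,b\}$. Any bound routed through $k$ must blow up as $a\to 0$ or $b\to 0$; the available one, \cref{lem:two-hop-general}, gives $1+2\max\{1/b,(1-a)/a\}$.

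The two facts you use, $\E{k}{a_k}=\frac12$ and $\E{k}{b_k}\ge\frac12$, are consistent with half the mass on a $k$ with $(a,b)=(1,0)$ and half on a $k'$ with $(a,b)=(0,1)$. So no reweighting of this averaging can rescue it.

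The case split on $p=s_{j^*\succ i^*}$ does not help either. \cref{lem:triangle} gives $1-p\le(1-b)+(1-a)$, i.e.\ $a+b\le 1+p$. That is an upper bound on $a+b$, and it says nothing about the smaller of the two.

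\textbf{The missing idea is \cref{lem:supp_ML_two_steps}.} For every prescribed $\theta\in(0,\frac12)$ there is a $k$ with $s_{j^*\succ k}\ge\theta$ and $s_{k\succ i^*}\ge\frac12-\theta$. The paper proves this as follows. Take $Y=\{k: s_{j^*\succ k}\ge\theta\}$ and $X=\{k: s_{k\succ i^*}\ge\frac12-\theta\}$, and show $P(Y)>1-2\theta$ and $P(X)>2\theta$, which forces $X\cap Y\neq\varnothing$. Each bound combines \cref{lem:triangle} with \cref{fac:ML_comp}, namely $s_{D_\ML(I)\succ D_\ML(I^c)}=\frac12$ for every split of the mass. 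This is structural information about the whole equilibrium, not just the best-response property of $j^*$ or the value bound against $i^*$. It is exactly what rules out the unbalanced configuration above.

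\textbf{The two-hop step you flag as the main obstacle is not the hard part.} \cref{lem:two-hop-general} is imported from prior work. Using $s_{i^*\succ j^*}\le 1$ and $\min_{i\in I}s_{i\succ j^*}\le s_{k\succ j^*}\le 1-\theta\le\lambda\theta$, it gives $\lambda=\max\{\frac1{1/2-\theta},\frac{1-\theta}{\theta}\}$. Choosing $\theta=\frac{5-\sqrt{17}}4$ equalizes the two terms and yields $1+2\lambda=4+\sqrt{17}$. Neither the one-hop bound nor the case split on $p$ is needed.
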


To prove \cref{thm:const_dist_supp_ML}, we need the following two simple facts about Maximal Lotteries.
\begin{fact}[see, e.g., {\cite[Claim 1]{DBLP:journals/jacm/CharikarRWW24}}]
\label[fact]{fac:ML_comp}
For any $I \subseteq C$ with $0 < P(I) < 1$, it holds that
\[
s_{D_\ML(I) \succ D_\ML(I^c)} = s_{D_\ML(I^c) \succ D_\ML(I)} = \frac12.
\]
\end{fact}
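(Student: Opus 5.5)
The fact to prove is \cref{fac:ML_comp}: for $I$ with $0<P(I)<1$, $s_{D_\ML(I)\succ D_\ML(I^c)}=s_{D_\ML(I^c)\succ D_\ML(I)}=\frac12$. Here $P(I)$ denotes the probability $D_\ML$ assigns to $I$. The plan combines two ingredients: the equilibrium property of $D_\ML$ (\cref{thm:ML_distribution}) and bilinearity of $s_{D\succ D'}$ in the two distributions. Write $p\defeq P(I)\in(0,1)$, $A\defeq D_\ML(I)$, and $B\defeq D_\ML(I^c)$, so that $D_\ML=pA+(1-p)B$.

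\textbf{Step 1: two exact identities.} Since $s_{i\succ j}+s_{j\succ i}=1$ for all $i,j$ (using the convention $s_{j\succ j}=\frac12$), averaging gives
\[
s_{X\succ Y}+s_{Y\succ X}=1 \quad\text{for any distributions } X,Y.
\]
Applying this with $X=Y$ yields $s_{A\succ A}=s_{B\succ B}=\frac12$. Applying it with $X=A$, $Y=B$ yields $s_{A\succ B}+s_{B\succ A}=1$. Hence it suffices to prove $s_{A\succ B}=\frac12$.

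\textbf{Step 2: use the equilibrium property.} Apply \cref{thm:ML_distribution} with $D=A$, so $s_{D_\ML\succ A}\ge\frac12$. Expanding by bilinearity,
\[
s_{D_\ML\succ A}=p\,s_{A\succ A}+(1-p)\,s_{B\succ A}=\tfrac p2+(1-p)\,s_{B\succ A}.
\]
Since $1-p>0$, the inequality $\frac p2+(1-p)s_{B\succ A}\ge\frac12$ gives $s_{B\succ A}\ge\frac12$. The same argument with $D=B$ gives
\[
p\,s_{A\succ B}+\tfrac{1-p}{2}\ge\tfrac12,
\]
and since $p>0$ this gives $s_{A\succ B}\ge\frac12$.

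\textbf{Step 3: conclude.} Combining Step 2 with $s_{A\succ B}+s_{B\succ A}=1$ from Step 1 forces both quantities to equal $\frac12$.

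There is no real obstacle. The only care needed is at two points. First, the strict inequalities $0<p<1$ are required so that we can divide by $p$ and by $1-p$ in Step 2. Second, the convention $s_{j\succ j}=\frac12$ is what makes the complement identity hold in Step 1.
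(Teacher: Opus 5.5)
Your proof is correct and is the standard argument for this fact. The paper does not reprove it but cites \cite[Claim 1]{DBLP:journals/jacm/CharikarRWW24}, and it uses the same manipulation in approximate form in the proof of \cref{lem:repapx_ML_large_case}: splitting $D_\ML$ into $D_\ML(I)$ and $D_\ML(I^c)$, setting the diagonal term to $\frac12$ via the convention $s_{j\succ j}=\frac12$, comparing against the equilibrium guarantee, and upgrading the two one-sided inequalities to equalities via $s_{X\succ Y}+s_{Y\succ X}=1$.
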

\begin{fact}
\label[fact]{fac:support_vs_ML}
For any $j \in \Supp(D_\ML)$, it holds that
\[
s_{D_\ML \succ j} = s_{j \succ D_\ML} = \frac12.
\]
\end{fact}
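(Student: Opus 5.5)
The plan is a complementary-slackness argument. It uses only \cref{thm:ML_distribution} and the fact that pairwise comparisons are complementary.

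First, for any two candidates $i, j$ we have $s_{i \succ j} + s_{j \succ i} = 1$. For $i \neq j$ this holds because every voter has a strict total order, and for $i = j$ it holds by the convention $s_{j \succ j} = \frac12$. By linearity in the distribution, for every candidate $j$ and every distribution $D$ we get
\[
s_{j \succ D} + s_{D \succ j} = 1 .
\]
Applying \cref{thm:ML_distribution} with $D$ equal to the point mass on $j$ gives $s_{D_\ML \succ j} \geq \frac12$. Hence $s_{j \succ D_\ML} \leq \frac12$ for \emph{every} candidate $j \in C$.

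Second, I compute the self-comparison $s_{D_\ML \succ D_\ML}$. Write $p_i$ for the probability that $D_\ML$ assigns to $i$. Then
\[
s_{D_\ML \succ D_\ML} = \sum_{i,k} p_i p_k\, s_{i \succ k} .
\]
Pairing the $(i,k)$ term with the $(k,i)$ term and using complementarity, the whole sum is half of $\sum_{i,k} p_i p_k = 1$. So $s_{D_\ML \succ D_\ML} = \frac12$.

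Third, I expand the same quantity as an average over the first argument:
\[
\tfrac12 = s_{D_\ML \succ D_\ML} = \sum_{j} p_j\, s_{j \succ D_\ML} .
\]
Each term satisfies $s_{j \succ D_\ML} \leq \frac12$ and the weights $p_j$ sum to $1$, so the average can equal $\frac12$ only if $s_{j \succ D_\ML} = \frac12$ for every $j$ with $p_j > 0$, i.e.\ for every $j \in \Supp(D_\ML)$. Complementarity then gives $s_{D_\ML \succ j} = 1 - \frac12 = \frac12$, which is the claim.

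There is no real obstacle here. The only point needing care is that complementarity extends to the diagonal terms, which is exactly what the stipulation $s_{j \succ j} = \frac12$ ensures.
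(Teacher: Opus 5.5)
Your proof is correct, but it takes a different route from the paper's.

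\textbf{The paper's route.} It treats $p_j = 1$ separately. Otherwise it decomposes $s_{j \succ D_\ML} = p_j \cdot \frac12 + (1 - p_j)\, s_{j \succ D_\ML(C \setminus \set{j})}$. It then quotes \cref{fac:ML_comp} with $I = \set{j}$ to conclude that the cross term equals $\frac12$; that fact is imported from prior work without proof.

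\textbf{Your route.} You use only the value of the game, \cref{thm:ML_distribution}, against pure strategies, which gives $s_{j \succ D_\ML} \leq \frac12$ for all $j$. The averaging identity $\sum_j p_j\, s_{j \succ D_\ML} = s_{D_\ML \succ D_\ML} = \frac12$ then forces equality on the support. This is the standard fact that every support strategy of an equilibrium is a best response. It needs no case split. It also derives $s_{D_\ML \succ j} = \frac12$ explicitly, which the paper leaves implicit.

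\textbf{Your version is the more basic one.} Averaging your conclusion over $j \in I$ gives $s_{D_\ML(I) \succ D_\ML} = \frac12$. Expanding the left side as $\frac{P(I)}{2} + P(I^c)\, s_{D_\ML(I) \succ D_\ML(I^c)}$ recovers \cref{fac:ML_comp} whenever $0 < P(I) < 1$. So your argument could make this section self-contained. The paper's route is shorter only because it takes \cref{fac:ML_comp} as given, and it needs that fact anyway for \cref{lem:supp_ML_two_steps}.

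Your remark about the diagonal convention is apt. Both arguments rely on reading $s_{j \succ D}$ as $\sum_i p_i\, s_{j \succ i}$ with $s_{j \succ j} = \frac12$, exactly as the paper does in its own computation.
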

\begin{proof}[Proof of \cref{fac:support_vs_ML}]
If $p_j = 1$, the equality immediately holds since $D_\ML$ selects $j$ deterministically. Otherwise, we compute
\[
s_{j \succ D_\ML} = s_{j \succ j}p_j + s_{j \succ D_\ML(C \setminus\set{j})} \cdot \left(1-p_j\right)= \frac{p_j}2 + \frac{1-p_j}2 = \frac12,
\]
where the first equality follows from the law of total probability and the second follows from \cref{fac:ML_comp}.
\end{proof}

A key step in the proof of \cref{thm:const_dist_supp_ML} is the following lemma.
\begin{lemma}\label{lem:supp_ML_two_steps}
  In any election, for any candidate $j^* \in \Supp(D_\ML)$ and any constant $\theta \in \left(0, \frac12\right)$, there exists a candidate $k$ such that $s_{j^* \succ k} \geq \theta$ and $s_{k \succ i^*} \geq \frac12 - \theta$.
\end{lemma}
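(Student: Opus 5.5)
The plan is to argue by contradiction, using the equilibrium identities of $D_\ML$ together with the triangle inequality over strategies (\cref{lem:triangle}). Suppose that no candidate $k$ satisfies both inequalities. Then every $k\in\Supp(D_\ML)$ falls into at least one of two classes. Let $A=\set{k\in\Supp(D_\ML): s_{j^*\succ k}<\theta}$, and put every other support candidate in $B=\Supp(D_\ML)\setminus A$. By assumption, each $k\in B$ has $s_{k\succ i^*}<\frac12-\theta$. Let $a=P(A)$ be the mass $D_\ML$ places on $A$. The two anchors are $s_{j^*\succ D_\ML}=\frac12$, which holds by \cref{fac:support_vs_ML} since $j^*\in\Supp(D_\ML)$, and $s_{D_\ML\succ i^*}\geq\frac12$, which holds by \cref{thm:ML_distribution}.

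First we dispose of the degenerate cases. If $a=1$, then $s_{j^*\succ D_\ML}<\theta<\frac12$, a contradiction. If $a=0$, then $s_{D_\ML\succ i^*}<\frac12-\theta<\frac12$, again a contradiction. So $0<a<1$, and \cref{fac:ML_comp} applies to $A$ and $B=A^c$ (restricted to the support): \[s_{D_\ML(A)\succ D_\ML(B)}=s_{D_\ML(B)\succ D_\ML(A)}=\tfrac12.\]

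Next we derive an upper bound on $a$ from $j^*$. By \cref{lem:triangle}, \[s_{j^*\succ D_\ML(B)}\leq s_{j^*\succ D_\ML(A)}+s_{D_\ML(A)\succ D_\ML(B)}<\theta+\tfrac12,\] where $s_{j^*\succ D_\ML(A)}<\theta$ because it is an average of values below $\theta$. Then
\[
\tfrac12=s_{j^*\succ D_\ML}=a\,s_{j^*\succ D_\ML(A)}+(1-a)\,s_{j^*\succ D_\ML(B)}<a\theta+(1-a)\left(\tfrac12+\theta\right)=\tfrac12+\theta-\tfrac a2,
\]
which gives $a<2\theta$.

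Symmetrically, we derive a lower bound on $a$ from $i^*$. By \cref{lem:triangle}, \[s_{D_\ML(A)\succ i^*}\leq s_{D_\ML(A)\succ D_\ML(B)}+s_{D_\ML(B)\succ i^*}<\tfrac12+\tfrac12-\theta=1-\theta.\] Then
\[
\tfrac12\leq s_{D_\ML\succ i^*}<a(1-\theta)+(1-a)\left(\tfrac12-\theta\right)=\tfrac12-\theta+\tfrac a2,
\]
which gives $a>2\theta$. This contradicts $a<2\theta$ and proves the lemma.

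The main obstacle is finding the right way to couple the two classes. Crude bounds, such as $s_{k\succ i^*}\leq 1$, give inequalities on $a$ that do not conflict. The key is to use \cref{fac:ML_comp} to route each triangle inequality through the other class, with $A$ as the intermediary for $j^*$ and $B$ as the intermediary for $i^*$. The remaining care is bookkeeping of strict versus non-strict inequalities and checking the edge cases $a\in\set{0,1}$, where \cref{fac:ML_comp} does not apply.
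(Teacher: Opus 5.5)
Your proof is correct and is essentially the paper's argument: it uses the same two anchors $s_{j^*\succ D_\ML}=\frac12$ and $s_{D_\ML\succ i^*}\geq\frac12$, and the same use of \cref{fac:ML_comp} to route each triangle inequality through the complementary class. The only cosmetic difference is that you run both estimates on the single partition $(A,B)$ (the paper's $(C\setminus Y, Y)$ restricted to the support) and get $2\theta<P(A)<2\theta$ directly, whereas the paper bounds $P(Y)>1-2\theta$ and $P(X)>2\theta$ for two separate sets and concludes from $X\cap Y=\varnothing$.
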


\begin{proof}[Proof of \cref{lem:supp_ML_two_steps}]
Let $D = D_\ML$ for convenience. Assume for contradiction that no such candidate $k$ exists. Define the following two sets of candidates:
\[
X \defeq \set{k \in C:s_{k \succ i^*} \geq \frac12 - \theta} ~~~\text{ and }~~~ Y \defeq \set{k \in C:s_{j ^ * \succ k} \geq \theta}.
\]
By our assumption, these sets are disjoint. Our goal is to show that both $P(X)$ and $P(Y)$ are sufficiently large so that their total probability exceeds $1$, yielding a contradiction.

\paragraph{Proving $P(Y)$ is large.}
By the definition of $Y$, for any candidate $k \in C \setminus Y$, the probability that $j^*$ beats $k$ is less than $\theta < \frac{1}{2}$. However, since the candidate $j^*$ has positive support, we know $s_{j^* \succ D} \geq \frac12$ from \cref{fac:support_vs_ML}. Therefore, the probability that $D$ selects from $Y$ should be large.

We now formalize this intuition. First, we claim $P(Y) \neq 0$. Otherwise, (if $P(Y) = 0$), we have $D = D(C \setminus Y)$, and thus
\[
s_{j ^ * \succ D} = s_{j ^ * \succ D(C \setminus Y)} \mkern-10mu \stackrel{ \text{\cref{fac:set2dis}} }{\leq} \mkern-10mu \max_{k \notin Y}{s_{j^* \succ k}} < \theta < \frac 12,
\]
which contradicts \cref{fac:support_vs_ML}. Next, if $P(C \setminus Y) = 0$, then $P(Y) = 1$, which is good for our purpose of showing that $P(Y)$ is large. It therefore remains to consider the nontrivial case where $0 < P(Y) < 1$. In this case, both $D(Y)$ and $D(C \setminus Y)$ are well-defined. We can get
\begin{align*}
  \frac12 \leq s_{j^* \succ D} &= \bigl(1 - P(Y)\bigr) \cdot s_{j^* \succ D(C \setminus Y)} + P(Y) \cdot s_{j^* \succ D(Y)} \tag{law of total probability}\\
  &\leq \bigl(1 - P(Y)\bigr) \cdot s_{j^* \succ D(C \setminus Y)} + P(Y) \cdot \left(s_{j^* \succ D(C \setminus Y)} + s_{D(C \setminus Y) \succ D(Y)}\right)\tag{\cref{lem:triangle}}\\
  &= s_{j^* \succ D(C\setminus Y)} + P(Y) \cdot s_{D(C \setminus Y) \succ D(Y)}\\
  &< \theta + \frac{P(Y)}2. \tag{\cref{fac:set2dis,fac:ML_comp}}
\end{align*} Rearranging yields $P(Y) > 1 - 2\theta$.

\paragraph{Proving $P(X)$ is large.}
By the definition of $X$, for any candidate $k \in C \setminus X$, the probability that $k$ beats $i^*$ is less than $\frac{1}{2} - \theta < \frac{1}{2}$. However, by \cref{thm:ML_distribution}, we know $s_{D \succ i^*} \geq \frac12$. Therefore, the probability that $D$ selects from $X$ should be large.

The intuition is formalized as follows. We first rule out the case $P(X) = 0$. Indeed, if $P(X) = 0$, then $D = D(C \setminus X)$ and we have
\[
s_{D \succ i^*} = s_{D(C \setminus X) \succ i ^*} \mkern-10mu \stackrel{ \text{\cref{fac:set2dis}} }{\leq} \mkern-10mu \max_{k \notin X} s_{k \succ i^*} < \frac12 - \theta < \frac12,
\]
contradicting \cref{thm:ML_distribution}. Next, if $P(C \setminus X) = 0$, then $P(X) = 1$, good for our purpose of showing $P(X)$ is large. Hence, it remains to consider the nontrivial case $0 < P(X) < 1$. Here, both $D(X)$ and $D(C\setminus X)$ are well-defined, and we get
\begin{align*}
  \frac12 \leq s_{D \succ i^*} &= \bigl(1 - P(X)\bigr) \cdot s_{D(C \setminus X) \succ i^*} + P(X) \cdot s_{D(X) \succ i^*} \tag{law of total probability}\\
  &\leq \bigl(1 - P(X)\bigr) \cdot s_{D(C \setminus X) \succ i^*} + P(X) \cdot \left(s_{D(X) \succ D(C \setminus X)} + s_{D(C \setminus X) \succ i^*}\right)\tag{\cref{lem:triangle}}\\
  &= s_{D(C\setminus X) \succ i^*} + P(X) \cdot s_{D(X) \succ D(C \setminus X)}\\
  &< \frac12 - \theta + \frac{P(X)}2.\tag{\cref{fac:set2dis,fac:ML_comp}}
\end{align*} Rearranging yields $P(X) > 2\theta$.

Combining the inequalities of $P(Y) > 1 - 2\theta$ and $P(X) > 2\theta$ gives a contradiction: $P(X) + P(Y) > 1$. Therefore, the desired candidate $k$ must exist, completing the proof.
\end{proof}

In \cref{sec:tournament}, we will state several lemmas (\cref{lem:one-hop,lem:two-hop-var,lem:two-hop-balance}) that bound the ratio of social costs between two candidates, given their local relationships in the weighted tournament graph (which summarizes all information of $s_{j \succ k}$ for all pairs of candidates $j, k \in C$). In particular, \cref{lem:two-hop-balance} (which is a bound for $\SC(j^*) / \SC(i^*)$ given conditions for $s_{j^* \succ k}$ and $s_{k \succ i^*}$) can be combined with \cref{lem:supp_ML_two_steps} to immediately complete the proof of \cref{thm:const_dist_supp_ML}. We now present the proof of \cref{thm:const_dist_supp_ML} assuming \cref{lem:two-hop-balance}, and defer the proof of \cref{lem:two-hop-balance} to \cref{sec:tournament}.

\begin{proof}[Proof of \cref{thm:const_dist_supp_ML}]
  By \cref{lem:supp_ML_two_steps}, for any constant $\theta \in \left(0, \frac12\right)$, there exists a candidate $k$ such that $s_{j^* \succ k}\geq \theta$ and $s_{k \succ i^*} \geq \frac12 - \theta$. Applying \cref{lem:two-hop-balance}, the distortion of $j^*$ is therefore at most the minimum value of
\[
1 + 2 \max\set{\frac1{\frac12 - \theta}, \frac{1 - \theta}{\theta}} \qquad \text{subject to $\theta \in \left(0, \frac12\right)$.}
\]
Taking $\theta = \frac{5-\sqrt{17}}4$ minimizes the above expression, and gives a distortion of $4 + \sqrt{17}$.
\end{proof}

\subsection{Ratio of Social Costs in a Weighted Tournament Graph}
\label{sec:tournament}

We will need three simple lemmas which bound the distortion of candidates using local conditions on the weights of edges in the tournament graph. These can all be viewed as corollaries of the following result of \cite{DBLP:conf/sigecom/CharikarRTW25}.

\begin{lemma}[{\cite[Corollary 4.3]{DBLP:conf/sigecom/CharikarRTW25}}]\label[lemma]{lem:two-hop-general}
$\SC(j) \leq (1 + 2\lambda)\SC(i^*)$ if candidate $j$ and $k$ satisfy that \[s_{i^* \succ j} \leq \lambda s_{k \succ i^*}\] and for all partitions of the candidates $I \sqcup J = C$ such that $i^*, k \in I$ and $j \in J$, it holds
\[
\min_{i \in I}s_{i \succ j} \leq \lambda\max_{\ell \in J}\set{s_{\ell \succ i^*}, s_{\ell \succ k}}.
\]
\end{lemma}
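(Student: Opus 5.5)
We apply \cref{thm:biased_metric} to the point mass $D = j$. Fix a biased metric induced by $(x_1,\dots,x_m)$ with $x_{i^*}=0$. We must show $\int_0^\infty \ell(j,t)\,\diff t \le \lambda R$, where $\ell(j,t) = s_{I_t \succ j}$ for $t < x_j$ and $\ell(j,t) = 0$ otherwise. Pointwise comparison with $r(t)$ will not work in general. Instead, we will find a function $g(t)$ with $\ell(j,t) \le \lambda g(t)$ for all $t$ and then prove $\int_0^\infty g \le R$ by a voter-by-voter accounting. Since $R = \E{v\sim V}{2d(i^*,v)}$, it suffices to write $g(t) = \Pr{v\sim V}{t \in T_v}$ for sets $T_v$ of measure at most $2d(i^*,v)$.

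We split the range $t < x_j$ into two regimes according to whether $k \in I_t$.

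\emph{Regime $t < x_k$.} Here $i^* \in I_t$, so by \cref{fac:set2dis} $\ell(j,t) \le s_{i^*\succ j} \le \lambda s_{k\succ i^*}$. We set $g(t) = s_{k\succ i^*}$, and for each voter $v$ with $k \succ_v i^*$ we put $[0,x_k) \subseteq T_v$.

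\emph{Regime $x_k \le t < x_j$.} Take $I = I_t$ and $J = I_t^c$. Then $i^*, k \in I$, $j \in J$, and $\ell(j,t) = s_{I\succ j} \le \min_{i\in I} s_{i\succ j}$. The hypothesis therefore gives $\ell(j,t) \le \lambda \max_{\ell\in J}\set{s_{\ell\succ i^*}, s_{\ell\succ k}}$. We bound this maximum by
\[
g(t) = \Pr{v\sim V}{\exists \ell \text{ with } x_\ell > t \text{ and } (\ell\succ_v i^* \text{ or } \ell \succ_v k)},
\]
so $T_v \cap [x_k,\infty)$ is the set of such $t \ge x_k$.

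It remains to check $|T_v| \le 2d(i^*,v)$, using $2d(i^*,v) \ge x_a - x_b$ whenever $a \succ_v b$. We split voters into two cases.
\begin{itemize}
\item If $k \succ_v i^*$, then any $\ell \succ_v k$ also satisfies $\ell \succ_v i^*$. So the second-regime part of $T_v$ is contained in $[x_k, M_v)$ with $M_v = \max_{\ell \succeq_v i^*} x_\ell \ge x_k$. Together with $[0,x_k)$, the total measure is at most $M_v \le 2d(i^*,v)$.
\item If $i^* \succ_v k$, the first regime contributes nothing. The condition "$\ell\succ_v i^*$" then implies "$\ell \succ_v k$", so $T_v \subseteq [x_k, \max_{\ell\succeq_v k} x_\ell)$. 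This set has measure at most $\max_{\ell \succeq_v k}(x_\ell - x_k) \le 2d(i^*,v)$.
\end{itemize}
Integrating gives $L(j) \le \lambda \int_0^\infty g \le \lambda R$, and \cref{thm:biased_metric} yields $\SC(j) \le (1+2\lambda)\SC(i^*)$.

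The main obstacle is the second regime. The term $s_{\ell \succ k}$ does not pointwise lower-bound $r(t)$, because $x_\ell - x_k$ may be at most $t$. This is exactly why the argument charges to each voter's distance budget $2d(i^*,v)$ instead of comparing pointwise with $r(t)$. The delicate point is the first case above: there one must check that the two regimes together charge a voter with $k \succ_v i^*$ no more than $M_v$ in total, rather than $x_k + M_v$.
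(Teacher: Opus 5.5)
Your proof is correct. The paper gives no proof of this lemma: it imports it as Corollary~4.3 of \cite{DBLP:conf/sigecom/CharikarRTW25}. So there is no in-paper argument to match, and what you have is a self-contained derivation from \cref{thm:biased_metric}.

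The main difference from the paper's own tools is this. The paper's in-house sufficient condition, \cref{cor:biased_metric}, is the pointwise comparison $\ell(D,t)\le\lambda r(t)$. As you note, that comparison breaks for $x_k\le t<x_j$, because $s_{\ell\succ k}$ need not be dominated by $r(t)$. The obvious repair, bounding regime two by the shifted curve $r(t-x_k)$, double-charges voters with $k\succ_v i^*$ and only yields $\lambda\bigl(x_k s_{k\succ i^*}+R\bigr)$. Your per-voter charging with the sets $T_v$ removes exactly this double count. Voters with $k\succ_v i^*$ get measure at most $M_v$, and the others get at most $\max_{\ell\succeq_v k}(x_\ell-x_k)$. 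Both bounds follow from $2d(i^*,v)=\max_{a\succeq_v b}(x_a-x_b)$.

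Two small points of rigor:
\begin{itemize}
\item You are using the per-instance form of \cref{thm:biased_metric}, with a fixed profile and a fixed candidate $j$. That form is what its proof plus the biased-metric reduction actually gives, and the paper uses it the same way elsewhere.
\item Your voter case split omits $k=i^*$. This case does arise in the paper's application: \cref{lem:supp_ML_two_steps} may return $k=i^*$, since $s_{i^*\succ i^*}=\frac12$. There the first regime is empty and your second bullet goes through unchanged, giving $T_v\subseteq[0,\max_{\ell\succeq_v i^*}x_\ell)$. Add a sentence covering it.
\end{itemize}
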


The first two lemmas are already proven explicitly in prior work. Proofs for both can be found in \cite{DBLP:conf/sigecom/CharikarRTW25}).

\begin{lemma}[Implicit in \cite{DBLP:conf/ec/MunagalaW19} and Corollary of {\cite[Corollary 5.1]{DBLP:conf/aaai/000120a}}]
\label[lemma]{lem:one-hop}
If a candidate $j$ satisfies that $s_{j \succ i^*} \geq \theta$ for some $\theta \in (0, 1]$, then
\[
\SC(j) \leq \left(\frac 2\theta - 1\right) \cdot \SC(i^*).
\]
\end{lemma}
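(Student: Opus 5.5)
We derive \cref{lem:one-hop} as a corollary of \cref{lem:two-hop-general}, instantiated with $k = i^*$ (the intended ``one-hop'' case) and $\lambda = \frac1\theta - 1 = \frac{1-\theta}{\theta}$. With this choice $1 + 2\lambda = \frac2\theta - 1$, exactly the claimed bound. With $k = i^*$, the quantity $\max_{\ell \in J}\set{s_{\ell \succ i^*}, s_{\ell \succ k}}$ collapses to $\max_{\ell \in J} s_{\ell \succ i^*}$. The constraint $i^*, k \in I$ just says $i^* \in I$. So two hypotheses of \cref{lem:two-hop-general} remain to be verified.

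The first hypothesis is $s_{i^* \succ j} \leq \lambda s_{i^* \succ i^*}$, and this reading is the step I expect to be the main obstacle. Taken literally with the convention $s_{i^*\succ i^*} = \frac12$, it would require $1 - s_{j\succ i^*} \le \frac{1-\theta}{2\theta}$, which is not implied by $s_{j \succ i^*} \geq \theta$ in general. It is therefore better to take $k = j$: then $s_{k \succ i^*} = s_{j \succ i^*} \ge \theta$, and
\[
s_{i^* \succ j} = 1 - s_{j \succ i^*} \le 1 - \theta = \lambda\theta \le \lambda s_{j \succ i^*},
\]
using that preferences are strict so $s_{i^*\succ j} + s_{j \succ i^*} = 1$ when $j \neq i^*$. (If $j = i^*$ the claim is trivial.)

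For the partition condition we still take $k = j$. However, $k \in I$ and $j \in J$ cannot both hold when $k = j$, so this degenerate instance must be handled directly. The fallback is a direct biased-metric argument via \cref{cor:biased_metric} applied to the deterministic rule picking $j$. We need, for every $J \ni j$ with $i^* \notin J$,
\[
s_{i^*\succ j} \le \lambda\bigl(1 - s_{i^*\succ J}\bigr).
\]
Since $S_{i^* \succ J} \subseteq S_{i^*\succ j}$, we get $1 - s_{i^* \succ J} \ge 1 - s_{i^*\succ j} = s_{j\succ i^*} \ge \theta$. Hence
\[
\lambda\bigl(1 - s_{i^*\succ J}\bigr) \ge \lambda\theta = 1-\theta \ge 1 - s_{j\succ i^*} = s_{i^*\succ j}.
\]
For $J \not\ni j$ the left side of \cref{cor:biased_metric} is zero. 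So \cref{cor:biased_metric} with $p_j = 1$ gives distortion at most $1 + 2\lambda = \frac2\theta - 1$.

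I would present this direct argument via \cref{cor:biased_metric} as the actual proof, since it is self-contained and avoids the degenerate instantiation of \cref{lem:two-hop-general}. The only subtle point is the monotonicity $s_{i^*\succ J} \le s_{i^*\succ j}$ for $j \in J$, which is immediate from \cref{fac:set2dis} (its swapped form, or directly from set inclusion).
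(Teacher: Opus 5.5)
Your proof is correct, and it differs from the paper's, which gives no argument of its own: it defers to \cite{DBLP:conf/sigecom/CharikarRTW25} and presents the lemma as a corollary of \cref{lem:two-hop-general}.

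Your verification of \cref{cor:biased_metric} for the point mass at $j$, with $\lambda = \frac{1-\theta}{\theta}$, works as follows. The case $j \notin J$ is trivial. The case $j \in J$ follows from $s_{i^*\succ J} \le s_{i^*\succ j}$ and $s_{i^*\succ j} = 1 - s_{j\succ i^*} \le 1-\theta = \lambda\theta$. Applying the corollary to a single profile and a fixed $i^*$ is legitimate, because its proof establishes $\ell(D,t) \le \lambda r(t)$ separately for each fixed profile and biased metric.

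You were wrong, however, that the instantiation $k = j$ of \cref{lem:two-hop-general} "must be handled directly." Since $I$ and $J$ are disjoint, no partition puts $j$ in both, so the partition hypothesis holds vacuously. The remaining hypothesis, $s_{i^*\succ j} \le \lambda s_{j\succ i^*}$, is exactly the inequality you already checked. That is a two-line derivation and is presumably the sense in which the paper calls the lemma a corollary.

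That route does rely on trusting the cited lemma in this edge case. Yours relies only on \cref{cor:biased_metric} and the biased-metric reduction, both established in the paper, so it is the more self-contained of the two.

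A third, fully elementary route avoids biased metrics altogether. For $v \in S_{j\succ i^*}$ we have $d(i^*,j) \le d(i^*,v) + d(j,v) \le 2d(i^*,v)$. Averaging over these voters gives $d(i^*,j) \le \frac{2}{s_{j\succ i^*}}\SC(i^*)$. Combining this with $\SC(j) \le \SC(i^*) + (1 - s_{j\succ i^*})\, d(i^*,j)$ yields the bound $\frac{2}{\theta} - 1$.
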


\begin{lemma}[{\cite[Corollary 4.4]{DBLP:conf/sigecom/CharikarRTW25}}]
\label[lemma]{lem:two-hop-var}
If candidates $j$ and $k$ satisfy that $s_{j \succ k} \geq \theta$ and $s_{k \succ i^*} \geq \theta$ for some $\theta \in (0, 1]$, then
\[
\SC(j) \leq \left(\frac 4\theta - 3\right) \cdot \SC(i^*).
\]
\end{lemma}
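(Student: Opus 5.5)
The plan is to obtain \cref{lem:two-hop-var} directly from \cref{lem:two-hop-general} with the choice
\[
\lambda \defeq \frac{2}{\theta} - 2 = \frac{2(1-\theta)}{\theta},
\]
so that $1 + 2\lambda = \frac4\theta - 3$ is exactly the claimed bound. Note that $\lambda\theta = 2 - 2\theta$, and this identity drives both conditions. It then remains to verify the two hypotheses of \cref{lem:two-hop-general} for the pair $(j,k)$ given in the statement.

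\textbf{First condition.} We need $s_{i^*\succ j} \le \lambda s_{k\succ i^*}$. The naive bound $s_{i^*\succ j}\le 1$ is not enough when $\theta > \frac12$. Instead I will route through $k$ using \cref{lem:triangle}:
\[
s_{i^*\succ j} \le s_{i^*\succ k} + s_{k\succ j} = (1 - s_{k\succ i^*}) + (1 - s_{j\succ k}) \le 2 - 2\theta = \lambda\theta \le \lambda s_{k\succ i^*}.
\]
Here I use that $s_{a\succ b} + s_{b\succ a} = 1$ for distinct $a,b$, which holds because preferences are strict. The degenerate coincidences ($k = i^*$ or $k = j$) only make the inequalities easier, since then the corresponding term is $\frac12$ and the hypothesis forces $\theta\le\frac12$. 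This step is the only mildly delicate one, because it is where the triangle inequality is needed rather than a trivial bound.

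\textbf{Second condition.} Fix any partition $I\sqcup J = C$ with $i^*, k\in I$ and $j\in J$. Since $k\in I$, we have
\[
\min_{i\in I}s_{i\succ j}\le s_{k\succ j} = 1 - s_{j\succ k}\le 1-\theta.
\]
On the other side, choosing $\ell = j\in J$ gives
\[
\lambda\max_{\ell\in J}\set{s_{\ell\succ i^*}, s_{\ell\succ k}} \ge \lambda s_{j\succ k}\ge \lambda\theta = 2-2\theta \ge 1-\theta.
\]
Hence the partition condition holds.

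\textbf{Conclusion and edge case.} \cref{lem:two-hop-general} now yields $\SC(j)\le(1+2\lambda)\SC(i^*) = \left(\frac4\theta-3\right)\SC(i^*)$. The edge case $\theta=1$ gives $\lambda=0$. Both conditions still go through, because the chains above give $s_{i^*\succ j}\le 0$ and $\min_{i\in I}s_{i\succ j}\le 0$. So the argument needs no separate treatment.
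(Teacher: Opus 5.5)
Your proof is correct and follows the route the paper intends: the paper presents \cref{lem:two-hop-var} as a corollary of \cref{lem:two-hop-general} (deferring details to the cited work), and your verification with $\lambda = \frac{2}{\theta} - 2$ mirrors its proof of \cref{lem:two-hop-balance}. The one extra step is bounding $s_{i^*\succ j} \le s_{i^*\succ k} + s_{k\succ j} \le 2 - 2\theta$ via \cref{lem:triangle} instead of using the trivial bound $s_{i^*\succ j} \le 1$; this is exactly what is needed when $\theta > \frac12$, and you carry it out correctly, including the degenerate and $\theta = 1$ cases.
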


\begin{lemma}
\label[lemma]{lem:two-hop-balance}
If candidates $j$ and $k$ satisfy that $s_{j \succ k} \geq \theta$ and $s_{k \succ i^*} \geq \frac12 - \theta$ for some $\theta \in (0, \frac{1}{2})$, then
\[
\SC(j) \leq \left(1 + 2 \max\set{\frac1{\frac12 - \theta}, \frac{1 - \theta}{\theta}}\right) \cdot \SC(i^*).
\]
\end{lemma}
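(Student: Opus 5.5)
The plan is to derive the statement directly from \cref{lem:two-hop-general}, instantiated with the same $j$ and $k$ and with
\[
\lambda \defeq \max\set{\frac1{\frac12 - \theta}, \frac{1-\theta}{\theta}}.
\]
Then $1 + 2\lambda$ is exactly the claimed bound, and it remains to verify the two hypotheses of \cref{lem:two-hop-general}. Each is a one-line estimate.

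\emph{First hypothesis.} We need $s_{i^* \succ j} \leq \lambda s_{k \succ i^*}$. Since $s_{i^* \succ j} \leq 1$ and $s_{k \succ i^*} \geq \frac12 - \theta > 0$, we have
\[
\lambda s_{k \succ i^*} \geq \frac{1}{\frac12-\theta}\left(\frac12 - \theta\right) = 1 \geq s_{i^* \succ j}.
\]
This also covers the degenerate case $k = i^*$, where the convention $s_{i^* \succ i^*} = \frac12 \geq \frac12 - \theta$ applies.

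\emph{Second hypothesis.} Fix any partition $I \sqcup J = C$ with $i^*, k \in I$ and $j \in J$. Then $k \neq j$, so $s_{k \succ j} = 1 - s_{j \succ k} \leq 1 - \theta$ because preferences are strict. Bound the left side by choosing $i = k$:
\[
\min_{i \in I} s_{i \succ j} \leq s_{k \succ j} \leq 1 - \theta.
\]
Bound the right side by choosing $\ell = j \in J$:
\[
\max_{\ell \in J}\set{s_{\ell \succ i^*}, s_{\ell \succ k}} \geq s_{j \succ k} \geq \theta.
\]
Since $\lambda \geq \frac{1-\theta}{\theta}$, it follows that $1 - \theta \leq \lambda \theta$, which gives the required inequality.

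Applying \cref{lem:two-hop-general} then yields $\SC(j) \leq (1+2\lambda)\SC(i^*)$. There is no real obstacle here. The only care needed is the case bookkeeping: $k = j$ cannot occur in the second condition because $k \in I$ and $j \in J$, and $k = i^*$ is handled by the $s_{j \succ j} = \frac12$ convention.
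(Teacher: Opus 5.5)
Your proposal is correct and follows essentially the same route as the paper: instantiate \cref{lem:two-hop-general} with $\lambda = \max\set{\frac{1}{\frac12-\theta}, \frac{1-\theta}{\theta}}$ and verify both hypotheses. The paper uses the same one-line bounds: $s_{i^*\succ j}\le 1 \le \lambda\left(\frac12-\theta\right)$ for the first, and $\min_{i\in I}s_{i\succ j}\le s_{k\succ j}\le 1-\theta\le\lambda\theta\le\lambda s_{j\succ k}$ for the second.
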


\begin{proof}[Proof of \cref{lem:two-hop-balance}]
On the one hand, if $\lambda \geq \frac1{\frac12 - \theta}$, then we have
\[
s_{i^* \succ j} \leq 1 \leq \left(\frac12 - \theta\right)\cdot \lambda \leq\lambda s_{k \succ i^*}.
\]

On the other hand, if $\lambda \geq \frac{1 - \theta}{\theta}$, then we have
\[
\min_{i \in I}{s_{i \succ j}} \mkern-3mu \stackrel{k \in I}{\leq} \mkern-3mu  s_{k \succ j} \leq 1 - \theta \leq \lambda \theta \leq \lambda s_{j \succ k}  \mkern-3mu \stackrel{j \in J}{\leq} \mkern-3mu  \lambda\max_{\ell \in J}\set{s_{\ell \succ i^*}, s_{\ell \succ k}}. 
\]

Therefore, by \cref{lem:two-hop-general}, for any $\lambda \geq \max\set{\frac1{\frac12 - \theta}, \frac{1 - \theta}{\theta}}$, we have $\SC(j) \leq (1 + 2\lambda)\SC(i^*)$. This completes the proof.
\end{proof}

\section{RepApx Maximal Lotteries Get Distortion Near \texorpdfstring{$3$}{3}}
\label{sec:ml_distortion}
In this section, we prove that the metric distortion of an $\veps^2$-RepApx Maximal Lottery is $3 + O(\veps)$.

\begin{theorem}[Metric Distortion of $\veps^2$-RepApx Maximal Lotteries]\label{thm:distortion_repapx_ML}In any election and for any $\veps < 1/\sqrt{10}$, an $\veps ^ 2$-RepApx Maximal Lottery $D_\eML{\veps ^ 2}$ has metric distortion at most $3 + 28\veps$.
\end{theorem}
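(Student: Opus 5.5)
The plan is to verify the condition of \cref{thm:biased_metric} with $\lambda = 1 + 14\veps$, which gives distortion $1 + 2\lambda = 3 + 28\veps$. Write $D = D_\eML{\veps^2}$, fix a biased metric given by $(x_1,\dots,x_m)$, and let $J_t \defeq I_t^c$. The starting point is to redo the proof that an exact Maximal Lottery has distortion $3$, this time keeping track of the additive error. Applying the RepApx condition to $a = i^*$ gives $s_{i^* \succ D} \le \frac12 + \veps^2$, i.e.\ $s_{D \succ i^*} \ge \frac12 - \veps^2$. We then aim for a pointwise inequality of the form
\[
\ell(D,t) \le \sum_{j \in J_t} s_{I_t \succ j}\, p_j \le r(t) + c\,\veps^2
\]
for an absolute constant $c$. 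To get it, condition $D$ on $J_t$ and on $I_t$. Use \cref{lem:triangle} and \cref{fac:set2dis}, exactly as in the proof of \cref{lem:supp_ML_two_steps}. Use $1 - s_{i^* \succ J_t} \ge s_{D(J_t) \succ i^*}$ to lower bound $r(t)$ (this is the containment from the proof of \cref{cor:biased_metric}). For the required analogue of \cref{fac:ML_comp} for $D$, namely $s_{D(I_t) \succ D(J_t)}$ close to $\frac12$, we have two routes. One is the support condition $\Supp(D) \subseteq \Supp(D_\ML)$ combined with \cref{fac:support_vs_ML}. The other is to write $s_{a \succ D} \le \frac12 + \veps^2$ for all $a$, average over $a \sim D(I_t)$, and get $s_{D(I_t) \succ D(J_t)} \le \frac12 + \veps^2/P(J_t)$.

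That last step exposes the difficulty: the error is not uniformly $O(\veps^2)$. It degrades like $\veps^2/P(J_t)$, and an additive error is only harmless where $r(t)$ is not tiny. So we split the range of $t$ into two regimes.

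\emph{Main regime.} Suppose $P(J_t) \ge \veps$, or more generally that the relevant mass is at least of order $\veps$. Then the additive error is $O(\veps)\cdot P(J_t)$. We would also like $P(J_t)$ to be controlled by $r(t)$, e.g.\ $r(t) \ge s_{D(J_t)\succ i^*}$, which is at least a constant fraction times $P(J_t)$ up to errors. If so, this regime contributes $\ell(D,t) \le (1 + O(\veps))\, r(t)$ pointwise.

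\emph{Tail regime.} Suppose $P(J_t) < \veps$. Then $\ell(D,t) \le \sum_{j\in J_t} p_j\, s_{i^*\succ j} \le P(J_t)$, which is small, but $r(t)$ could be small too. Here we use the per-candidate robustness from \cref{sec:ml_support}. Every $j \in \Supp(D) \subseteq \Supp(D_\ML)$ satisfies \cref{lem:supp_ML_two_steps}. Through \cref{lem:two-hop-balance} and \cref{lem:two-hop-general}, this gives $\int_0^\infty s_{I_t \succ j}\,\diff t \le O(1)\cdot R$ for each such $j$. We will integrate these bounds in $t$ rather than pointwise. Let $t_0$ be the threshold where $P(J_t)$ drops below $\veps$. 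The total tail contribution is at most $\sum_{j : x_j > t_0} p_j \cdot O(1) R$, where the sum is the mass of candidates that are far from $i^*$. Those candidates are exactly $J_{t_0}$, whose mass is below $\veps$, so the tail contributes $O(\veps) R$.

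Adding the two regimes gives $L(D) \le (1 + O(\veps)) R$. We then track constants, using $\veps < 1/\sqrt{10}$ to keep quantities like $\frac12 - \veps^2 - \veps$ bounded away from $0$, so that the constant comes out as $14$.

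The main obstacle is the main regime: proving the approximate inequality $\ell(D,t) \le (1+O(\veps))\, r(t)$ when $P(J_t)$ is moderate. The exact Maximal Lottery argument relies on the identity of \cref{fac:ML_comp}, and only a one-sided $\veps^2$-approximate version of it holds for $D$. The first thing to try is to average the RepApx inequality over $a \sim D(I_t)$ and $a = i^*$ separately and combine them with \cref{lem:triangle}. If the losses still scale as $\veps^2/P(J_t)$ with too large a constant, we would move the threshold between the regimes, e.g.\ to $P(J_t) \ge \veps/c'$ for a suitable constant $c'$, and rebalance the constants.
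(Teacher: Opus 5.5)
Your tail regime is correct and matches the paper's \cref{lem:repapx_ML_small_case}. With $\tau$ the least $t$ such that $P(I_t^c)\le\veps$, only supported candidates in $I_\tau^c$ contribute for $t\ge\tau$. Each has $\SC(j)-\SC(i^*)\le(3+\sqrt{17})\SC(i^*)$ by \cref{thm:const_dist_supp_ML}, so this part is at most $4\veps R$.

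The gap is in the main regime. The multiplicative bound $\ell(D,t)\le(1+O(\veps))\,r(t)$ does not follow from your estimates. Average the RepApx condition over $a\sim D(I_t)$, and use $a=i^*$ together with \cref{lem:triangle}. Writing $P=P(I_t^c)$, this gives
\[
\ell(D,t)\le \frac{P}{2}+\veps^2 \qquad\text{and}\qquad r(t)\ge s_{D(I_t^c)\succ i^*}\ge \frac{P}{2}-\frac{\veps^2}{P}.
\]
The loss on the $r$ side is an absolute $\veps^2/P$, not $O(\veps)\cdot P$. For $\veps< P\le\sqrt{2}\,\veps$ the lower bound on $r(t)$ is vacuous, while $\ell(D,t)$ can still be of order $\veps$. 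Your first route to an analogue of \cref{fac:ML_comp} does not help either, since \cref{fac:support_vs_ML} compares candidates with $D_\ML$, not with $D$. Lowering the threshold to $\veps/c'$, as you suggest, only makes $\veps^2/P$ larger. Raising it to $C\veps$ improves the main-regime factor to $1+O(C^{-2})$, but the tail grows to $O(C\veps)R$. Balancing the two with these estimates yields an error of order $\veps^{2/3}$, not $28\veps$.

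The paper keeps the error additive and bounds the length of the region where it is paid. For $t<\tau$, \cref{lem:repapx_ML_large_case} gives $\ell(D,t)\le r(t)+2\veps$. Indeed, the two displays give $\ell-r\le\veps^2+\veps^2/P<2\veps$ when $P>\veps$, and the case $P(I_t)=0$, where $D(I_t)$ is undefined, is handled separately. Integrating the lift costs $2\veps\tau$.

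The idea your proposal is missing is \cref{fac:tau}, which says $\tau=O(R)$. Since $P(I_t^c)$ drops at $\tau$, some $j\in\Supp(D)\subseteq\Supp(D_\ML)$ has $x_j=\tau$. The triangle inequality and \cref{thm:const_dist_supp_ML} then give $\tau=x_j\le\SC(j)+\SC(i^*)\le(5+\sqrt{17})\SC(i^*)\le 5R$. So support robustness is needed twice: once to cap the tail, and once to cap the length of the lifted region. Then
\[
L(D)\le R+2\veps\cdot 5R+4\veps R=(1+14\veps)R,
\]
which is where the constant $14$ comes from. It does not come from keeping $\frac12-\veps^2-\veps$ bounded away from $0$.
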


In their proof that Maximal Lotteries have metric distortion of at most $3$, \cite{DBLP:journals/jacm/CharikarRWW24} establish a relation that $r(t)$ is never below $\ell(D_\ML, t)$ (see \cref{fig:apx_ML_fig1}). Therefore, taking the integral over $t \geq 0$ yields $L(D_\ML) \leq R$. By \cref{thm:biased_metric}, the metric distortion of Maximal Lotteries is at most $3$.

\begin{figure}[t!]
    \centering
    \begin{subfigure}[t]{0.49\textwidth}
        \centering
        \begin{tikzpicture}[
    scale=0.65,
myredline/.style={color=red!80!black, line width=1pt},
    myredfill/.style={color=red!60, opacity=0.8},
    myblueline/.style={color=blue!40!teal!80!black, line width=1pt},
    mybluefill/.style={color=blue!40!teal!30, opacity=0.8},
    axis/.style={line width=1pt, black},
    dashed guide/.style={dashed, thin, black!80, dash pattern=on 5pt off 3pt},
dot/.style={circle, fill=white, draw=#1, line width=1pt, inner sep=1pt},
filldot/.style={circle, fill=#1, inner sep=1.3pt}
]

\def\yTop{7}
    \def\yB{4.2}
    \def\yHalf{3.5}
    \def\yZero{0}
    \def\xMax{9}
    \def\xTau{4.0}
    \def\xAlphaR{5.0}
    \def\xMu{5.9}

    \begin{scope}
\fill[myredfill] 
            (0,0) -- (0,6.2) -- (2.1,6.2) -- (2.1,5.5) -- (3.3,5.5) -- 
            (3.3,4.5) -- (5.2,4.5) -- (5.2,4.0) -- (5.9, 4.0) -- 
            (5.9,2.6) -- (7.8,2.6) -- (7.8,0) -- (\xMax,0) -- cycle;
    \end{scope}

    \fill[mybluefill] 
        (0,\yZero) -- (0,2.8) -- (2.3,2.8) -- (2.3,2.0) -- (4.0,2.0) -- 
        (4.0,1.3) -- (6.3,1.3) -- (6.3,0.8) -- (7,0.8) -- (7,\yZero) -- cycle;

\draw[axis] (0, \yTop) -- (0, 0) -- (\xMax, 0) node[below, xshift=-0.1cm] {$t$};

    \draw[line width=1pt] (0, \yTop - 0.03) -- (-0.2, \yTop - 0.03) node[left] {$1$};

\coordinate (B_start) at (0, 2.8);
    \coordinate (B1_R) at (2.3, 2.8); \coordinate (B1_low) at (2.3, 2.0);
    \coordinate (B2_R) at (4.0, 2.0); \coordinate (B2_low) at (4.0, 1.3);
    \coordinate (B3_R) at (6.3, 1.3); \coordinate (B3_low) at (6.3, 0.8);
    \coordinate (B4_R) at (7, 0.8); \coordinate (B4_low) at (7, 0);
    \coordinate (B_end) at (\xMax, 0);

    \draw[myblueline] (0, 2.8) -- (B1_R);
    \draw[myblueline] (B1_low) -- (B2_R);
    \draw[myblueline] (B2_low) -- (B3_R);
    \draw[myblueline] (B3_low) -- (B4_R);
    \draw[myblueline] (B4_low) -- (B_end);

    \draw[myblueline, dashed] (B1_R) -- (B1_low);
    \draw[myblueline, dashed] (B2_R) -- (B2_low);
    \draw[myblueline, dashed] (B3_R) -- (B3_low);
    \draw[myblueline, dashed] (B4_R) -- (B4_low);

\def\blueColor{blue!40!teal!80!black}
    
    \node[dot=\blueColor] at (B1_R) {};
    \node[filldot=\blueColor] at (B1_low) {};
    
    \node[dot=\blueColor] at (B2_R) {};
    \node[filldot=\blueColor] at (B2_low) {};
    
    \node[dot=\blueColor] at (B3_R) {};
    \node[filldot=\blueColor] at (B3_low) {};
    
    \node[dot=\blueColor] at (B4_R) {};
    \node[filldot=\blueColor] at (B4_low) {};

\coordinate (R_start) at (0, 6.2);
    \coordinate (R1_R) at (2.1, 6.2); \coordinate (R1_low) at (2.1, 5.5);
    \coordinate (R2_R) at (3.3, 5.5); \coordinate (R2_low) at (3.3, 4.5);
    \coordinate (R3_R) at (5.2, 4.5); \coordinate (R3_low) at (5.2, 4.0);
    \coordinate (R4_R) at (5.9, 4.0); \coordinate (R4_low) at (5.9, 2.6);
    \coordinate (R5_R) at (7.8, 2.6); \coordinate (R5_low) at (7.8, 0);
    \coordinate (R_end) at (\xMax, 0);

    \draw[myredline] (0, 6.2) -- (R1_R);
    \draw[myredline] (R1_low) -- (R2_R);
    \draw[myredline] (R2_low) -- (R3_R);
    \draw[myredline] (R3_low) -- (R4_R);
    \draw[myredline] (R4_low) -- (R5_R);
    \draw[myredline] (R5_low) -- (R_end);

    \draw[myredline, dashed] (R1_R) -- (R1_low);
    \draw[myredline, dashed] (R2_R) -- (R2_low);
    \draw[myredline, dashed] (R3_R) -- (R3_low);
    \draw[myredline, dashed] (R4_R) -- (R4_low);
    \draw[myredline, dashed] (R5_R) -- (R5_low);

\def\redColor{red!80!black}

    \node[dot=\redColor] at (R1_R) {};
    \node[filldot=\redColor] at (R1_low) {};
    \node[dot=\redColor] at (R2_R) {};
    \node[filldot=\redColor] at (R2_low) {};
    \node[dot=\redColor] at (R3_R) {};
    \node[filldot=\redColor] at (R3_low) {};
    \node[dot=\redColor] at (R4_R) {};
    \node[filldot=\redColor] at (R4_low) {};
    \node[dot=\redColor] at (R5_R) {};
    \node[filldot=\redColor] at (R5_low) {};

\node[blue!40!teal!80!black, anchor=west] at (4.5, 1.8) {$\ell(D_\ML, t)$};
    
\node[red!80!black, anchor=south west] at (6.4, 2.6) {$r(t)$};

\end{tikzpicture}

         \caption{$\ell(D_\ML, t)$ v.s.\@ $r(t)$}
        \label{fig:apx_ML_fig1}
    \end{subfigure}~ 
    \begin{subfigure}[t]{0.49\textwidth}
        \centering
        \begin{tikzpicture}[
    scale=0.65,
myredline/.style={color=red!80!black, line width=1pt},
    myredfill/.style={color=red!60, opacity=0.8},
    myblueline/.style={color=blue!40!teal!80!black, line width=1pt},
    mybluefill/.style={color=blue!40!teal!30, opacity=0.8},
    axis/.style={line width=1pt, black},
    dashed guide/.style={dashed, thin, black!80, dash pattern=on 5pt off 3pt},
dot/.style={circle, fill=white, draw=#1, line width=1pt, inner sep=1pt},
filldot/.style={circle, fill=#1, inner sep=1.3pt}
]

\def\yTop{7}
    \def\yB{4.2}
    \def\yHalf{3.5}
    \def\yZero{0}
    \def\xMax{9}
    \def\xTau{4.0}
    \def\xAlphaR{5.0}
    \def\xMu{5.9}
    \def\yEps{0.2}

    \begin{scope}
\fill[myredfill] 
            (0,0) -- (0,6.2) -- (1.8,6.2) -- (1.8,5.5) -- (3.5, 5.5) -- 
            (3.5,4.5) -- (5.5,4.5) -- (5.5,1.5) -- (7.0, 1.5) -- 
            (7.0,1.0) -- (8.5,1.0) -- (8.5,\yEps) -- (\xMax,\yEps) -- (\xMax, 0) -- cycle;
    \end{scope}

    \fill[mybluefill] 
        (0,\yZero) -- (0,3.2) -- (2.2,3.2) -- (2.2,2.8) -- (4.0,2.8) -- 
        (4.0,2.4) -- (6.3,2.4) -- (6.3,1.9) -- (8,1.9) -- (8,\yZero) -- cycle;

\draw[axis] (0, \yTop) -- (0, 0) -- (\xMax, 0) node[below] {$t$};

    \draw[line width=1pt] (0, \yTop - 0.03) -- (-0.2, \yTop - 0.03) node[left] {$1 + 2\veps$};
\draw[line width=1pt] (0, \yEps - 0.03) -- (-0.2, \yEps - 0.03) node[left] {$2\veps$};

\coordinate (B_start) at (0, 3.2);
    \coordinate (B1_R) at (2.2, 3.2); \coordinate (B1_low) at (2.2, 2.8);
    \coordinate (B2_R) at (4.0, 2.8); \coordinate (B2_low) at (4.0, 2.4);
    \coordinate (B3_R) at (6.3, 2.4); \coordinate (B3_low) at (6.3, 1.9);
    \coordinate (B4_R) at (8, 1.9); \coordinate (B4_low) at (8, 0);
    \coordinate (B_end) at (\xMax, 0);

    \draw[myblueline] (B_start) -- (B1_R);
    \draw[myblueline] (B1_low) -- (B2_R);
    \draw[myblueline] (B2_low) -- (B3_R);
    \draw[myblueline] (B3_low) -- (B4_R);
    \draw[myblueline] (B4_low) -- (B_end);

    \draw[myblueline, dashed] (B1_R) -- (B1_low);
    \draw[myblueline, dashed] (B2_R) -- (B2_low);
    \draw[myblueline, dashed] (B3_R) -- (B3_low);
    \draw[myblueline, dashed] (B4_R) -- (B4_low);

\def\blueColor{blue!40!teal!80!black}
    
    \node[dot=\blueColor] at (B1_R) {};
    \node[filldot=\blueColor] at (B1_low) {};
    
    \node[dot=\blueColor] at (B2_R) {};
    \node[filldot=\blueColor] at (B2_low) {};
    
    \node[dot=\blueColor] at (B3_R) {};
    \node[filldot=\blueColor] at (B3_low) {};
    
    \node[dot=\blueColor] at (B4_R) {};
    \node[filldot=\blueColor] at (B4_low) {};

\coordinate (R_start) at (0, 6.2);
    \coordinate (R1_R) at (1.8, 6.2); \coordinate (R1_low) at (1.8, 5.5);
    \coordinate (R2_R) at (3.5, 5.5); \coordinate (R2_low) at (3.5, 4.5);
    \coordinate (R3_R) at (5.5, 4.5); \coordinate (R3_low) at (5.5, 1.5);
    \coordinate (R4_R) at (7.0, 1.5); \coordinate (R4_low) at (7.0, 1.0);
    \coordinate (R5_R) at (8.5, 1.0); \coordinate (R5_low) at (8.5, \yEps);
    \coordinate (R6_R) at (\xMax, \yEps); \coordinate (R_end) at (\xMax, 0);

    \draw[myredline] (R_start) -- (R1_R);
    \draw[myredline] (R1_low) -- (R2_R);
    \draw[myredline] (R2_low) -- (R3_R);
    \draw[myredline] (R3_low) -- (R4_R);
    \draw[myredline] (R4_low) -- (R5_R);
    \draw[myredline] (R5_low) -- (R6_R);

    \draw[myredline, dashed] (R1_R) -- (R1_low);
    \draw[myredline, dashed] (R2_R) -- (R2_low);
    \draw[myredline, dashed] (R3_R) -- (R3_low);
    \draw[myredline, dashed] (R4_R) -- (R4_low);
    \draw[myredline, dashed] (R5_R) -- (R5_low);

\def\redColor{red!80!black}

    \node[dot=\redColor] at (R1_R) {};
    \node[filldot=\redColor] at (R1_low) {};
    \node[dot=\redColor] at (R2_R) {};
    \node[filldot=\redColor] at (R2_low) {};
    \node[dot=\redColor] at (R3_R) {};
    \node[filldot=\redColor] at (R3_low) {};
    \node[dot=\redColor] at (R4_R) {};
    \node[filldot=\redColor] at (R4_low) {};
    \node[dot=\redColor] at (R5_R) {};
    \node[filldot=\redColor] at (R5_low) {};

\node[blue!40!teal!80!black, anchor=west] at (5.5, 3) {$\ell(D_\eML{\veps^2}, t)$};
    
\node[red!80!black, anchor=south west] at (3.6, 4.6) {$r(t) + 2\veps$};
\end{tikzpicture}
         \caption{$\ell(D_\eML{\veps^2}, t)$ v.s.\@ $r(t) + 2\veps$}
        \label{fig:apx_ML_fig2}
    \end{subfigure}
    \caption{Pictorial illustration of $\ell(D_\ML, t)$ v.s.\@ $r(t)$, and $\ell(D_\eML{\veps^2}, t)$ v.s.\@ $r(t) + 2\veps$}
\end{figure}

However, for the approximate version, $\veps^2$-RegApx Maximal Lotteries, the pointwise inequality (i.e., $\ell(D_\ML, t) \leq r(t)$ for all $t \geq 0$) is no longer guaranteed. A natural first attempt is to ``lift'' the red curve $r(t)$ up by some amount (e.g., $2\veps$) to keep its plot above the blue one $\ell(D_\eML{\veps^2})$. However, it turns out that such an approach is insufficient: even after the lift, the red curve can still fall below the blue one (\cref{fig:apx_ML_fig2}). What we will do instead is perform a careful case analysis to resolve this challenge.

For simplicity, let $D = D_\eML{\veps ^ 2}$. For any candidate set $I$, let $P(I)$ denote its probability mass in the distribution $D$.
Define
\[
\tau \defeq \inf\set{t \geq 0: P(I_t^c) \leq \veps} = \min\set{t \geq 0: P(I_t^c) \leq \veps}.
\]

\paragraph{Case (1): Integral on $0 \leq t < \tau$.} When $t$ falls within the range $[0, \tau)$, we have $P(I_t^c) > \veps$. In this case, the pointwise inequality that relates $\ell(D, t)$ to the lifted version of $r(t)$ holds. The  following lemma is inspired by and generalizes its counterpart \cite[Theorem 1]{DBLP:journals/jacm/CharikarRWW24} which is used in showing that exact (i.e., $\veps = 0$) Maximal Lotteries have distortion at most $3$.
\begin{lemma}\label{lem:repapx_ML_large_case}
  For any fixed biased metric, any $\veps < 1 / \sqrt{10}$, and any $0 \leq t < \tau$, we have
  \[
  \ell(D, t) \leq \min\set{\frac12, r(t)} + 2\veps.
  \]
\end{lemma}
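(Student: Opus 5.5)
The plan is to write $I \defeq I_t$ and $J \defeq I_t^c$, so that $i^* \in I$, and to use the identity
\[
\ell(D, t) = \sum_{j \in J} s_{I \succ j}p_j = P(J)\, s_{I \succ D(J)}
\]
(the right-hand side is $0$ when $P(J)=0$; for $t<\tau$ we in fact have $P(J) > \veps > 0$). I will prove the two bounds $\ell(D,t) \le \frac12 + 2\veps$ and $\ell(D,t) \le r(t) + 2\veps$ separately. I expect the first to be easy and the second to be the main obstacle.

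\textbf{Bound against $\frac12$.} Suppose first that $P(I) > 0$. Apply the RepApx condition of \cref{def:repapx_ML} to a random candidate $a \sim D(I)$; the bound holds for each such $a$, hence on average. Using the law of total probability and the convention $s_{j \succ j} = \frac12$, which gives $s_{D(I) \succ D(I)} = \frac12$, we get
\[
\frac{P(I)}{2} + P(J)\, s_{D(I) \succ D(J)} = s_{D(I) \succ D} \le \frac12 + \veps^2 .
\]
By \cref{fac:set2dis} applied with $D(I)$ supported on $I$, we have $s_{I \succ D(J)} \le s_{D(I) \succ D(J)}$. Therefore $\ell(D,t) \le P(J)/2 + \veps^2 \le \frac12 + \veps^2$. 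If instead $P(I) = 0$, then $\ell = s_{I \succ D} \le s_{i^* \succ D} \le \frac12 + \veps^2$. This part does not even need $t < \tau$.

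From the same display I will also extract a by-product that will be used in the second bound. Since $P(J) > \veps$, we have $s_{D(I) \succ D(J)} \le \frac12 + \veps^2/P(J) < \frac12 + \veps$. A voter who ranks all of $I$ above $j$ cannot rank $j$ above any $i \in I$, so $s_{I \succ D(J)} + s_{D(J) \succ D(I)} \le 1$. Combining these gives the two-sided control
\[
s_{I \succ D(J)} \le \tfrac12 + \veps \qquad\text{and}\qquad s_{D(J) \succ D(I)} \ge \tfrac12 - \veps .
\]

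\textbf{Bound against $r(t)$.} By the argument in the proof of \cref{cor:biased_metric}, $r(t) \ge 1 - s_{i^* \succ J}$. Moreover, every voter with $i^* \succ J$ ranks $i^*$ above every $j \in J$, so $1 - s_{i^* \succ J} \ge s_{D(J) \succ i^*}$. It therefore suffices to show
\[
P(J)\, s_{I \succ D(J)} \le s_{D(J) \succ i^*} + 2\veps .
\]
This mirrors \cite[Theorem 1]{DBLP:journals/jacm/CharikarRWW24}, where the exact statement $\ell \le r$ is proved using the exact equalities $s_{D(I) \succ D(J)} = \frac12$ and $s_{D \succ i^*} \ge \frac12$. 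My plan is to redo that argument with each exact equality replaced by its approximate version and the errors tracked.

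The first equality has already been replaced by the $\pm\veps$ control above, and this is exactly where the assumption $P(J) > \veps$ enters. For the second, $D$ itself need not beat $i^*$ with probability $\frac12$. Here I will use $\Supp(D) \subseteq \Supp(D_\ML)$: by \cref{fac:support_vs_ML}, every $j \in J$ with $p_j > 0$ satisfies $s_{j \succ D_\ML} = \frac12$, and by \cref{thm:ML_distribution}, $s_{D_\ML \succ i^*} \ge \frac12$. Via the triangle inequality \cref{lem:triangle}, routing through $D(I)$ or through $D_\ML$, I will try to lower-bound $s_{D(J) \succ i^*}$ by something like $P(J)\, s_{I \succ D(J)} - O(\veps)$. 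A per-voter case split is then needed: each voter counted in $S_{I \succ j}$ also prefers $i^*$ to $j$, while the remaining voters prefer $j$ to some member of $I$.

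I expect this step to be the main obstacle. The quadratic error $\veps^2$, divided by the mass $P(J) > \veps$, should be what keeps the loss at $O(\veps)$; the condition $\veps < 1/\sqrt{10}$ will presumably be needed to absorb the constants into $2\veps$.
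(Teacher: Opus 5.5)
Your preliminary steps are sound: the bound $\ell(D,t)\le\frac12+\veps^2$, the by-product $s_{D(I)\succ D(J)}<\frac12+\veps$ (this is the paper's \cref{eq:1}), and the reduction of the $r(t)$-bound to $P(J)\,s_{I\succ D(J)}\le s_{D(J)\succ i^*}+2\veps$. However, the proposal stops at the one step that carries content, and the ingredients you list for it cannot deliver it. Routing through $D_\ML$ gives only trivial bounds. Since $\Supp(D(J))\subseteq\Supp(D_\ML)$, \cref{fac:support_vs_ML} gives $s_{D_\ML\succ D(J)}=\frac12$ exactly. Hence \cref{lem:triangle} yields only $s_{D(J)\succ i^*}\ge s_{D_\ML\succ i^*}-s_{D_\ML\succ D(J)}\ge 0$, and the same happens for $D(I)$. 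The support condition plays no role in this lemma; it is used only for \cref{fac:tau} and \cref{lem:repapx_ML_small_case}.

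The missing idea is the RepApx condition applied at $a=i^*$. You remark that ``$D$ need not beat $i^*$ with probability $\frac12$,'' but you already used $s_{i^*\succ D}\le\frac12+\veps^2$ in your $P(I)=0$ case. This is equivalent to $s_{D\succ i^*}\ge\frac12-\veps^2$, which is exactly the approximate replacement for $s_{D_\ML\succ i^*}\ge\frac12$. With it, the exact argument goes through essentially verbatim:
\begin{align*}
\tfrac12-\veps^2 \le s_{D\succ i^*} &= P(J)\,s_{D(J)\succ i^*}+P(I)\,s_{D(I)\succ i^*}\\
&\le s_{D(J)\succ i^*}+P(I)\,s_{D(I)\succ D(J)} \le s_{D(J)\succ i^*}+\bigl(1-P(J)\bigr)\bigl(\tfrac12+\veps\bigr).
\end{align*}
This gives $P(J)(\frac12+\veps)\le s_{D(J)\succ i^*}+\veps+\veps^2$. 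Combined with $\ell(D,t)\le P(J)\,s_{D(I)\succ D(J)}\le P(J)(\frac12+\veps)$ and your reduction $r(t)\ge s_{D(J)\succ i^*}$, you get $\ell(D,t)\le r(t)+2\veps$.

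You must also handle $P(I)=0$ in the $r(t)$ part, since your by-product needs $D(I)$ to be defined. There $D(J)=D$, and the same inequality gives $\ell(D,t)\le\frac12+\veps^2\le s_{D\succ i^*}+2\veps^2\le r(t)+2\veps$ directly. The per-voter case split you anticipate is unnecessary.
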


We need to integrate both sides of \cref{lem:repapx_ML_large_case} in the proof of \cref{thm:distortion_repapx_ML}. The following fact helps us control the extra area induced by the $2\veps$-lift.

\begin{fact}\label{fac:tau}
$\tau \leq 5R$.
\end{fact}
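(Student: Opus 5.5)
The plan is to show that every $t < \tau$ is bounded by a constant multiple of $\SC(i^*)$. This follows from the robustness result \cref{thm:const_dist_supp_ML}: any candidate outside $I_t$ that has positive mass under $D$ is a support candidate of an exact Maximal Lottery, so it is not too far from $i^*$. Letting $t \uparrow \tau$ then gives the claim.

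\textbf{Step 1: $R$ is twice the optimal cost.} The proof of \cref{thm:biased_metric} shows that $2\SC(i^*) = \int_0^\infty r(t)\diff t$, so $R = 2\SC(i^*)$.

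\textbf{Step 2: every far candidate is costly.} Under the biased metric, for every candidate $j$ and voter $v$ we have $d(j,v) + d(i^*,v) \geq x_j$. To see this, let $k$ attain $\min_{k : j \succeq_v k} x_k$. Then
\[
d(j,v) + d(i^*,v) = 2d(i^*,v) + x_k \geq (x_j - x_k) + x_k = x_j,
\]
where the inequality uses $j \succeq_v k$ in the definition of $2d(i^*,v)$. Averaging over $v$ gives $x_j \leq \SC(j) + \SC(i^*)$.

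\textbf{Step 3: find a support candidate beyond $t$.} Fix $t < \tau$. By definition of $\tau$, $P(I_t^c) > \veps > 0$, so some candidate $j \notin I_t$ has positive probability under $D$. Since $D$ is an $\veps^2$-RepApx Maximal Lottery, \cref{def:repapx_ML} gives $\Supp(D) \subseteq \Supp(D_\ML)$ for some Maximal Lottery $D_\ML$, hence $j \in \Supp(D_\ML)$. Then \cref{thm:const_dist_supp_ML} yields $\SC(j) \leq (4+\sqrt{17})\SC(i^*)$. Combining with Step 2 and Step 1, and using $j \notin I_t$ (so $x_j > t$):
\[
t < x_j \leq (5+\sqrt{17})\SC(i^*) = \frac{5+\sqrt{17}}{2} R < 4.57R \leq 5R.
\]

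\textbf{Step 4: conclude.} If $\tau = 0$ there is nothing to prove. Otherwise Step 3 holds for every $t < \tau$, so letting $t \uparrow \tau$ gives $\tau \leq \frac{5+\sqrt{17}}{2}R \leq 5R$.

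The main obstacle is recognizing that the local inequality $s_{i^*\succ D} \leq \frac12 + \veps^2$ alone cannot bound $\tau$ when $\veps$ is small. One must instead go through the support condition of RepApx Maximal Lotteries and the robustness result \cref{thm:const_dist_supp_ML}. Once that route is chosen, the only technical check is the triangle-type inequality $x_j \leq \SC(j) + \SC(i^*)$ in Step 2, which is needed because the biased metric only specifies voter–candidate distances.
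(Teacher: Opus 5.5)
Your proof is correct and follows essentially the same route as the paper: take a candidate $j$ in the support of $D$ (hence of an exact Maximal Lottery) with $x_j$ at least $\tau$ (or beyond $t$ for any $t<\tau$), bound $x_j \le \SC(j)+\SC(i^*)$, and apply \cref{thm:const_dist_supp_ML} together with $R=2\SC(i^*)$. The differences are minor: the paper picks $j$ with $x_j=\tau$ exactly and uses the triangle inequality via $d(i^*,j)=x_j$, whereas you let $t\uparrow\tau$ and derive $x_j\le d(j,v)+d(i^*,v)$ directly from the biased-metric formulas, which is slightly more self-contained.
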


The proof of \cref{lem:repapx_ML_large_case,fac:tau} are deferred to \cref{subsec:omitted_proof_repapx_ML}.

\paragraph{Case (2): Integral on $t \geq \tau$.} When $t$ falls within the range $[\tau, +\infty)$, we have $P(I_t^c) \leq \veps$. In this case, the curve $r(t) + 2\veps$ may not stay above $\ell(D, t)$. Luckily, we know that the distortion of $\Supp\left(D(I_t^c)\right)$ is at most a constant (\cref{thm:const_dist_supp_ML}). Combining this fact with $P(I_t^c) \leq \veps$, we can directly bound the integral of $\ell(D, t)$ over $t \geq \tau$.

\begin{lemma}\label{lem:repapx_ML_small_case}
  For any fixed biased metric, we have
  \[
  \int_{\tau}^{\infty}\ell(D, t)\diff t \leq 4\veps R.
  \]
\end{lemma}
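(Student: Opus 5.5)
Since $D$ is supported on $\Supp(D_\ML)$, every candidate $j$ with $p_j>0$ has, by \cref{thm:const_dist_supp_ML}, $\SC(j)-\SC(i^*) \le (3+\sqrt{17})\SC(i^*)$. In the notation of the proof of \cref{thm:biased_metric}, this reads
\[
\int_0^\infty s_{I_t\succ j}\diff t = \SC(j)-\SC(i^*) \le \tfrac{3+\sqrt{17}}{2}\cdot 2\SC(i^*) = \tfrac{3+\sqrt{17}}{2} R .
\]
This constant is about $3.56 < 4$. The plan is to exchange sum and integral:
\[
\int_\tau^\infty \ell(D,t)\diff t = \sum_j p_j\int_\tau^\infty s_{I_t\succ j}\,\mathds 1[j\notin I_t]\diff t .
\]
We then bound each term using only candidates that actually contribute.

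For $t\ge\tau$, the indicator $j\notin I_t$ means $x_j>t\ge\tau$, so only candidates $j\in I_\tau^c$ contribute. Each such term is at most $p_j\int_0^\infty s_{I_t\succ j}\diff t \le p_j\cdot\frac{3+\sqrt{17}}{2}R$. Summing over $j\in I_\tau^c$ gives
\[
\int_\tau^\infty \ell(D,t)\diff t \le P(I_\tau^c)\cdot \tfrac{3+\sqrt{17}}{2}R \le \veps\cdot 4R .
\]
Here we use $P(I_\tau^c)\le\veps$, which holds by the definition of $\tau$ as a minimum: $P(I_t^c)$ is right-continuous in $t$, since $I_t=\{x_k\le t\}$ is right-closed, so the infimum is attained.

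The main point to double-check is the identity $R = 2\SC(i^*)$ together with the fact that \cref{thm:const_dist_supp_ML} applies to every $j$ in $\Supp(D)$. The first was established in the proof of \cref{thm:biased_metric}. The second follows from the first condition of \cref{def:repapx_ML}. Note that the distortion bound in \cref{thm:const_dist_supp_ML} is stated for general metrics, so in particular it holds for the fixed biased metric. We also need the attainment of the minimum defining $\tau$, as noted above. No further obstacle is expected.
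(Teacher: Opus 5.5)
Your proposal is correct and takes essentially the same route as the paper. Both arguments restrict to candidates in $\Supp(D(I_\tau^c))$, exchange sum and integral, bound each contribution by $\SC(j)-\SC(i^*)\le(3+\sqrt{17})\SC(i^*)$ via \cref{thm:const_dist_supp_ML}, and finish with $P(I_\tau^c)\le\veps$ and $R=2\SC(i^*)$. The only cosmetic difference is that you extend the integral to $[0,\infty)$ directly, whereas the paper first rewrites the tail as $\E{v\sim V}{\max\set{d(j,v)-d(i^*,v)-\tau,0}}$ and then drops the $-\tau$.
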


The proof of \cref{lem:repapx_ML_small_case} is deferred to \cref{subsec:omitted_proof_repapx_ML}. We present the proof of \cref{thm:distortion_repapx_ML} assuming \cref{lem:repapx_ML_large_case,fac:tau,lem:repapx_ML_small_case}.

\begin{proof}[Proof of \cref{thm:distortion_repapx_ML}]
  We can compute
  \begin{align*}
    L(D) &= \int_0^\tau \ell(D, t) \diff t + \int_\tau^\infty \ell(D, t) \diff t\\
    &\leq \int_0^\tau (r(t) + 2\veps)\diff t + \int_\tau^\infty \ell(D, t)\diff t\tag{\cref{lem:repapx_ML_large_case}}\\
    &\leq \int_0^\tau r(t)\diff t +  2 \veps \tau + 4\veps R\tag{\cref{lem:repapx_ML_small_case}}\\
    &\leq R + 14\veps R\tag{\cref{fac:tau}}.
  \end{align*}
  Applying \cref{thm:biased_metric} concludes our proof.
\end{proof}

\subsection{Omitted Proofs}\label{subsec:omitted_proof_repapx_ML}

\begin{proof}[Proof of \cref{lem:repapx_ML_large_case}]
If $P(I_t) = 0$, then $D(I_t)$ is not well-defined. We need to analyze this case separately.

Using the fact that $i^* \in I_t$, we have
\begin{align*}
\ell(D, t) = \sum_{j \not\in I_t}s_{I_t \succ j}p_j \mkern-3mu \stackrel{ \text{\cref{fac:set2dis}} }{\leq} \mkern-3mu \sum_{j \not\in I_t}s_{i^* \succ j}p_j = \sum_{j \in C}s_{i^* \succ j}p_j = s_{i^* \succ D}.
\end{align*}
Applying \cref{def:repapx_ML}, we get $\ell(D, t) \leq \frac12 + \veps ^ 2$. Then $\ell(D, t) \leq \frac12 + 2\veps$ follows immediately.

On the other hand, since we showed in the proof of \cref{cor:biased_metric} that $s_{\forall i > j, x_i - x_j \leq t} \leq s_{i^* \succ I_t^c}$, we have
\begin{align*}
  r(t) = 1 - s_{\forall i > j, x_i - x_j \leq t} \geq 1 - s_{i^* \succ I_t^c}\mkern-5mu \stackrel{ \text{\cref{fac:set2dis}} }{\geq} \mkern-5mu 1 - s_{i^* \succ D(I_t^c)} = 1 - s_{i^* \succ D}.
\end{align*} Again, applying \cref{def:repapx_ML} we get $r(t) \geq \frac12 - \veps ^ 2$.

Note that $\left(\frac12 + \veps ^ 2\right)/\left(\frac12 - \veps ^ 2\right) \leq 1 + 5 \veps ^ 2 \leq 1 + 2\veps$ if $\veps ^ 2 < 1/10$. Therefore,
\[
\ell(D, t) \leq \frac{\left(\frac12 + \veps ^ 2\right)}{\left(\frac12 - \veps ^ 2\right)} \cdot \left(\frac12 - \veps ^ 2\right) \leq (1 + 2\veps)r(t) \leq r(t) + 2\veps.
\]
The desired bound holds when $P(I_t) = 0$.

From now on, we assume $P(I_t) > 0$, so $D(I_t)$ is well-defined. Moreover, since $t < \tau$, we have $P(I_t^c) > \veps$, and hence $D(I_t^c)$ is also well-defined. We can get
  \begin{align*}
  \ell(D, t) = \sum_{j \not\in I_t}s_{I_t\succ j}p_j \mkern-3mu \stackrel{ \text{\cref{fac:set2dis}} }{\leq} \mkern-3mu \sum_{j \not\in I_t}s_{D(I_t)\succ j}p_j = s_{D(I_t)\succ D(I_t^c)} \cdot P(I_t^c),
  \end{align*} and
  \begin{align*}
  r(t) = 1 - s_{\forall i > j, x_i - x_j \leq t} \geq 1 - s_{i^* \succ I_t^c} \mkern-5mu \stackrel{ \text{\cref{fac:set2dis}} }{\geq} \mkern-5mu  1 - s_{i^* \succ D(I_t^c)}.
  \end{align*}

  The remainder of this proof aims at establishing the following inequalities, which suffice to conclude our proof.
  \[
  s_{D(I_t) \succ D(I_t^c)} \cdot P(I_t^c)\leq P(I_t^c) \cdot \left(\frac12 + \veps\right) \leq 1 - s_{i^*\succ D(I_t^c)} + 2\veps.
  \]

  \paragraph{Proving $s_{D(I_t) \succ D(I_t^c)} \cdot P(I_t^c)\leq P(I_t^c) \cdot \left(\frac12 + \veps\right)$.}

  It suffices to show $s_{D(I_t) \succ D(I_t^c)} \leq \frac12 + \veps$. We can compute
  \begin{align*}
    \frac{1}{2} + \veps ^ 2 &\geq s_{D(I_t) \succ D} \tag{\cref{def:repapx_ML}}\\
    &= P(I_t^c) \cdot s_{D(I_t)\succ D(I_t^c)} + \bigl(1-P(I_t^c)\bigr) \cdot s_{D(I_t)\succ D(I_t)}\tag{law of total probability}\\
    &= P(I_t^c) s_{D(I_t)\succ D(I_t^c)} + (1 - P(I_t^c)) \cdot \frac{1}{2} \tag{symmetry}
  \end{align*} Rearranging yields
  \begin{equation}    
  s_{D(I_t) \succ D(I_t^c)} \leq \frac12 + \frac{\veps^2}{P(I_t^c)} \leq \frac12 + \veps. \label{eq:1}
  \end{equation}

  \paragraph{Proving $P(I_t^c) \cdot\left(\frac12 + \veps\right) \leq 1 - s_{i^*\succ D(I_t^c)} + 2\veps$.} Consider the three strategies $D(I_t)$, $D(I_t^c)$ and deterministically selecting candidate $i^*$. We can compute
\begin{align*}
  \frac12 - \veps^2 &\leq s_{D \succ i^*} \tag{\cref{def:repapx_ML}}\\
&= P(I_t^c) \cdot s_{D(I_t^c)\succ i^*} + \bigl(1 - P(I_t^c)\bigr) \cdot s_{D(I_t)\succ i^*}\tag{law of total probability}\\
&\leq P(I_t^c) \cdot s_{D(I_t^c)\succ i^*} + \bigl(1 - P(I_t^c)\bigr) \cdot \bigl(s_{D(I_t)\succ D(I_t^c)} + s_{D(I_t^c)\succ i^*}\bigr)\tag{\cref{lem:triangle}}\\
&= s_{D(I_t^c)\succ i^*} + \bigl(1 - P(I_t^c)\bigr) \cdot s_{D(I_t)\succ D(I_t^c)}\\
&\leq s_{D(I_t^c)\succ i^*} + \bigl(1 - P(I_t^c)\bigr) \cdot \left(\frac{1}{2}+ \veps\right). \tag{\cref{eq:1}}
\end{align*} Rearranging yields
    \[
    P(I_t^c) \cdot \left(\frac12 + \veps\right) \leq 1 - s_{i^* \succ D(I_t^c)} + \veps ^ 2 + \veps \leq 1 - s_{i^* \succ D(I_t^c)} + 2\veps. \qedhere
    \]
\end{proof}

\begin{proof}[Proof of \cref{fac:tau}]
  By definition, $\tau$ is the smallest $t \geq 0$ that satisfies $P(I_t^c) \leq \veps$, and therefore there exists a candidate $j$ with $x_j = \tau$ and $p_j > 0$. The proof is concluded by computing
  \begin{align*}
  x_j &= \frac1n \sum_{v \in V} d(i^*, j)\\
  &\leq \frac1n \sum_{v \in V}(d(v, i^*) +  d(v, j))\tag{triangle inequality}\\
  &= \SC(j) + \SC(i^*)\\
  &\leq (4 + \sqrt{17}) \SC(i^*) + \SC(i^*) \tag{apply \cref{thm:const_dist_supp_ML} on $j \in \Supp(D) \subseteq \Supp(D_\ML)$}\\
  &\leq 5R.\tag{$R = 2\SC(i^*)$}
  \end{align*}
Thus $\tau = x_j \leq 5R$.
\end{proof}

\begin{proof}[Proof of \cref{lem:repapx_ML_small_case}]
If a candidate $j$ satisfies $j \notin I_t$ and $p_j > 0$, then $j \in I_t^c \cap \Supp(D)$, and hence $j \in \Supp(D(I_t^c))$. For any other candidate $j$, we have $s_{I_t \succ j}p_j = 0$.

We can then compute
  \begin{align*}
  \int_{\tau}^\infty \ell(D, t) \diff t&= \int_\tau^\infty\sum_{j\in \Supp(D(I_t^c))}s_{I_t \succ j}p_j\diff t\\
  &= \int_\tau^\infty\sum_{j\in \Supp(D(I_\tau^c))}s_{I_t \succ j}p_j\diff t \tag{$\Supp(D(I_t^c)) \subseteq \Supp(D(I_\tau^c))$ for all $t \geq \tau$}\\
  &= \int_\tau^\infty\sum_{j \in \Supp(D(I_\tau^c))}p_j\Pr{v \sim V}{d(j, v) - d(i^*, v) > t}\diff t \tag{\cref{fac:d_to_s_1}}\\
  &= \sum_{j \in \Supp(D(I_\tau^c))}p_j\int_\tau^\infty \Pr{v \sim V}{d(j, v) - d(i^*, v) > t}\diff t\\
  &= \sum_{j \in \Supp(D(I_\tau^c))}p_j\E{v \sim V}{\max\set{d(j, v) - d(i^*, v) - \tau, 0}}\\
  &\leq \sum_{j \in \Supp(D(I_\tau^c))}p_j\E{v \sim V}{d(j, v) - d(i^*, v)} \tag{$d(j, v) - d(i^*, v) \geq 0$ for all voter $v \in V$}\\
  &= \sum_{j \in \Supp(D(I_\tau^c))}p_j\cdot (\SC(j) - \SC(i^*))\\
  &\leq (3 + \sqrt{17})\SC(i^*)\cdot P(I_\tau^c) \tag{apply \cref{thm:const_dist_supp_ML} on $j \in \Supp(D(I_\tau^c)) \subseteq \Supp(D_\ML)$}\\
  &\leq 4\veps R. \tag{$R = 2\SC(i^*)$ and $P(I_\tau^c) \leq \veps$}
  \end{align*}
This completes the proof.
\end{proof}
 
\section{RepApx Stable Lotteries as a Voting Rule}
\cite{DBLP:conf/sigecom/CharikarRTW25} proposed \emph{Pruned Double Lotteries}, which achieve metric distortion better than $3$ by a constant. Their rule is a mixture of two components, and one of these components relies crucially on Stable Lotteries. In this section, we focus on this component and study its RepApx variant. We call the variant \nameref{box:repapx_pruned_lotteries}.

We begin by introducing \emph{quasi-kernels}.

\begin{definition}[Quasi-Kernel \cite{MISC:conf/chvatal1974directed}]
In a directed graph, a \emph{quasi-kernel} is an independent set\footnote{An independent set in a directed graph is a subset $S$ of the vertices such that no two vertices in $S$ are joined by an edge.} $S$ that satisfies the following condition: for every vertex $v \notin S$, there is a vertex $u \in S$, such that there is a directed path from $u$ to $v$ with at most $2$ edges.
\end{definition}

Every directed graph admits a quasi-kernel \cite{MISC:conf/chvatal1974directed} that can be found in linear time \cite{DBLP:journals/ipl/Croitoru15}.

Given an election instance, we define the \nameref{box:quasi_pruning} procedure \cite{DBLP:journals/jacm/CharikarRWW24,DBLP:conf/sigecom/CharikarRTW25}, which reduces the set of candidates to a quasi-kernel.

\begin{mybox}[label={box:quasi_pruning},nameref={Quasi-Kernel Pruning}]{$\theta$-Quasi-Kernel Pruning ($\theta \in \left(\frac12, 1\right]$)}
  \begin{itemize}
    \item Construct a directed graph as follows. The vertices are the candidates $C$. For all candidates $a \neq b$ with $s_{a \succ b} \geq \theta$, add a directed edge from $a$ to $b$.
    \item Compute a quasi-kernel of the graph and discard all candidates not in the quasi-kernel.
  \end{itemize}
\end{mybox}

We now define \nameref{box:repapx_pruned_lotteries}, whose exact (i.e.\@, $\veps = 0$) counterpart is implicit in \cite{DBLP:conf/sigecom/CharikarRTW25}.

\begin{mybox}[label={box:repapx_pruned_lotteries},nameref={RepApx Pruned Lotteries}]{$(\veps, k, \theta)$-RepApx Pruned Lotteries ($\veps \in (0, 1), k \in \N^+, \theta \in \left(\frac12, 1\right]$)}
  \begin{itemize}
    \item Run $\theta$-\nameref{box:quasi_pruning}.
    \item Return an $\veps$-RepApx Stable $k$-Lottery over the remaining candidates.
  \end{itemize}
\end{mybox}

\begin{definition}[$\theta$-Regularity; see, e.g.\@, \cite{DBLP:conf/sigecom/CharikarRTW25}] A preference profile is $\theta$-regular if $s_{i \succ j} < \theta$ for all $i, j \in C$.
\end{definition}

The preference profile induced by the remaining candidates after $\theta$-\nameref{box:quasi_pruning} is $\theta$-regular since the remaining candidates form an independent set in the constructed directed graph. The following lemma analyzes the metric distortion of $\veps$-RepApx Stable $k$-Lotteries relative to the remaining candidates (which form a $\theta$-regular profile). That is, the metric distortion is at most $\alpha$ if the expected social cost of the selected candidate is at most $\alpha$ times the social cost of selecting any remaining candidate.

\begin{lemma}\label[lemma]{lem:sl_quasi_kernel} Given a $\theta$-regular profile, if $\frac1{k + 1} + \veps \leq \frac2k$ and integer $k \geq 7$, an $\veps$-RepApx Stable $k$-Lottery over the candidates has metric distortion at most $1+2\lambda$, provided that $\lambda$ satisfies the constraints
 \[
  p \leq \max\set{\frac{\lambda}{\theta}(1-\theta),\frac{\lambda}{\theta}\left(1-\left(\frac{1}{k+1}+\veps\right)p^{-k}\right)}, \qquad \forall p\in [0,1].
  \] In particular, the smallest $\lambda$ that satisfies these constraints is given by the formula
\[
\lambda(\theta, k, \veps)=\frac{\theta}{1-\theta}\left(\frac{1}{\theta (k+1)}+\frac{\veps}{\theta}\right)^{1/k}.
\]
\end{lemma}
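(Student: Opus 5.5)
We work through \cref{cor:biased_metric}. Fix the $\theta$-regular profile, a biased metric, and the optimal remaining candidate $i^*$. Let $D$ be the $\veps$-RepApx Stable $k$-Lottery, with probabilities $p_j$. It suffices to show, for every $J \subseteq C\setminus\set{i^*}$,
\[
\sum_{j\in J}s_{i^*\succ j}p_j \le \lambda\bigl(1-s_{i^*\succ J}\bigr).
\]
Write $p \defeq P(J)$. For the left side we use $\theta$-regularity: $s_{i^*\succ j}<\theta$ for every $j$, so the left side is at most $\theta p$. Thus it suffices to prove
\[
\theta p \le \lambda\bigl(1-s_{i^*\succ J}\bigr),
\]
that is, $1-s_{i^*\succ J}\ge \frac{\theta}{\lambda}p$. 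This matches the displayed constraint, which asks that $p \le \frac{\lambda}{\theta}\max\set{1-\theta,\, 1-(\frac1{k+1}+\veps)p^{-k}}$. So the plan is to lower bound $1-s_{i^*\succ J}$ in two ways and take the better one.

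\textbf{First bound.} Whenever $J\neq\varnothing$, pick any $j\in J$. By \cref{fac:set2dis}, $s_{i^*\succ J}\le s_{i^*\succ j}<\theta$, so $1-s_{i^*\succ J}\ge 1-\theta$.

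\textbf{Second bound.} Use the RepApx guarantee with $a=i^*$: $\Pr{v\sim V}{i^*\succ_v D^k}\le \frac1{k+1}+\veps$. Any voter $v\in S_{i^*\succ J}$ ranks $i^*$ above every candidate of $J$. If all $k$ samples of $D^k$ fall in $J$, which happens with probability $p^k$, then $i^*$ is strictly the top element of $\set{i^*}\cup D^k$ for $v$, so $v$'s winning indicator equals $1$. Hence
\[
\Pr{v\sim V}{i^*\succ_v D^k}\ge s_{i^*\succ J}\,p^k,
\]
which gives $s_{i^*\succ J}\le(\frac1{k+1}+\veps)p^{-k}$ and so $1-s_{i^*\succ J}\ge 1-(\frac1{k+1}+\veps)p^{-k}$. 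One subtlety: $i^*$ may lie outside $\Supp(D)$. That is harmless, since the RepApx condition holds for every $a\in C$, and ties with multiplicities only arise if $i^*$ appears in the sample, which this event excludes.

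Combining the two bounds with the constraint gives the corollary's condition, and hence distortion at most $1+2\lambda$.

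\textbf{Computing the smallest $\lambda$.} This is the main obstacle. Set $c=\frac1{k+1}+\veps$. The first term of the max dominates exactly when $p^k\le c/\theta$. Two regimes follow:
\begin{itemize}
\item For small $p$, the binding constraint is $p\le\frac{\lambda}{\theta}(1-\theta)$. This is worst at the crossover point $p_0=(c/\theta)^{1/k}$, which forces $\lambda\ge\frac{\theta}{1-\theta}(c/\theta)^{1/k}$, the claimed formula.
\item For $p\ge p_0$, we need $g(p)=\frac{\lambda}{\theta}(1-cp^{-k})-p\ge 0$ on $[p_0,1]$. At $p_0$ the constraint is tight by the choice of $\lambda$. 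It remains to check that $g$ stays nonnegative up to $p=1$.
\end{itemize}
I would check the second regime by showing $g'(p_0)\ge0$ and that $g$ is concave on the interval, since $-cp^{-k}$ is concave. Concavity together with $g(p_0)=0$ and $g(1)\ge0$ then suffices. The endpoint condition $g(1)\ge0$ reads $\lambda(1-c)\ge\theta$, and the derivative condition is $\frac{\lambda}{\theta}kcp_0^{-k-1}\ge1$, i.e.\ $\lambda k\ge \theta p_0/(1-\theta)\cdot$ (after substitution) a condition like $k(1-\theta)\ge \theta$-dependent terms. This is where the hypotheses $c\le 2/k$ and $k\ge7$ (together with $\theta\le1$) should be used: they guarantee $p_0$ is not too close to $1$ and that $kc/\theta\cdot\theta/(1-\theta)$ dominates. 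I expect verifying these two inequalities to be the fiddly step. If the concavity argument fails, the fallback is direct monotonicity of $\frac{\lambda}{\theta}(1-cp^{-k})/p$ on $[p_0,1]$.
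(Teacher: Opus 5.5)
Your reduction via \cref{cor:biased_metric} with $\theta$-regularity, the bound $s_{i^*\succ J}<\theta$, your rederivation of \cref{lem:stability_representative}, and your necessity argument at the crossover $p_0=(c/\theta)^{1/k}$ (with $c=\frac1{k+1}+\veps$) all match the paper. The gap is sufficiency: you never show that $\lambda^*=\frac{\theta}{1-\theta}p_0$ satisfies the constraint on $[p_0,1]$. This step is load-bearing, because \cref{lem:sl_quasi_pruning_distortion} and the mixing analysis plug in exactly this formula. You correctly note that concavity of $g$ together with $g(p_0)=0$ and $g(1)\ge 0$ suffices, which is the paper's argument. But you then shift the burden onto a derivative condition that plays no role. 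In fact $g'(p_0)=\lambda^*k/p_0-1=\frac{\theta k}{1-\theta}-1>0$ holds automatically for $\theta>\frac12$. So it is not where $c\le\frac2k$ and $k\ge 7$ enter, and $g'(p_0)\ge0$ plus concavity would not give $g(1)\ge 0$ anyway.

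What is missing is the endpoint inequality $g(1)\ge0$, i.e.\ $\lambda^*(1-c)\ge\theta$, equivalently $(c/\theta)^{1/k}(1-c)\ge 1-\theta$. This is the only place the lemma's hypotheses are used. The paper proves it by the chain
\[
\Bigl(\frac{c}{\theta}\Bigr)^{1/k}(1-c)\ \ge\ c^{1/k}(1-c)\ \ge\ \Bigl(\frac1{k+1}\Bigr)^{1/k}\Bigl(1-\frac2k\Bigr)\ \ge\ \frac12\ \ge\ 1-\theta .
\]
The steps use, in order:
\begin{itemize}
\item $\theta\le 1$ and $c\ge\frac1{k+1}$ for the first factor;
\item $c\le\frac2k$ for the second factor;
\item a derivative check that $(k+1)^{-1/k}(1-\frac2k)\ge\frac12$ for integers $k\ge7$ (about $0.53$ at $k=7$ and increasing, but about $0.48$ at $k=6$);
\item $\theta>\frac12$.
\end{itemize}
You should also record that $p_0\le 1$, which holds since $c\le\frac27<\frac12<\theta$. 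Otherwise the first term of the max would dominate on all of $[0,1]$, and the minimal $\lambda$ would be $\frac{\theta}{1-\theta}$ rather than the stated formula. With these two checks added, your proof is complete and essentially coincides with the paper's.
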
 The proof of \cref{lem:sl_quasi_kernel} is deferred to \cref{subsec:sl_quasi_remaining}.

We now show that the metric distortion of $(\veps, k, \theta)$-\nameref{box:repapx_pruned_lotteries} is at most a constant that depends solely on $\veps, k$, and $\theta$.

\begin{lemma}\label{lem:sl_quasi_pruning_distortion}
  The metric distortion of $(\veps, k, \theta)$-\nameref{box:repapx_pruned_lotteries} is at most
  \[
  \left(1+2\lambda(\theta, k, \veps)\right)\left(\frac{4}{\theta} - 3\right),
  \] if  $\frac1{k + 1} + \veps \leq \frac2k$ and integer $k \geq 7$.
\end{lemma}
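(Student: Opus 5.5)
The plan is to compose two bounds: the distortion guarantee of \cref{lem:sl_quasi_kernel}, which holds relative to the best remaining candidate, and the two-hop bound of \cref{lem:two-hop-var}, which relates the best remaining candidate to the global optimum $i^*$.

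Let $Q \subseteq C$ be the quasi-kernel kept by $\theta$-\nameref{box:quasi_pruning}. Since $Q$ is independent in the pruning graph, the profile restricted to $Q$ is $\theta$-regular. The output is an $\veps$-RepApx Stable $k$-Lottery over $Q$. By assumption $\frac1{k+1}+\veps \le \frac2k$ and $k \ge 7$, so \cref{lem:sl_quasi_kernel} applies to the restricted election. Let $q^* \in \argmin_{q\in Q}\SC(q)$. Then the expected social cost of the output is at most $(1+2\lambda(\theta,k,\veps))\,\SC(q^*)$. This step uses that the metric restricted to $V \sqcup Q$ is still a pseudometric consistent with the restricted preferences, and that the distortion notion in \cref{lem:sl_quasi_kernel} is measured against the best candidate of $Q$.

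It remains to show $\SC(q^*) \le \left(\frac4\theta - 3\right)\SC(i^*)$, and it suffices to find some $q\in Q$ with this bound. There are three cases, according to how the quasi-kernel covers $i^*$.
\begin{itemize}
\item If $i^* \in Q$, then $\SC(q^*) = \SC(i^*)$. This is fine because $\frac4\theta - 3 \ge 1$ when $\theta \le 1$.
\item If there is an edge $q \to i^*$ with $q \in Q$, then $s_{q\succ i^*}\ge\theta$. \Cref{lem:one-hop} gives $\SC(q)\le(\frac2\theta-1)\SC(i^*)$, which is at most $(\frac4\theta-3)\SC(i^*)$ because $\frac2\theta - 2 \ge 0$.
\item If there is a path $q \to k \to i^*$ with $q\in Q$, then $s_{q\succ k}\ge\theta$ and $s_{k\succ i^*}\ge\theta$. \Cref{lem:two-hop-var} yields $\SC(q)\le(\frac4\theta-3)\SC(i^*)$ directly.
\end{itemize}
Multiplying the two factors gives the claimed bound $\left(1+2\lambda(\theta,k,\veps)\right)\left(\frac4\theta-3\right)$.

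The only delicate point is the first step: checking that \cref{lem:sl_quasi_kernel}'s relative guarantee transfers to the restricted instance. In particular, the RepApx conditions (support inclusion and the $\frac1{k+1}+\veps$ bound) must be interpreted within the election on $Q$, which is exactly how \nameref{box:repapx_pruned_lotteries} defines them. Given that, the rest is a direct case split on the quasi-kernel property.
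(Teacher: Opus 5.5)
Your proposal is correct and follows the paper's own proof. Both split into cases on how the quasi-kernel covers $i^*$, invoking \cref{lem:one-hop} and \cref{lem:two-hop-var}, and then apply \cref{lem:sl_quasi_kernel} to the $\theta$-regular restricted profile relative to its best remaining candidate; your explicit check that $\frac4\theta-3\ge\frac2\theta-1\ge 1$ for $\theta\le 1$ simply makes precise the paper's remark that the two-hop case is the worst.
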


As $\theta \to {\frac12}^+$, $k \to \infty$, and $\veps k \to 0^+$, the above expression converges to $15$.

\begin{proof}[Proof of \cref{lem:sl_quasi_pruning_distortion} Assuming \cref{lem:sl_quasi_kernel}]
Let $S$ denote the set of remaining candidates after $\theta$-\nameref{box:quasi_pruning}. Since $S$ is a quasi-kernel, at least one of the following three cases must hold: (1) $i^* \in S$, (2) there exists a candidate $j \in S$ such that $s_{j \succ i^*} \geq \theta$, in which case $\SC(j) \leq \left(\frac{2}{\theta} - 1\right) \cdot \SC(i^*)$ by \cref{lem:one-hop}, (3) there exists a candidate $j \in S$ and a candidate $k \notin S$ such that $s_{k \succ i^*} \geq \theta$ and $s_{j \succ k} \geq \theta$, in which case $\SC(j) \leq \left(\frac{4}{\theta} - 3\right) \cdot \SC(i^*)$ by \cref{lem:two-hop-var}. Among these, the third case yields the worst distortion bound, and therefore it suffices to analyze this case.

Let $j^*$ denote a candidate in $S$ with the minimum social cost. Then, by $\theta$-regularity, we can apply \cref{lem:sl_quasi_kernel} to show that the expected social cost of an $\veps$-RepApx Stable $k$-Lottery over $S$ is at most
\[
  \left(1+2\lambda(\theta, k, \veps)\right)\cdot\SC(j^*) \leq \left(1+2\lambda(\theta, k, \veps)\right)\cdot \SC(j) \leq \left(1+2\lambda(\theta, k, \veps)\right)\left(\frac{4}{\theta} - 3\right)\cdot \SC(i^*).
\] This completes the proof.
\end{proof}

\subsection{Metric Distortion of RepApx Stable Lotteries Among Regular Profiles}\label{subsec:sl_quasi_remaining}
We begin with a lemma capturing the intuition that if many voters prefer a single candidate $i \in C \setminus J$ to every candidate in $J$, then a RepApx Stable Lottery cannot assign too much probability mass to $J$. The proof idea of our lemma follows closely from that of \cite[Lemma 5.16]{DBLP:conf/sigecom/CharikarRTW25}, the \emph{exact Stable Lotteries} counterpart of our lemma.

\begin{lemma}\label[lemma]{lem:stability_representative}
  Fix an $\veps$-RepApx Stable Lottery $D_\ekSL{\veps}{k}$. Consider any candidate subset $J \subsetneq C$ and any candidate $i \in C \setminus J$. Let $p_J$ denote the probability that $D_\ekSL{\veps}{k}$ selects a candidate in $J$. If $p_J > 0$, then
  \[
  s_{i \succ J} \leq \left(\frac{1}{k + 1} + \veps\right)p_J^{-k}.
  \]
\end{lemma}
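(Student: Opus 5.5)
The plan is to apply the second condition of \cref{def:repapx_SL} with the single candidate $a = i$. This gives
\[
\Pr{v \sim V}{i \succ_v D^k} \leq \frac{1}{k+1} + \veps,
\]
where $D = D_\ekSL{\veps}{k}$. It then suffices to lower bound the left-hand side by $s_{i \succ J}\, p_J^k$.

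Fix a voter $v \in S_{i \succ J}$, so that $v$ prefers $i$ to every candidate in $J$. Consider the event that all $k$ samples of $D^k$ land in $J$; it has probability $p_J^k$. On this event, the multiset $A = \{s_1,\dots,s_k\}$ satisfies $A \subseteq J$. Since $i \notin J$, $i$ is the most preferred element of $\{i\} \cup A$ for voter $v$, and it does not appear in $A$. Hence, by the definition of multiset comparison, $\Pr{}{i \succ_v A} = n_{\{i\}}(i)/(n_{\{i\}}(i) + n_A(i)) = 1/(1+0) = 1$.

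The comparison probability is nonnegative for every other realization of $A$. Therefore $\Pr{}{i \succ_v D^k} \geq p_J^k$ for each such voter $v$. Voters outside $S_{i\succ J}$ contribute at least $0$. Averaging over $v \sim V$ gives
\[
\Pr{v\sim V}{i \succ_v D^k} \geq s_{i \succ J}\, p_J^k.
\]

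Combining the two displays and dividing by $p_J^k > 0$ yields the claimed bound $s_{i \succ J} \leq \left(\frac{1}{k+1}+\veps\right)p_J^{-k}$.

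There is no serious obstacle. The only point needing care is that the multiset comparison in the definition uses the tie-breaking formula $n_A(x)/(n_A(x)+n_B(x))$. One must check that when $i \notin A$, the top element of the union is $i$ itself, which gives probability exactly $1$ rather than a fractional share. The support condition of \cref{def:repapx_SL} is not needed here.
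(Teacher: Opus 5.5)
Your proposal is correct and matches the paper's proof: apply the RepApx condition at $a = i$, drop the voters outside $S_{i \succ J}$, and lower bound each remaining voter's winning probability by $p_J^k$, the chance that all $k$ draws land in $J$. Your explicit check that the tie-breaking formula gives probability exactly $1$ on that event is a detail the paper leaves implicit.
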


\begin{proof}[Proof of \cref{lem:stability_representative}]
 Let $D = D_\ekSL{\veps}{k}$. Let $S$ be the set of voters who prefer $i$ to all candidates in $J$. Then we have
\begin{align*}
\frac1{k + 1} + \veps &\geq \Pr{v\sim V}{i\succ_v D^k} \tag{\cref{def:repapx_SL}}\\
&= s_{i\succ J}\cdot\Pr{v\sim S}{i\succ_v D^k} + (1-s_{i\succ J})\cdot\Pr{v\sim C\setminus S}{i\succ_v D^k} \tag{law of total probability}\\
&\geq s_{i\succ J}\cdot\Pr{v\sim S}{i\succ_v D^k}\\
&\geq s_{i\succ J}\cdot p^k_J. \tag{$i\succ_v J$ for all voter $v \in S$}
\end{align*}
Rearranging yields the claim.
\end{proof}

We now establish the metric distortion of RepApx Stable Lotteries among regular profiles.

\begin{proof}[Proof of \cref{lem:sl_quasi_kernel}]
By \cref{cor:biased_metric}, it suffices to verify the biased-metric condition \[\sum_{j\in J} s_{i ^ *\succ j}\,p_j \leq \lambda(1-s_{i ^ *\succ J}) \qquad \text{for all $J \subseteq C\setminus\set{i^*}$}.\] Writing $p_J=\sum_{j\in J}p_j$ and using $\theta$-regularity, this condition is implied by
\[
 p_J \leq \frac{\lambda}{\theta}\cdot(1-s_{i ^*\succ J}).
\]

We derive two sufficient cases.

\begin{itemize}
  \item If $p_J\leq \frac{\lambda}{\theta}(1-\theta)$, the inequality holds immediately since \[s_{i^* \succ J} \mkern-3mu \stackrel{ \text{\cref{fac:set2dis}} }{\leq} \mkern-3mu \min_{j \in J}s_{i^* \succ j} \mkern-3mu \stackrel{ \text{$\theta$-regularity}}{<} \mkern-3mu \theta.\]
  \item Otherwise, \cref{lem:stability_representative} gives $s_{i^* \succ J} \leq \left(\frac1{k + 1} + \veps\right)p_J^{-k}$, equivalent to $\frac{\lambda}{\theta}\cdot\left(1-\left(\frac{1}{k+1}+\veps\right)p_J^{-k}\right)\leq\frac{\lambda}{\theta}\cdot(1-s_{i^*\succ J})$. Hence,
\[
p_J\leq\frac{\lambda}{\theta}\cdot\left(1-\left(\frac{1}{k+1}+\veps\right)p_J^{-k}\right)
\]
is also a sufficient condition for $p_J \leq \frac{\lambda}{\theta}\cdot(1-s_{i ^*\succ J})$.
\end{itemize}
 
To complete the proof, we choose $\lambda$ so that for every $p_J \in [0, 1]$, at least one of the above two cases applies. The stated expression for $\lambda$ is the minimum value that satisfies this requirement; the derivation is deferred to \cref{sec:missing_sl_quasi_kernel}.
\end{proof}
 
\section{Mixing Rules}

Neither RepApx Maximal Lotteries nor \nameref{box:repapx_pruned_lotteries} achieve distortion below $3$ on their own. To combine the strength of these two rules, we introduce the notion of a \emph{strongly consistent} biased metric, proposed by \cite{DBLP:journals/jacm/CharikarRWW24}.\footnote{They defined $(\alpha, \beta)$-consistency in the main body and a slightly stronger variant in the appendix; we adopt the latter here and name it ``strong $(\alpha, \beta)$-consistency.''} By analyzing the behavior of each rule on biased metrics that are, or are not, strongly consistent, we show that the two rules are complementary in the sense that an appropriate mixture of them achieves distortion strictly less than $3$.

\begin{definition}(Strongly Consistent Biased Metric)
\label{def:consistent}
A biased metric for an election instance is \emph{strongly $(\alpha,\beta)$-consistent} if whenever $s_{a \succ b} \geq \beta$, we have $x_a - x_b \leq \alpha R$.
\end{definition}

This definition quantifies how well the biased metric aligns with the preference profile. Intuitively, if candidate $a$ defeats candidate $b$ by a non-negligible margin, then $a$ should not be significantly farther from the optimal candidate $i^*$ than $b$. This consistency condition can also be directly characterized in terms of the function $r$, as formalized by the following lemma.

\begin{lemma}[similar to {\cite[Proposition 3]{DBLP:journals/jacm/CharikarRWW24}}]\label[lemma]{lem:r_consist}
  For any fixed biased metric, if $r(\alpha R) < \beta$, then the metric is strongly $(\alpha, \beta)$-consistent.
\end{lemma}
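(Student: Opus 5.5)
We argue by contraposition: assuming the metric is not strongly $(\alpha,\beta)$-consistent, we show $r(\alpha R) \geq \beta$. So suppose there exist candidates $a,b$ with $s_{a \succ b} \geq \beta$ but $x_a - x_b > \alpha R$. The goal is to bound from above the fraction of voters counted in $s_{\forall i \succ j,\, x_i - x_j \leq \alpha R}$, that is, the voters $v$ with $2d(i^*,v) \leq \alpha R$ (\cref{fac:d_to_s_2}).

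The key step is a set inclusion. Take any voter $v$ with $a \succ_v b$. Then the pair $(i,j) = (a,b)$ satisfies $i \succ_v j$ and $x_i - x_j > \alpha R$, so $v$ fails the defining predicate $\forall i \succ_v j,\ x_i - x_j \leq \alpha R$. Hence
\[
S_{\forall i \succ j,\, x_i - x_j \leq \alpha R} \subseteq V \setminus S_{a \succ b},
\]
which gives $s_{\forall i \succ j,\, x_i - x_j \leq \alpha R} \leq 1 - s_{a \succ b} \leq 1 - \beta$. Substituting into the definition of $r$ yields
\[
r(\alpha R) = 1 - s_{\forall i \succ j,\, x_i - x_j \leq \alpha R} \geq s_{a \succ b} \geq \beta .
\]
This contradicts the hypothesis $r(\alpha R) < \beta$, and therefore the metric is strongly $(\alpha,\beta)$-consistent.

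There is essentially no obstacle. The only point needing care is that the predicate in \cref{fac:d_to_s_2} uses strict preference $i \succ_v j$, and $a \succ_v b$ is exactly such a strict preference, so no tie or degenerate case arises. The convention $s_{j \succ j} = \tfrac12$ plays no role either: the strict inequality $x_a - x_b > \alpha R \geq 0$ forces $a \neq b$.
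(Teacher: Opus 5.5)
Your proof is correct and is essentially the paper's argument in contrapositive form. The paper observes that $s_{\forall i \succ j,\, x_i - x_j \leq \alpha R} > 1-\beta$ and $s_{a \succ b} \geq \beta$ force the two voter sets to intersect, and a voter in the intersection witnesses $x_a - x_b \leq \alpha R$; your disjointness argument giving $r(\alpha R) \geq \beta$ is that same step, and your handling of the non-strict threshold and of the degenerate case $a = b$ is sound.
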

\begin{proof}[Proof of \cref{lem:r_consist}]
  By definition, $r(\alpha R) < \beta$ implies $s_{\forall i \succ j, x_i - x_j \leq \alpha R} > 1 - \beta$. Now fix any pair of candidates $a, b$ with $s_{a \succ b} > \beta$. We then have
  \[
s_{\forall i \succ j, x_i - x_j \leq \alpha R} + s_{a \succ b} > 1 - \beta + \beta = 1.
  \] Therefore, there exists at least one voter $v$ who belongs to the intersection \[S_{\forall i \succ j, x_i - x_j \leq \alpha R} \cap S_{a \succ b}.\] For this voter $v$, since $a \succ_v b$ and $v \in S_{\forall i \succ j, x_i - x_j \leq \alpha R}$, it follows in particular that $x_a - x_b \leq \alpha R$. Hence, the metric is strongly $(\alpha, \beta)$-consistent.
\end{proof}

\subsection{Part I: Inconsistent Biased Metrics}
In this part, we analyze metric distortion under inconsistent biased metrics. Specifically, we consider the case in which the biased metric is \emph{not} strongly $\left(\alpha, \frac12 + \tilde{\beta}\right)$-consistent. In this case, RepApx Maximal Lotteries attain distortion below $3$ by a constant with appropriate parameter choices. As for the other component, although \nameref{box:repapx_pruned_lotteries} do not benefit from this regime, their metric distortion is at most a constant for all metrics (\cref{lem:sl_quasi_pruning_distortion}).

\begin{lemma}\label[lemma]{lem:approx_ML_inconsistent}
  Given a biased metric that is not strongly $\left(\alpha, \frac12 + \tilde{\beta}\right)$-consistent, if $\veps < 1 / \sqrt{10}$, then an $\veps^2$-RepApx Maximal Lottery $D_\eML{\veps ^ 2}$ has metric distortion at most $3 + 28\veps - 2\alpha\tilde{\beta}$.
\end{lemma}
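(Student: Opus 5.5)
The plan is to reuse the decomposition in the proof of \cref{thm:distortion_repapx_ML}, and to show that inconsistency forces $r(t)$ to sit strictly above $\tfrac12$ on an initial interval of length $\alpha R$. That slack yields an extra saving of $\alpha\tilde\beta R$ in $L(D)$. Throughout, let $D=D_\eML{\veps^2}$, and let $\tau$ be as in \cref{sec:ml_distortion}.

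\textbf{Step 1 (consistency failure gives a lower bound on $r$).} By the contrapositive of \cref{lem:r_consist}, if the metric is not strongly $\left(\alpha,\frac12+\tilde\beta\right)$-consistent, then $r(\alpha R)\ge \frac12+\tilde\beta$. There is a minor boundary issue: the definition uses $s_{a\succ b}\ge\beta$ while the proof of \cref{lem:r_consist} uses a strict inequality. I would either repeat that counting argument with the weak inequality, or absorb the boundary case by perturbing $\tilde\beta$ slightly and taking a limit. Since $r$ is nonincreasing in $t$, we then get $r(t)\ge\frac12+\tilde\beta$ for all $t\in[0,\alpha R]$.

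\textbf{Step 2 (pointwise gain on the early range).} For $t<\tau$, \cref{lem:repapx_ML_large_case} gives $\ell(D,t)\le\min\{\frac12,r(t)\}+2\veps$. For $t<\min\{\tau,\alpha R\}$, Step 1 gives $\frac12\le r(t)-\tilde\beta$, so $\ell(D,t)\le r(t)-\tilde\beta+2\veps$. For $t\in[\alpha R,\tau)$, I would keep the bound $\ell(D,t)\le r(t)+2\veps$. For $t\ge\tau$, I would use \cref{lem:repapx_ML_small_case}, which gives $\int_\tau^\infty\ell\le 4\veps R$.

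\textbf{Step 3 (integration, split on $\tau$ versus $\alpha R$).}
\begin{itemize}
\item \emph{Case $\tau\ge\alpha R$.} Integrating gives $L(D)\le\int_0^\tau r-\alpha\tilde\beta R+2\veps\tau+4\veps R$. Using $\int_0^\tau r\le R$ and \cref{fac:tau}, this is at most $R(1+14\veps-\alpha\tilde\beta)$.
\item \emph{Case $\tau<\alpha R$.} This is the step needing care, because the gain is collected only on $[0,\tau)$. On the remaining interval I would use the area of $r$ that $L$ never needs to cover. Step 1 gives $\int_\tau^\infty r\ge(\frac12+\tilde\beta)(\alpha R-\tau)$. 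Hence
\[
L(D)\le R-\int_\tau^\infty r-\tilde\beta\tau+2\veps\tau+4\veps R\le R-(\tfrac12+\tilde\beta)(\alpha R-\tau)-\tilde\beta\tau+14\veps R\le R(1+14\veps-\alpha\tilde\beta).
\]
\end{itemize}

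\textbf{Step 4 (conclusion).} In both cases $L(D)\le(1+14\veps-\alpha\tilde\beta)R$. \cref{thm:biased_metric} with $\lambda=1+14\veps-\alpha\tilde\beta$ then yields distortion at most $3+28\veps-2\alpha\tilde\beta$.

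The main obstacle I expect is the case $\tau<\alpha R$, since there the $\tilde\beta$-gain from \cref{lem:repapx_ML_large_case} is unavailable on $[\tau,\alpha R]$. The fix I would try first is the one in Step 3: the mass of $r$ on that interval, at least $\frac12+\tilde\beta\ge\tilde\beta$ per unit length, is not consumed by $L$, so it substitutes for the missing gain. The only other point to check is the strict-versus-weak threshold in Step 1.
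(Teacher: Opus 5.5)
Your proof is correct. Step 1, the case split on $\tau$ versus $\alpha R$, and your case $\tau \ge \alpha R$ coincide with the paper's argument (its Case~2). The two proofs differ only in the case $\tau < \alpha R$, which you flagged as the main obstacle.

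\textbf{How the paper handles $\tau < \alpha R$.} It notes that $\ell(D,t) \le \frac12 + 2\veps$ holds for every $t \ge 0$, not only for $t<\tau$. The argument at the start of the proof of \cref{lem:repapx_ML_large_case} gives $\ell(D,t) \le s_{i^*\succ D} \le \frac12+\veps^2$ for all $t$. So $\ell(D,t) \le r(t)+2\veps-\tilde\beta$ on all of $[0,\alpha R)$, regardless of $\tau$. The paper then bounds $\int_{\alpha R}^\infty \ell \le \int_\tau^\infty \ell \le 4\veps R$, overcounting $[\tau,\alpha R)$. Finally it uses $\alpha \le 2$, which follows from $\alpha R\cdot r(\alpha R)\le R$ and $r(\alpha R)\ge\frac12$. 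This yields $L(D)\le(1+8\veps-\alpha\tilde\beta)R$.

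\textbf{How you handle it.} You collect the $\tilde\beta$-gain only on $[0,\tau)$. You then recover the missing $\tilde\beta(\alpha R-\tau)$ from the area of $r$ on $[\tau,\alpha R)$. That area is at least $(\frac12+\tilde\beta)(\alpha R-\tau)$, and it is never consumed, because the tail of $\ell$ is charged separately to $4\veps R$. This is valid. It avoids both the global pointwise bound on $\ell$ and the $\alpha\le 2$ observation. It also sidesteps the slightly loose justification of the global bound via monotonicity in the paper, which says nothing when $\tau=0$. The price is the constant $14\veps$ rather than $8\veps$ in this case, which does not matter since the final bound is $14\veps$ anyway.

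\textbf{On the strict-versus-weak threshold in Step 1.} This worry is unnecessary. In the proof of \cref{lem:r_consist}, the strictness comes from the hypothesis $r(\alpha R)<\beta$. So $s_{a\succ b}\ge\beta$ already gives $s_{\forall i \succ j,\, x_i - x_j \le \alpha R}+s_{a\succ b}>1$, and no perturbation is needed.
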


In particular, when $\veps$ is chosen sufficiently small relative to $\alpha\tilde{\beta}$, the resulting distortion is strictly less than $3$.

\begin{proof}[Proof of \cref{lem:approx_ML_inconsistent}]
By \cref{lem:r_consist}, in this inconsistent regime we have $r(\alpha R) \geq \frac12 + \tilde{\beta}$.

  Let $D = D_\eML{\veps ^ 2}$. By \cref{lem:repapx_ML_large_case}, for all $t \leq \tau$, it holds that
  \[
  \ell(D, t) \leq \min\set{\frac12, r(t)} + 2\veps.
  \]
 Therefore, $\ell(D, t)$ lies below the line $\frac12 + 2\veps$ for all $t \geq 0$ (since $\ell(D, t)$ is monotonically non-increasing in $t$), and below $r(t) + 2\veps$ when $t \leq \tau$. Moreover, for all $t < \alpha R$,
 \begin{equation}\label{eq:2}
 \ell(D, t) \leq \frac12 + 2\veps \leq r(\alpha R) + 2\veps - \tilde{\beta} \leq r(t) + 2\veps - \tilde{\beta}.
 \end{equation}
 Geometrically, this creates a rectangular region of area at least $\alpha \tilde{\beta} R$ that lies above $\ell(D, t)$ and below $r(t) + 2\veps$. When this gap is large, we obtain a sharper bound on $L(D) / R$.
  
We distinguish two cases, depending on the relative positions of $\alpha R$ and $\tau$.

\begin{figure}[t!]
    \centering
    \begin{subfigure}[t]{0.49\textwidth}
        \centering
        \begin{tikzpicture}[
    scale=0.53,
myredline/.style={color=red!80!black, line width=1pt},
    myredfill/.style={color=red!60, opacity=0.8},
    myblueline/.style={color=blue!40!teal!80!black, line width=1pt},
    mybluefill/.style={color=blue!40!teal!30, opacity=0.8},
    axis/.style={line width=1pt, black},
    dashed guide/.style={dashed, thin, black!80, dash pattern=on 5pt off 3pt},
dot/.style={circle, fill=white, draw=#1, line width=1pt, inner sep=1pt},
filldot/.style={circle, fill=#1, inner sep=1pt}
]

\def\yTop{7}
    \def\yB{4.5}
    \def\yHalf{3.5}
    \def\yZero{0}
    \def\xMax{9}
    \def\xTau{4.0}
    \def\xAlphaR{5.0}
    \def\xMu{5.9}
    \def\yEps{0.2}

\fill[mybluefill] 
        (0,\yZero) -- (0,3.2) -- (2.2,3.2) -- (2.2,2.8) -- (4.0,2.8) -- 
        (4.0,2.4) -- (6.3,2.4) -- (6.3,1.9) -- (8,1.9) -- (8,\yZero) -- cycle;

    \begin{scope}
        \clip (0,\yHalf) rectangle (\xAlphaR, \yB);
        \fill[myredfill] 
            (0,0) -- (0,6.2) -- (1.8,6.2) -- (1.8,5.5) -- (3.5, 5.5) -- 
            (3.5,4.8) -- (5.5,4.8) -- (5.5,1.5) -- (7.0, 1.5) -- 
            (7.0,1.0) -- (8.5,1.0) -- (8.5,\yEps) -- (\xMax,\yEps) -- (\xMax, 0) -- cycle;
    \end{scope}

\draw[dashed guide, very thick] (0, \yHalf) -- (\xMax, \yHalf) node[left, pos=0, xshift=-0.1cm] {$\frac{1}{2} + 2\veps$};
    \draw[axis] (0, \yTop) -- (0, 0) -- (\xMax, 0) node[below] {$t$};
    \draw[dashed guide, very thick] (\xTau, \yTop) -- (\xTau, 0) node[below] {$\tau$};
    \draw[dashed guide, color=red!60, very thick] (\xAlphaR, \yHalf) -- (\xAlphaR, 0) node[below] {$\alpha R$};

    \draw[line width=1pt] (0, \yTop - 0.03) -- (-0.2, \yTop - 0.03) node[left] {$1 + 2\veps$};
    \draw[line width=1pt] (0, \yB - 0.03) -- (-0.2, \yB - 0.03) node[left] {$\frac{1}{2} + 2\veps + \tilde{\beta}$};
    \draw[line width=1pt] (0, \yEps - 0.03) -- (-0.2, \yEps - 0.03) node[left] {$2\veps$};

\coordinate (B_start) at (0, 3.2);
    \coordinate (B1_R) at (2.2, 3.2); \coordinate (B1_low) at (2.2, 2.8);
    \coordinate (B2_R) at (4.0, 2.8); \coordinate (B2_low) at (4.0, 2.4);
    \coordinate (B3_R) at (6.3, 2.4); \coordinate (B3_low) at (6.3, 1.9);
    \coordinate (B4_R) at (8, 1.9); \coordinate (B4_low) at (8, 0);
    \coordinate (B_end) at (\xMax, 0);

    \draw[myblueline] (B_start) -- (B1_R);
    \draw[myblueline] (B1_low) -- (B2_R);
    \draw[myblueline] (B2_low) -- (B3_R);
    \draw[myblueline] (B3_low) -- (B4_R);
    \draw[myblueline] (B4_low) -- (B_end);

    \draw[myblueline, dashed] (B1_R) -- (B1_low);
    \draw[myblueline, dashed] (B2_R) -- (B2_low);
    \draw[myblueline, dashed] (B3_R) -- (B3_low);
    \draw[myblueline, dashed] (B4_R) -- (B4_low);

\def\blueColor{blue!40!teal!80!black}
    
    \node[dot=\blueColor] at (B1_R) {};
    \node[filldot=\blueColor] at (B1_low) {};
    
    \node[dot=\blueColor] at (B2_R) {};
    \node[filldot=\blueColor] at (B2_low) {};
    
    \node[dot=\blueColor] at (B3_R) {};
    \node[filldot=\blueColor] at (B3_low) {};
    
    \node[dot=\blueColor] at (B4_R) {};
    \node[filldot=\blueColor] at (B4_low) {};

\coordinate (R_start) at (0, 6.2);
    \coordinate (R1_R) at (1.8, 6.2); \coordinate (R1_low) at (1.8, 5.5);
    \coordinate (R2_R) at (3.5, 5.5); \coordinate (R2_low) at (3.5, 4.8);
    \coordinate (R3_R) at (5.5, 4.8); \coordinate (R3_low) at (5.5, 1.5);
    \coordinate (R4_R) at (7.0, 1.5); \coordinate (R4_low) at (7.0, 1.0);
    \coordinate (R5_R) at (8.5, 1.0); \coordinate (R5_low) at (8.5, \yEps);
    \coordinate (R6_R) at (\xMax, \yEps); \coordinate (R_end) at (\xMax, 0);

    \draw[myredline] (R_start) -- (R1_R);
    \draw[myredline] (R1_low) -- (R2_R);
    \draw[myredline] (R2_low) -- (R3_R);
    \draw[myredline] (R3_low) -- (R4_R);
    \draw[myredline] (R4_low) -- (R5_R);
    \draw[myredline] (R5_low) -- (R6_R);

    \draw[myredline, dashed] (R1_R) -- (R1_low);
    \draw[myredline, dashed] (R2_R) -- (R2_low);
    \draw[myredline, dashed] (R3_R) -- (R3_low);
    \draw[myredline, dashed] (R4_R) -- (R4_low);
    \draw[myredline, dashed] (R5_R) -- (R5_low);

\def\redColor{red!80!black}

    \node[dot=\redColor] at (R1_R) {};
    \node[filldot=\redColor] at (R1_low) {};
    \node[dot=\redColor] at (R2_R) {};
    \node[filldot=\redColor] at (R2_low) {};
    \node[dot=\redColor] at (R3_R) {};
    \node[filldot=\redColor] at (R3_low) {};
    \node[dot=\redColor] at (R4_R) {};
    \node[filldot=\redColor] at (R4_low) {};
    \node[dot=\redColor] at (R5_R) {};
    \node[filldot=\redColor] at (R5_low) {};

\node[blue!40!teal!80!black, anchor=west] at (7.5, 2.5) {$\ell(D, t)$};
    
\node[red!80!black, anchor=south west] at (5.5, 4.5) {$r(t) + 2\veps$};

\end{tikzpicture}
         \caption{$\ell(D, t)$ and $r(t) + 2\veps$ if $\alpha R \geq \tau$}
        \label{fig:repapx_ML_ab_fig1}
    \end{subfigure}~ 
    \begin{subfigure}[t]{0.49\textwidth}
        \centering
        \begin{tikzpicture}[
    scale=0.53,
myredline/.style={color=red!80!black, line width=1pt},
    myredfill/.style={color=red!60, opacity=0.8},
    myblueline/.style={color=blue!40!teal!80!black, line width=1pt},
    mybluefill/.style={color=blue!40!teal!30, opacity=0.8},
    axis/.style={line width=1pt, black},
    dashed guide/.style={dashed, thin, black!80, dash pattern=on 5pt off 3pt},
dot/.style={circle, fill=white, draw=#1, line width=1pt, inner sep=1pt},
filldot/.style={circle, fill=#1, inner sep=1pt}
]

\def\yTop{7}
    \def\yB{4.5}
    \def\yHalf{3.5}
    \def\yZero{0}
    \def\xMax{9}
    \def\xTau{4.0}
    \def\xAlphaR{3.0}
    \def\xMu{5.9}
    \def\yEps{0.2}

\fill[mybluefill] 
        (0,\yZero) -- (0,3.2) -- (2.2,3.2) -- (2.2,2.8) -- (4.0,2.8) -- 
        (4.0,2.4) -- (6.3,2.4) -- (6.3,1.9) -- (8,1.9) -- (8,\yZero) -- cycle;

    \begin{scope}
        \clip (0,\yHalf) rectangle (\xAlphaR, \yB);
        \fill[myredfill] 
            (0,0) -- (0,6.2) -- (1.8,6.2) -- (1.8,5.5) -- (3.5, 5.5) -- 
            (3.5,4.8) -- (5.5,4.8) -- (5.5,1.5) -- (7.0, 1.5) -- 
            (7.0,1.0) -- (8.5,1.0) -- (8.5,\yEps) -- (\xMax,\yEps) -- (\xMax, 0) -- cycle;
    \end{scope}

\draw[dashed guide, very thick] (0, \yHalf) -- (\xMax, \yHalf) node[left, pos=0, xshift=-0.1cm] {$\frac{1}{2} + 2\veps$};
    \draw[axis] (0, \yTop) -- (0, 0) -- (\xMax, 0) node[below] {$t$};
    \draw[dashed guide, very thick] (\xTau, \yTop) -- (\xTau, 0) node[below] {$\tau$};
    \draw[dashed guide, color=red!60, very thick] (\xAlphaR, \yHalf) -- (\xAlphaR, 0) node[below] {$\alpha R$};

    \draw[line width=1pt] (0, \yTop - 0.03) -- (-0.2, \yTop - 0.03) node[left] {$1 + 2\veps$};
    \draw[line width=1pt] (0, \yB - 0.03) -- (-0.2, \yB - 0.03) node[left] {$\frac{1}{2} + 2\veps + \tilde{\beta}$};
    \draw[line width=1pt] (0, \yEps - 0.03) -- (-0.2, \yEps - 0.03) node[left] {$2\veps$};

\coordinate (B_start) at (0, 3.2);
    \coordinate (B1_R) at (2.2, 3.2); \coordinate (B1_low) at (2.2, 2.8);
    \coordinate (B2_R) at (4.0, 2.8); \coordinate (B2_low) at (4.0, 2.4);
    \coordinate (B3_R) at (6.3, 2.4); \coordinate (B3_low) at (6.3, 1.9);
    \coordinate (B4_R) at (8, 1.9); \coordinate (B4_low) at (8, 0);
    \coordinate (B_end) at (\xMax, 0);

    \draw[myblueline] (B_start) -- (B1_R);
    \draw[myblueline] (B1_low) -- (B2_R);
    \draw[myblueline] (B2_low) -- (B3_R);
    \draw[myblueline] (B3_low) -- (B4_R);
    \draw[myblueline] (B4_low) -- (B_end);

    \draw[myblueline, dashed] (B1_R) -- (B1_low);
    \draw[myblueline, dashed] (B2_R) -- (B2_low);
    \draw[myblueline, dashed] (B3_R) -- (B3_low);
    \draw[myblueline, dashed] (B4_R) -- (B4_low);

\def\blueColor{blue!40!teal!80!black}
    
    \node[dot=\blueColor] at (B1_R) {};
    \node[filldot=\blueColor] at (B1_low) {};
    
    \node[dot=\blueColor] at (B2_R) {};
    \node[filldot=\blueColor] at (B2_low) {};
    
    \node[dot=\blueColor] at (B3_R) {};
    \node[filldot=\blueColor] at (B3_low) {};
    
    \node[dot=\blueColor] at (B4_R) {};
    \node[filldot=\blueColor] at (B4_low) {};

\coordinate (R_start) at (0, 6.2);
    \coordinate (R1_R) at (1.8, 6.2); \coordinate (R1_low) at (1.8, 5.5);
    \coordinate (R2_R) at (3.5, 5.5); \coordinate (R2_low) at (3.5, 4.8);
    \coordinate (R3_R) at (5.5, 4.8); \coordinate (R3_low) at (5.5, 1.5);
    \coordinate (R4_R) at (7.0, 1.5); \coordinate (R4_low) at (7.0, 1.0);
    \coordinate (R5_R) at (8.5, 1.0); \coordinate (R5_low) at (8.5, \yEps);
    \coordinate (R6_R) at (\xMax, \yEps); \coordinate (R_end) at (\xMax, 0);

    \draw[myredline] (R_start) -- (R1_R);
    \draw[myredline] (R1_low) -- (R2_R);
    \draw[myredline] (R2_low) -- (R3_R);
    \draw[myredline] (R3_low) -- (R4_R);
    \draw[myredline] (R4_low) -- (R5_R);
    \draw[myredline] (R5_low) -- (R6_R);

    \draw[myredline, dashed] (R1_R) -- (R1_low);
    \draw[myredline, dashed] (R2_R) -- (R2_low);
    \draw[myredline, dashed] (R3_R) -- (R3_low);
    \draw[myredline, dashed] (R4_R) -- (R4_low);
    \draw[myredline, dashed] (R5_R) -- (R5_low);

\def\redColor{red!80!black}

    \node[dot=\redColor] at (R1_R) {};
    \node[filldot=\redColor] at (R1_low) {};
    \node[dot=\redColor] at (R2_R) {};
    \node[filldot=\redColor] at (R2_low) {};
    \node[dot=\redColor] at (R3_R) {};
    \node[filldot=\redColor] at (R3_low) {};
    \node[dot=\redColor] at (R4_R) {};
    \node[filldot=\redColor] at (R4_low) {};
    \node[dot=\redColor] at (R5_R) {};
    \node[filldot=\redColor] at (R5_low) {};

\node[blue!40!teal!80!black, anchor=west] at (7.5, 2.5) {$\ell(D, t)$};
    
\node[red!80!black, anchor=south west] at (5.5, 4.5) {$r(t) + 2\veps$};

\end{tikzpicture}
         \caption{$\ell(D, t)$ and $r(t) + 2\veps$ if $\alpha R < \tau$}
        \label{fig:repapx_ML_ab_fig2}
    \end{subfigure}
    \caption{Pictorial illustration of $\ell(D, t)$ and $r(t) + 2\veps$}
\end{figure}

\paragraph{Case 1: $\alpha R \geq \tau$ (see \cref{fig:repapx_ML_ab_fig1}).} Split the integral defining $L(D)$ into two parts: $0 \leq t < \alpha R$ and $t \geq \alpha R$. We can compute,
  \begin{align*}
    L(D) &= \int_0^{\alpha R}\ell(D, t)\diff t + \int_{\alpha R}^\infty \ell(D, t)\diff t\\
    &\leq \int_0^{\alpha R}\ell(D, t)\diff t + \int_{\tau}^\infty \ell(D, t)\diff t \tag{$\tau \leq \alpha R$}\\
    &\leq \int_0^{\alpha R}(r(t) + 2\veps - \tilde{\beta})\diff t + 4 \veps R \tag{\cref{eq:2,lem:repapx_ML_small_case}}\\
    &\leq (1 + (4 + 2\alpha)\veps - \alpha\tilde{\beta})R.
  \end{align*} Since $\alpha R \cdot r(\alpha R) \leq R$ (as the integral region represented by $R$ includes a rectangle with sides $\alpha R$ and $r(\alpha R)$) and $r(\alpha R) \geq \frac12$, we have $\alpha \leq 2$. Hence, $L(D) \leq (1 + 8\veps - \alpha\tilde{\beta})R$.

  \paragraph{Case 2: $\alpha R < \tau$ (see \cref{fig:repapx_ML_ab_fig2}).} Split the integral defining $L(D)$ into three parts: $0 \leq t < \alpha R$, $\alpha R \leq t < \tau$, and $t \geq \tau$. We can compute
  \begin{align*}
  L(D) &= \int_0^{\alpha R}\ell(D, t)\diff t + \int_{\alpha R}^{\tau}\ell(D, t)\diff t + \int_{\tau}^\infty \ell(D, t) \diff t\\
  &\leq \int_0^{\alpha R}(r(t) + 2\veps - \tilde{\beta})\diff t + \int_{\alpha R}^\tau(r(t) + 2\veps) \diff t + 4 \veps R\tag{\cref{eq:2,lem:repapx_ML_large_case,lem:repapx_ML_small_case}}\\
  &= \int_0^\tau r(t)\diff t + 2\veps\tau + 4\veps R - \alpha\tilde{\beta}R\\
  &\leq (1 + 4\veps - \alpha\tilde{\beta})R + 2\veps\tau\\
  &\leq (1 + 14\veps - \alpha\tilde{\beta})R. \tag{\cref{fac:tau}}
  \end{align*}

  Applying \cref{thm:biased_metric} completes the proof.
\end{proof}
\subsection{Part II: Consistent Biased Metrics}

In this part, we study the regime in which the biased metric is strongly $\left(\alpha, \frac12 + \tilde{\beta}\right)$-consistent. In contrast to Part I, RepApx Maximal Lotteries may incur distortion larger than $3$, although not by too much (\cref{thm:distortion_repapx_ML}). We therefore turn to \nameref{box:repapx_pruned_lotteries} and show that, with appropriate parameter choices, they achieve distortion below $3$ by a constant.

We already know that the metric distortion of RepApx Stable Lotteries relative to the remaining candidates after \nameref{box:quasi_pruning} is low (\cref{lem:sl_quasi_kernel}). In the proof of \cref{lem:sl_quasi_pruning_distortion}, we applied \cref{lem:two-hop-var} to a length-$2$ path on a quasi-kernel to argue that restricting the choice of the winner to the remaining candidates does not significantly increase the optimal social cost. Under strong consistency, we can replace \cref{lem:two-hop-var} with the following stronger guarantee.

\begin{lemma}\label[lemma]{lem:quasi-kernel-optimal}
For any strongly $(\alpha,\beta)$-consistent biased metrics, there exists some candidate $j$ among the remaining candidates after $\beta$-\nameref{box:quasi_pruning} such that $\SC(j)\leq(1+4\alpha)\SC(i^*)$.
\end{lemma}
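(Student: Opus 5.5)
Since the remaining set $S$ after $\beta$-\nameref{box:quasi_pruning} is a quasi-kernel, I will use the same three-way case split as in the proof of \cref{lem:sl_quasi_pruning_distortion}, with strong consistency replacing \cref{lem:one-hop,lem:two-hop-var}. The cases are: (1) $i^*\in S$; (2) some $j\in S$ has $s_{j\succ i^*}\ge\beta$; (3) some $j\in S$ and $k\notin S$ have $s_{j\succ k}\ge\beta$ and $s_{k\succ i^*}\ge\beta$. Case (1) is trivial with $j=i^*$.

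The key step is to convert the distance bounds from strong consistency into a social-cost bound. In a biased metric we have $d(j,i^*)=x_j$, so the triangle inequality gives
\[
\SC(j)\le \SC(i^*)+x_j .
\]
I will also use $R=2\SC(i^*)$, as in the proof of \cref{thm:biased_metric}. Therefore it suffices to show $x_j\le 2\alpha R=4\alpha\SC(i^*)$.

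In case (2), strong $(\alpha,\beta)$-consistency applied to $s_{j\succ i^*}\ge\beta$ gives $x_j-x_{i^*}\le\alpha R$. Since $x_{i^*}=0$, this yields $x_j\le\alpha R$, even better than needed.

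In case (3), apply consistency twice. From $s_{k\succ i^*}\ge\beta$ we get $x_k\le\alpha R$. From $s_{j\succ k}\ge\beta$ we get $x_j-x_k\le\alpha R$. Together these give $x_j\le 2\alpha R=4\alpha\SC(i^*)$, so $\SC(j)\le(1+4\alpha)\SC(i^*)$.

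The only delicate points are definitional. The edges of the pruning graph use $s_{a\succ b}\ge\beta$, which matches the hypothesis of \cref{def:consistent} exactly, so the thresholds line up. The other point is the bound $\SC(j)\le\SC(i^*)+x_j$, which holds because biased metrics realize $d(j,i^*)=x_j$. This is the same computation as in \cref{fac:tau}, run in the other direction: $d(j,v)\le d(j,i^*)+d(i^*,v)$. I do not expect a real obstacle.
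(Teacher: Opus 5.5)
Your proposal is correct and follows the paper's proof essentially verbatim: the same quasi-kernel case split, two applications of strong consistency along the path $j \to k \to i^*$ to get $x_j \le 2\alpha R$, and then $\SC(j)\le \SC(i^*)+x_j$ together with $R=2\SC(i^*)$. The differences are minor: you treat case (2) explicitly, whereas the paper only notes that case (3) is the worst. Also, the bound $\SC(j)\le\SC(i^*)+x_j$ follows directly from the definition of a biased metric, since $d(j,v)-d(i^*,v)=\min_{k:j\succeq_v k}x_k\le x_j$, so you do not need $d(j,i^*)=x_j$.
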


\begin{proof}
Let $S$ denote the set of remaining candidates after $\beta$-\nameref{box:quasi_pruning}. Since $S$ is a quasi-kernel, one of the following three cases must hold: (1) $i^* \in S$, (2) there exists a candidate $j \in S$ such that $s_{j \succ i^*} \geq \beta$, (3) there exists a candidate $j \in S$ and a candidate $k \notin S$ such that $s_{k \succ i^*} \geq \beta$ and $s_{j \succ k} \geq \beta$. Among these, the third case yields the worst distortion bound, and it suffices to analyze this case.

By strong $(\alpha, \beta)$-consistency, we have $x_k - x_{i^*} \leq \alpha R$ and $x_j - x_k \leq \alpha R$. It immediately follows that $x_j \leq 2\alpha R$.

Finally, we can compute
\begin{align*}
\SC(j) &= \E{v \sim V}{d(j, v)}\\
&\leq \E{v\sim V}{d(i^*, v) + x_j} \tag{triangle inequality}\\
&= \SC(i^*) + x_j \\
&\leq \SC(i^*) + 2\alpha R\\
&\leq (1 + 4\alpha)\SC(i^*). \tag{$R = 2\SC(i^*)$}
\end{align*} This completes the proof.
\end{proof}

Motivated by \cref{lem:quasi-kernel-optimal}, we choose the margin parameter --- the second parameter in the definition of strong consistency --- as the threshold parameter $\theta$ in $\left(\veps, k, \theta\right)$-\nameref{box:repapx_pruned_lotteries}. Accordingly, we run $\left(\veps, k, \frac12 + \tilde{\beta}\right)$-\nameref{box:repapx_pruned_lotteries}, which uses $\left(\frac12 + \tilde{\beta}\right)$-\nameref{box:quasi_pruning} as a subroutine. The metric distortion of $\left(\veps, k, \frac12 + \tilde{\beta}\right)$-\nameref{box:repapx_pruned_lotteries} under strongly $\left(\alpha, \frac12 + \tilde{\beta}\right)$-consistent biased metrics is established in the following lemma.

\begin{lemma}\label[lemma]{lem:approx_SL_quasi_consistent}
Given a strongly $\left(\alpha, \frac12 + \tilde{\beta}\right)$-consistent biased metric, the metric distortion of $\left(\veps, k, \frac12 + \tilde{\beta}\right)$-\nameref{box:repapx_pruned_lotteries} is at most
\[
\left(1+2\lambda\left(\frac12 + \tilde{\beta}, k, \veps\right)\right)(1+4\alpha),
\] if $\frac1{k + 1} + \veps \leq \frac2k$ and integer $k \geq 7$.
\end{lemma}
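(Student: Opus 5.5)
The proof mirrors that of \cref{lem:sl_quasi_pruning_distortion}: \cref{lem:quasi-kernel-optimal} replaces \cref{lem:two-hop-var} as the bound comparing the best surviving candidate to $i^*$. Let $\theta = \frac12 + \tilde{\beta}$ and let $S$ be the set of candidates remaining after $\theta$-\nameref{box:quasi_pruning}. Since the metric is strongly $(\alpha, \theta)$-consistent and the pruning threshold equals the consistency margin, \cref{lem:quasi-kernel-optimal} yields a candidate $j \in S$ with $\SC(j) \leq (1+4\alpha)\SC(i^*)$.

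Next, let $j^*$ be a candidate in $S$ of minimum social cost. The profile restricted to $S$ is $\theta$-regular, because $S$ is an independent set in the pruning graph. The hypotheses $\frac1{k+1}+\veps \leq \frac2k$ and $k \geq 7$ are exactly those of \cref{lem:sl_quasi_kernel}. So the $\veps$-RepApx Stable $k$-Lottery computed over $S$ has expected social cost at most $(1+2\lambda(\theta,k,\veps))\SC(j^*)$. Chaining the two bounds gives
\[
(1+2\lambda(\theta,k,\veps))\,\SC(j^*) \leq (1+2\lambda(\theta,k,\veps))\,\SC(j) \leq (1+2\lambda(\theta,k,\veps))(1+4\alpha)\,\SC(i^*),
\]
which is the claim.

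The step that needs care is the application of \cref{lem:sl_quasi_kernel}, which is stated relative to the optimum among the remaining candidates. Its proof goes through \cref{cor:biased_metric} with a biased metric centered at the restricted optimum $j^*$. That metric is different from the one centered at $i^*$ on which the consistency assumption is imposed. This causes no problem, because \cref{lem:sl_quasi_kernel} holds for every metric on the restricted instance (the social costs of the voters to the candidates in $S$ form such a metric). Consistency is used only in \cref{lem:quasi-kernel-optimal}, with respect to the original biased metric centered at $i^*$.

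I would also check that the set of voters and the social costs are unchanged by discarding candidates, so that $\SC$ means the same quantity in both lemmas. Beyond this, the argument is a direct composition of the two lemmas.
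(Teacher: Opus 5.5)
Your proposal is correct and matches the paper's proof: apply \cref{lem:quasi-kernel-optimal} with pruning threshold equal to the consistency margin $\frac12+\tilde{\beta}$ to get some $j\in S$ with $\SC(j)\leq(1+4\alpha)\SC(i^*)$. Then apply \cref{lem:sl_quasi_kernel} to the $\left(\frac12+\tilde{\beta}\right)$-regular restricted profile and chain through the restricted optimum $j^*$. Your remark that \cref{lem:sl_quasi_kernel} is applied to the restricted metric, with consistency used only for the original biased metric centered at $i^*$, is a correct clarification that the paper leaves implicit.
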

\begin{proof}[Proof of \cref{lem:approx_SL_quasi_consistent}]
  Let $S$ denote the set of remaining candidates after $\left(\frac12 + \tilde{\beta}\right)$-\nameref{box:quasi_pruning}. By \cref{lem:quasi-kernel-optimal}, there exists some candidate $j \in S$ such that $\SC(j) \leq (1 + 4\alpha)\SC(i^*)$.
  
  Let $j^*$ denote a candidate in $S$ with the minimum social cost. Then, by $\left(\frac12 + \tilde{\beta}\right)$-regularity, we can apply \cref{lem:sl_quasi_kernel} to show that the expected social cost of an $\veps$-RepApx Stable $k$-Lottery over $S$ is at most
  \[
\left(1 + 2\lambda\left(\frac12 + \tilde{\beta}, k, \veps\right)\right)\cdot \SC(j^*) \leq \left(1 + 2\lambda\left(\frac12 + \tilde{\beta}, k, \veps\right)\right)\cdot \SC(j) \leq \left(1+2\lambda\left(\frac12 + \tilde{\beta}, k, \veps\right)\right)(1+4\alpha) \cdot \SC(i^*).
  \]
  This completes the proof.
\end{proof}

In particular, we will see that for certain fixed values of $k$, $\veps$, $\alpha$, and $\tilde{\beta}$, the above bound for metric distortion can achieve a value strictly less than $3$. 

\subsection{Mixing Two Rules Together}
We now demonstrate how to choose parameters for RepApx Maximal Lotteries and \nameref{box:repapx_pruned_lotteries}, as well as the probability of running each of them. Our goal is for the randomization to achieve metric distortion strictly less than $3$ under both the consistent biased metrics case and the inconsistent one.

Recall that in the previous discussion, we defined the function
\[
\lambda(\theta, k, \veps) = \frac{\theta}{1-\theta}\left(\frac{1}{\theta (k+1)}+\frac{\veps}{\theta}\right)^{1/k}.
\]
With this, we summarize the bounds established so far in \cref{tab:thresholds}.

\begin{table}[t!]
\centering
\begin{tabular}{lp{8cm}}
\toprule
\textbf{Voting Rule} & \textbf{Metric Distortion} \\
\midrule
\multicolumn{2}{c}{\textit{Not Strongly $\left(\alpha, \frac12 + \tilde{\beta}\right)$-Consistent Biased Metric}} \\
\midrule
$\veps^2$-RepApx Maximal Lotteries& $3 + 28\veps - 2\alpha\tilde{\beta}$ \\
& \textcolor{gray}{\footnotesize (\cref{lem:approx_ML_inconsistent})} \\
\addlinespace
$\left(\veps, k, \frac12 + \tilde{\beta}\right)$-\nameref{box:repapx_pruned_lotteries} & $\left(1+2\lambda\left(\frac12 + \tilde{\beta}, k, \veps\right)\right)\left(\frac{4}{1/2 + \tilde{\beta}} - 3\right)$\\
& \textcolor{gray}{\footnotesize (\cref{lem:sl_quasi_pruning_distortion})} \\
\midrule
\multicolumn{2}{c}{\textit{Strongly $\left(\alpha, \frac12 + \tilde{\beta}\right)$-Consistent Biased Metric}} \\
\midrule
$\veps^2$-RepApx Maximal Lotteries& $3 + 28\veps$ \\
& \textcolor{gray}{\footnotesize (\cref{thm:distortion_repapx_ML})} \\
\addlinespace
$\left(\veps, k, \frac12 + \tilde{\beta}\right)$-\nameref{box:repapx_pruned_lotteries} & $\left(1+2\lambda\left(\frac12 + \tilde{\beta}, k, \veps\right)\right)(1+4\alpha)$ \\
& \textcolor{gray}{\footnotesize (\cref{lem:approx_SL_quasi_consistent})} \\
\bottomrule
\end{tabular}
\caption{Metric distortion Upper Bounds of $\veps ^ 2$-RepApx Maximal Lotteries and $\left(\veps, k, \frac12 + \tilde{\beta}\right)$-\nameref{box:repapx_pruned_lotteries} under consistent and inconsistent biased metrics.}
\label{tab:thresholds}
\end{table}

The following rule can be viewed as a RepApx generalization of Pruned Double Lotteries from \cite{DBLP:conf/sigecom/CharikarRTW25}.

\begin{theorem}[Mixing Theorem]\label{thm:mixing_thm}
  There exist parameters
  \[\mu, \tilde{\beta}, \veps_1, \veps_2 \in (0, 1) \qquad \text{and} \qquad k \in \N^+\]
  such that the following randomized voting rule achieves metric distortion strictly less than $3$:
  \begin{itemize}
    \item with probability $\mu$, run an $\veps_1^2$-RepApx Maximal Lottery;
    \item with probability $1 - \mu$, run an $\left(\veps_2, k, \frac12 + \tilde{\beta}\right)$-RepApx Pruned Lottery.
  \end{itemize}
\end{theorem}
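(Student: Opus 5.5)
The plan is to fix a biased metric, use linearity of expected social cost over the mixture, and split on whether the metric is strongly $\left(\alpha, \frac12 + \tilde{\beta}\right)$-consistent, reading the bounds off \cref{tab:thresholds}. The mixture's distortion is at most $\mu \cdot A + (1-\mu)\cdot B$, where $A$ and $B$ are the component distortion bounds valid in the current regime. This holds because the distortion ratio is $\E{}{\SC}/\SC(i^*)$ and that expression is linear in the mixture weights.

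It therefore suffices to choose $\alpha$ as an auxiliary parameter and then pick $\mu, \tilde{\beta}, \veps_1, \veps_2, k$ so that both of the following are strictly less than $3$:
\begin{align*}
\text{(inconsistent)}\quad & \mu\bigl(3 + 28\veps_1 - 2\alpha\tilde{\beta}\bigr) + (1-\mu)\Bigl(1+2\lambda\bigl(\tfrac12+\tilde{\beta}, k, \veps_2\bigr)\Bigr)\Bigl(\tfrac{4}{1/2+\tilde{\beta}} - 3\Bigr),\\
\text{(consistent)}\quad & \mu\bigl(3 + 28\veps_1\bigr) + (1-\mu)\Bigl(1+2\lambda\bigl(\tfrac12+\tilde{\beta}, k, \veps_2\bigr)\Bigr)(1+4\alpha).
\end{align*}
These come from \cref{lem:approx_ML_inconsistent,lem:sl_quasi_pruning_distortion} in the inconsistent case and from \cref{thm:distortion_repapx_ML,lem:approx_SL_quasi_consistent} in the consistent case. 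We must also respect the side conditions: $\veps_1 < 1/\sqrt{10}$, integer $k \geq 7$, and $\frac1{k+1} + \veps_2 \leq \frac2k$.

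The parameters will be chosen in the following order.

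First, take $\tilde{\beta}$ small, say $\tilde{\beta} = 0.01$. Then take $k$ large and $\veps_2$ tiny, say $\veps_2 = 1/k^2$. This makes $\lambda \approx \frac{\theta}{1-\theta}\,(\theta(k+1))^{-1/k} \to \frac{\theta}{1-\theta}$, which is slightly above $1$. Hence $1+2\lambda \approx 3 + O(\tilde{\beta})$ plus an $O(\log k / k)$ term.

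The consistent-case SL bound is then roughly $3(1+4\alpha)$. That is below $3$ only if the $\lambda$ term is strictly below $1$, which fails for $\theta > \frac12$. So the consistent case must be won differently.

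The key observation is that $\lambda(\theta, k, \veps)$ with small $\alpha$ needs $\frac{\theta}{1-\theta}(\theta(k+1))^{-1/k} < 1$. For fixed $\theta$ close to $\frac12$ and moderate $k$, the factor $(\theta(k+1))^{-1/k}$ is strictly less than $1$ by a constant. For example, with $k = 7$ and $\theta = 0.51$ it is $(4.08)^{-1/7} \approx 0.818$, which gives $\lambda \approx 1.04 \cdot 0.818 \approx 0.85$. So I will not send $k \to \infty$. Instead I fix a moderate $k$ (such as $k = 7$), a small $\tilde{\beta}$, and $\veps_2$ small enough (say $10^{-4}$) that $1 + 2\lambda \leq 3 - c$ for an explicit constant $c > 0$.

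Next, choose $\alpha$ small enough that $(3-c)(1+4\alpha) < 3 - c/2$. Now in the consistent case the SL bound is $3 - c/2$, while the ML bound is $3 + 28\veps_1$.

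In the inconsistent case the SL bound is some constant $M \leq 15$-ish, and the ML bound is $3 + 28\veps_1 - 2\alpha\tilde{\beta}$. Choose $\mu$ close to $1$ so that $(1-\mu)(M-3) < \mu\,\alpha\tilde{\beta}$. Then choose $\veps_1 \ll \alpha\tilde{\beta}$, so that $28\veps_1 \leq \alpha\tilde{\beta}/2$. With these choices the inconsistent expression is below $3$.

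In the consistent case we need $\mu\cdot 28\veps_1 < (1-\mu)c/2$. Since $\mu$ is fixed before $\veps_1$, we can shrink $\veps_1$ further to ensure this.

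The main obstacle is precisely this ordering of quantifiers. $\mu$ must be close to $1$ with a gap governed by $\alpha\tilde{\beta}/M$, while $\veps_1$ must be small relative to both $\alpha\tilde{\beta}$ and $(1-\mu)c$. This is consistent because $\veps_1$ is chosen last.

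One technical point needs care: confirming numerically that a fixed $(k, \theta, \veps_2)$ gives $\lambda < 1$ strictly. I would verify this with explicit numbers such as $k=7$, $\tilde{\beta} = 0.01$, $\veps_2 = 10^{-3}$, checking $\frac1{8} + 10^{-3} \leq \frac27$.
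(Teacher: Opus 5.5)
Your proposal is correct and follows the paper's proof. It uses the same split on strong $\bigl(\alpha,\frac12+\tilde{\beta}\bigr)$-consistency, the same four bounds from \cref{tab:thresholds}, and the same two sufficient inequalities, and your sequential choice ($k=7$, $\tilde{\beta}=0.01$, small $\veps_2$, then $\alpha$, then $\mu$, then $\veps_1$) checks out numerically since $\lambda\approx 0.85$ there. The only difference is that the paper gives an explicit family valid for every $k\ge 7$, setting $\alpha=L_k/24$, $\tilde{\beta}=L_k/9$, $1-\mu=L_k^2/2000$ and $\veps_1=L_k^3/150000$ with $L_k=\ln(k/4)/k$, so that $(4/k)^{1/k}=e^{-L_k}$ beats the growth of $\frac{\theta}{1-\theta}$; you instead fix one $k$ and argue existence by ordering the quantifiers.
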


\begin{proof}[Proof of \cref{thm:mixing_thm}]
If the biased metric is strongly $\left(\alpha, \frac12 + \tilde{\beta}\right)$-consistent, then by \cref{thm:distortion_repapx_ML,lem:approx_SL_quasi_consistent}, a sufficient condition for the metric distortion to be less than $3$ is
\[
\mu (3+28\veps_1) + (1 - \mu) \left(1+2\lambda\left(\frac12 + \tilde{\beta}, k, \veps_2\right)\right)(1+4\alpha)< 3.
\]

Otherwise, when the biased metric is not strongly $\left(\alpha, \frac12 + \tilde{\beta}\right)$-consistent, by \cref{lem:approx_ML_inconsistent,lem:sl_quasi_pruning_distortion}, a sufficient condition is
\[
\mu(3 + 28\veps_1 - 2\alpha\tilde{\beta}) + (1 - \mu)\left(1+2\lambda\left(\frac12 + \tilde{\beta}, k, \veps_2\right)\right)\left(\frac{4}{1/2 + \tilde{\beta}} - 3\right) < 3.
\]

It suffices to show the existence of parameters that satisfy these two conditions simultaneously.

For example, one may choose
\[\alpha = \frac1{24}\cdot\frac{\ln(k/4)}k, \tilde{\beta} = \frac19 \cdot\frac{\ln(k/4)}k, \veps_1 = \frac1{150000}\left(\frac{\ln(k/4)}k\right)^3, \veps_2 = \frac1k, \text{ and } \mu = 1 - \frac1{2000}\left(\frac{\ln(k/4)}k\right)^2\] for any integer $k \geq 7$.

 The verification of the above inequalities under this parameter choice is deferred to \cref{sec:missing_mixing_thm}.
\end{proof}

By \cref{thm:repapx_ML_existence,thm:repapx_SL_existence}, there exists an $\veps_1^2$-RepApx Maximal Lottery with support size $O(\veps_1^{-4})$ and an $\veps_2$-RepApx Stable $k$-Lottery with support size $O(k^2\veps_2^{-2})$. The following corollary therefore follows immediately.

\begin{corollary}\label{cor:mixing_thm} There exists a randomized voting rule with metric distortion strictly less than $3$ whose support size is at most a constant (independent of $n$ and $m$).
\end{corollary}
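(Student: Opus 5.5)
The plan is to instantiate \cref{thm:mixing_thm} with concrete parameters, for instance those given in its proof with a fixed integer $k \geq 7$ such as $k = 8$. Every component is then an absolute constant. The rule we build selects, with probability $\mu$, a candidate uniformly from a multiset $M_1$ forming an $\veps_1^2$-RepApx Maximal Lottery. With probability $1-\mu$, it runs $(\tfrac12+\tilde\beta)$-\nameref{box:quasi_pruning} and then selects uniformly from a multiset $M_2$ forming an $\veps_2$-RepApx Stable $k$-Lottery on the remaining candidates. By \cref{thm:mixing_thm}, this rule has metric distortion strictly below $3$. The remaining task is to show that its support is constant-size and that its outcome can be written as a uniform draw from a constant-size list.

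First, we bound the sizes of $M_1$ and $M_2$. By \cref{thm:repapx_ML_existence} with accuracy $\veps_1^2$, the $q_1$-\nameref{box:emp_ML} procedure with $q_1 = \lceil (1+o(1))\tfrac{\pi}{8}\veps_1^{-4}\rceil$ outputs an $\veps_1^2$-RepApx Maximal Lottery with positive probability, so some multiset of size $q_1 = O(\veps_1^{-4})$ works. By \cref{thm:repapx_SL_existence}, applied to the election restricted to the quasi-kernel, the $q_2$-\nameref{box:emp_SL} procedure with $q_2 = \lceil \tfrac{\pi}{2}k^2\veps_2^{-2}\rceil$ yields some multiset of size $q_2$ that forms an $\veps_2$-RepApx Stable $k$-Lottery there. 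Both $q_1$ and $q_2$ depend only on $k, \veps_1, \veps_2$, so the support of the mixture has size at most $q_1 + q_2$, a constant independent of $n$ and $m$.

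If we also want the rule to be a uniform draw from one list, as in the paper's framing, we choose $\mu$ rational, say $\mu = a/b$. The proof of \cref{thm:mixing_thm} leaves freedom in $\mu$, and the two strict inequalities there are open conditions, so we can perturb $\mu$ to a nearby rational value. We then form a list of $b\,q_1 q_2$ entries: each element of $M_1$ is repeated $a q_2$ times and each element of $M_2$ is repeated $(b-a) q_1$ times. This list has constant size and induces exactly the mixed distribution.

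The main obstacle is the \emph{open-condition} step. We must check that the strict inequalities in the proof of \cref{thm:mixing_thm} tolerate the small perturbation of $\mu$ needed to make it rational, or that the given $\mu$ is already rational once $k$ is fixed. If the given $\mu = 1 - \frac{1}{2000}(\ln(k/4)/k)^2$ is irrational, the fallback is continuity. Both left-hand sides are affine in $\mu$ and strictly below $3$ at the chosen point, so they remain below $3$ on a neighborhood of it, and that neighborhood contains a rational $\mu$. Everything else is a direct citation of the earlier existence theorems.
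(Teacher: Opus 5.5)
Your proposal is correct and follows the paper's argument. The paper likewise invokes \cref{thm:repapx_ML_existence,thm:repapx_SL_existence} to get an $\veps_1^2$-RepApx Maximal Lottery with support size $O(\veps_1^{-4})$ and an $\veps_2$-RepApx Stable $k$-Lottery with support size $O(k^2\veps_2^{-2})$, and plugs them into the rule of \cref{thm:mixing_thm}. Your continuity argument for moving $\mu$ to a nearby rational value (the stated $\mu = 1-\frac{1}{2000}(\ln(k/4)/k)^2$ need not be rational) correctly fills in a step that the paper's closing remark only asserts.
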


\begin{remark}
Both RepApx Maximal Lotteries and \nameref{box:repapx_pruned_lotteries} select a winner uniformly at random from a multiset. Since the mixing probability $\mu$ can be chosen to be rational, every candidate can be selected with rational probability. Consequently, the combined rule is equivalent to selecting uniformly at random from a list of constant size (with possible repetitions).
\end{remark}
 
\section{Concluding Thoughts}

In this work, we prove that a voting rule that uniformly randomizes over a (large) constant number of options achieves a metric distortion constant (slightly) less than $3$. While the number of options $N$ is an absolute constant, it remains quite large in our current proofs.

A natural question is to determine the smallest $N$ required to break the distortion barrier of $3$. It is known that deterministic rules (i.e., $N = 1$) are insufficient \cite{DBLP:journals/ai/AnshelevichBEPS18}. However, it seems implausible that the minimal necessary $N$ is large; in fact, we conjecture that $N = 2$ suffices to break the barrier of $3$---perhaps the Romans were right after all.

Very recently, Jannik Peters \cite{DBLP:journals/corr/peters2026} made interesting observations on the connection between metric distortion and undominated committees \cite{DBLP:journals/scw/ElkindLS15,DBLP:conf/stoc/CharikarLRV025}. His work shows that the support of any randomized voting rule with metric distortion better than $3- \veps$ must be $\left(1 - \frac{\veps}{2}\right)$-undominated. Notably, the existence of such a size-two committee was established only recently in the work of \cite{DBLP:conf/stoc/CharikarLRV025}. Together with the lower bound \cite{math/Zbarsky14} that $\alpha$-undominated committees of size $N$ are not guaranteed to exist when $\alpha < \frac{2}{N+1}$, Peters' work also implies a trade-off between the support size $N$ and the distortion parameter $\veps$. In particular, for $N = 2$, the metric distortion cannot be smaller than $3 - \frac23 \approx 2.33$.

Unfortunately, our current analytical framework is not remotely tight enough to establish such a small constant $N$, even for an improvement to $2.99$ in metric distortion; therefore, proving this conjecture will likely require significant new insights. Optimistically, one might hope that these insights are along the line of those needed to resolve the optimal metric distortion of randomized voting rules.

In a similar vein, Gkatzelis, Halpern, and Shah \cite{DBLP:conf/focs/GkatzelisHS20} provided the first proof that a deterministic voting rule can achieve distortion $3$, though their rule is complex and impractical in many scenarios. Subsequently, K{\i}z{\i}lkaya and Kempe \cite{DBLP:conf/ijcai/KizilkayaK22,DBLP:conf/sigecom/Kizilkaya023} introduced much simpler voting rules with the same distortion guarantee, which have elegant mathematical proofs and are more suitable for real-world applications. We hope our work stimulates a similar trajectory of simplification and refinement for randomized rules with restricted forms of randomness.

Finally, as noted in the introduction, our results have direct implications for the committee selection setting of \cite{DBLP:journals/corr/abs-2507-17063}. An interesting direction for future research is to explore the extent to which our techniques can be adapted to these multi-winner election settings.
 
\paragraph{Acknowledgments.} We thank Aris Filos-Ratsikas and Jannik Peters for their very helpful comments.

\bibliographystyle{alpha}
\newcommand{\etalchar}[1]{$^{#1}$}

\appendix

\section{Missing Details from Proof of \texorpdfstring{\cref{lem:sl_quasi_kernel}}{a Lemma}}
\label{sec:missing_sl_quasi_kernel}
\begin{claim}\label{clm:sl_quasi_kernel_details} If $\frac1{k + 1} + \veps \leq \frac2k$ and $k$ is an integer no less than $7$, then the minimum $\lambda$ that satisfies
\[
  p \leq \max\set{\frac{\lambda}{\theta}(1-\theta),\frac{\lambda}{\theta}\left(1-\left(\frac{1}{k+1}+\veps\right)p^{-k}\right)}, \qquad \forall p\in [0,1]
\]
is
\[
 \lambda(\theta, k, \veps)=\frac{\theta}{1-\theta}\left(\frac{1}{\theta (k+1)}+\frac{\veps}{\theta}\right)^{1/k}.
\]
\end{claim}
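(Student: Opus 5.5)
Throughout, write $c \defeq \frac1{k+1} + \veps$ and $A \defeq \frac{\lambda}{\theta}(1-\theta)$, so that the constraint reads $p \le \max\set{A,\ \frac{A}{1-\theta}\bigl(1 - c p^{-k}\bigr)}$ for all $p \in [0,1]$. The claimed formula is exactly the statement $A = (c/\theta)^{1/k}$, i.e.\ $A$ is the point $p_0 \defeq (c/\theta)^{1/k}$ at which $c p_0^{-k} = \theta$. At that point the two branches of the max coincide. Note also that $p_0 < 1$, since $c/\theta < 2c \le 4/k < 1$ (using $\theta > \frac12$ and $c \le \frac2k$). The plan is to prove two things: any $\lambda$ giving $A < p_0$ fails, and $\lambda(\theta,k,\veps)$ itself works.

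\textbf{Minimality.} Suppose $A < p_0$ and pick $p \in (A, p_0)$. Then $c p^{-k} > c p_0^{-k} = \theta$, so the second branch is
\[
\frac{A}{1-\theta}\bigl(1 - c p^{-k}\bigr) < \frac{A}{1-\theta}(1-\theta) = A < p .
\]
The first branch is $A < p$ as well, so the constraint is violated at this $p$. Hence $A \ge p_0$ is necessary, which is exactly $\lambda \ge \lambda(\theta,k,\veps)$.

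\textbf{Sufficiency.} Now set $A = p_0$. For $p \le A$ the first branch already suffices. For $p \in [A,1]$ consider
\[
g(p) \defeq \frac{A}{1-\theta}\bigl(1 - c p^{-k}\bigr) - p .
\]
The key facts about $g$ are:
\begin{itemize}
\item $g(A) = 0$, by the choice of $A$.
\item $g$ is concave on $(0,1]$, since $-p^{-k}$ is concave there.
\item $g'(A) = \frac{A c k A^{-k-1}}{1-\theta} - 1 = \frac{k\theta}{1-\theta} - 1 \ge 0$, because $\theta > \frac12 \ge \frac1{k+1}$.
\end{itemize}
By concavity, the minimum of $g$ on $[A,1]$ is attained at an endpoint. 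Since $g(A) = 0$, it suffices to check $g(1) \ge 0$, which is equivalent to
\[
\Bigl(\frac{c}{\theta}\Bigr)^{1/k} \ge \frac{1-\theta}{1-c}.
\]

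\textbf{The numerical step, which I expect to be the main obstacle.} This is the only place the hypotheses $c \le \frac2k$ and $k \ge 7$ enter. I would bound the two sides separately:
\begin{itemize}
\item Right-hand side: since $1-\theta < \frac12$ and $c \le \frac2k$, it is at most $\frac{1/2}{1 - 2/k} \le \frac{1/2}{5/7} = 0.7$ for $k \ge 7$.
\item Left-hand side: since $\theta \le 1$ and $c \ge \frac1{k+1}$, it is at least $(k+1)^{-1/k}$. This quantity is increasing in $k$ for $k \ge 7$ (check via the derivative of $-\ln(k+1)/k$), so it is at least $8^{-1/7} \approx 0.743$.
\end{itemize}
Since $0.743 > 0.7$, the inequality holds and $g(1) \ge 0$. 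Together with minimality, this shows the minimum $\lambda$ is $\lambda(\theta,k,\veps)$.
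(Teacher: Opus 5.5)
Your proof is correct and takes essentially the same route as the paper. The two branches cross at $p^*=(c/\theta)^{1/k}$ independently of $\lambda$, so imposing the constraint there forces $\lambda\ge\lambda(\theta,k,\veps)$. Concavity of the second branch minus $p$ then reduces sufficiency to the endpoint condition $(c/\theta)^{1/k}(1-c)\ge 1-\theta$, which you verify with the paper's bounds $\theta\le 1$, $c\ge\frac1{k+1}$, $c\le\frac2k$, $1-\theta<\frac12$; your comparison $8^{-1/7}\approx 0.743>0.7$ spells out the paper's ``standard derivative check''. The bullet $g'(A)\ge 0$ is unnecessary, since a concave function on a closed interval attains its minimum at an endpoint regardless of its slope.
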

\begin{proof}[Proof of \cref{clm:sl_quasi_kernel_details}]

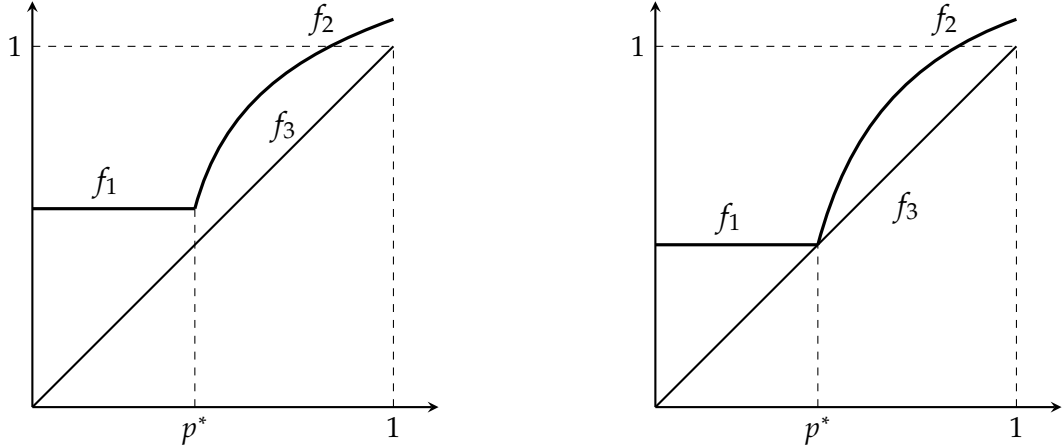
\begin{figure}[t!]
    \centering
    \begin{subfigure}[t]{0.49\textwidth}
        \centering
        \begin{tikzpicture}[scale=1.2, >=stealth]
\def\xmax{4}
    \def\ymax{4}
    \def\cornerX{1.8} \def\cornerY{2.2} 

\draw[->, thick] (0,0) -- (\xmax+0.5, 0) node[right] {};
    \draw[->, thick] (0,0) -- (0, \ymax+0.5) node[above] {};

\draw[dashed] (0, \ymax) -- (\xmax, \ymax);
\draw[dashed] (\xmax, 0) -- (\xmax, \xmax);
\draw[dashed] (\cornerX, 0) -- (\cornerX, \cornerY);
    
\node[left] at (0, \ymax) {$1$};
    \node[below] at (\xmax, 0) {$1$};
    \node[below] at (\cornerX, 0) {$p^*$};

\draw[thick] (0,0) -- (\xmax, \xmax);
    \node[below right, font=\large] at (2.5, 3.4) {$f_3$};

\draw[very thick] (0, \cornerY) -- (\cornerX, \cornerY);
    \node[above, font=\large] at (0.8, \cornerY) {$f_1$};

\draw[very thick] (\cornerX, \cornerY) to[out=75, in=200] (\xmax, \ymax+0.3);
    \node[above, font=\large] at (3.2, 4) {$f_2$};

\end{tikzpicture}         \caption{$f_1$ and $f_3$ do not intersect.}
        \label{fig:appendix_1}
    \end{subfigure}~ 
    \begin{subfigure}[t]{0.49\textwidth}
        \centering
        \begin{tikzpicture}[scale=1.2, >=stealth]
\def\xmax{4}
    \def\ymax{4}
    \def\cornerX{1.8} \def\cornerY{1.8} 

\draw[->, thick] (0,0) -- (\xmax+0.5, 0) node[right] {};
    \draw[->, thick] (0,0) -- (0, \ymax+0.5) node[above] {};

\draw[dashed] (0, \ymax) -- (\xmax, \ymax);
\draw[dashed] (\xmax, 0) -- (\xmax, \xmax);
\draw[dashed] (\cornerX, 0) -- (\cornerX, \cornerY);
    
\node[left] at (0, \ymax) {$1$};
    \node[below] at (\xmax, 0) {$1$};
    \node[below] at (\cornerX, 0) {$p^*$};

\draw[thick] (0,0) -- (\xmax, \xmax);
    \node[below right, font=\large] at (2.5, 2.5) {$f_3$};

\draw[very thick] (0, \cornerY) -- (\cornerX, \cornerY);
    \node[above, font=\large] at (0.8, \cornerY) {$f_1$};

\draw[very thick] (\cornerX, \cornerY) to[out=75, in=200] (\xmax, \ymax+0.3);
    \node[above, font=\large] at (3.2, 4) {$f_2$};

\end{tikzpicture}         \caption{$f_1$, $f_2$, and $f_3$ intersect at the same point.}
        \label{fig:appendix_2}
    \end{subfigure}
    \caption{Pictorial illustration of $f_1(p)$, $f_2(p)$, and $f_3(p)$.}
\end{figure}

For ease of notation, let $f_1(p)=\frac{\lambda}{\theta}(1-\theta)$, $f_2(p)=\frac{\lambda}{\theta}\cdot\left(1-\left(\frac{1}{k+1}+\veps\right)p^{-k}\right)$, and $f_3(p) = p$.

Our goal is to find the minimum $\lambda$ such that
\[
f_3(p) \leq \max\set{f_1(p), f_2(p)} \qquad \text{for all }p \in [0, 1].
\]

The intersection of $f_1(p)$ and $f_2(p)$ occurs at
\[
p^* = \left(\frac{1}{\theta (k+1)}+\frac{\veps}{\theta}\right)^{1/k},
\] which is in $(0, 1]$ under the assumption $\frac1{k + 1} + \veps \leq \frac2k$ and $k > 6$.

\cref{fig:appendix_1} illustrates the relevant geometry. Since
\[
\frac{\diff^2}{\diff p^2} \left(f_2 - f_3\right) = -\frac{\lambda k(k + 1)}{\theta}\left(\frac1{k + 1} + \veps\right)p^{-(k + 2)} < 0,
\] the function $f_2 - f_3$ is concave. Consequently, the condition $f_3(p) \leq \max\set{f_1(p), f_2(p)}$ holds for all $p \in [0, 1]$ if and only if
\[
f_1(p^*) \geq f_3(p^*)=p^* \qquad \text{and} \qquad f_2(1) \geq f_3(1)=1.
\] The second constraint is equivalent to
\[
\lambda \geq \frac{\theta}{1 - \left(\frac1{k + 1} + \veps\right)}.
\]

As $\lambda$ decreases, both $f_1$ and $f_2$ scale linearly downward. At the critical value $\lambda =\lambda^*$, all three functions intersect at $p^*$ (see \cref{fig:appendix_2}). For any $\lambda < \lambda ^*$, we have $f_1(p^*) < f_3(p^*)$, which violates the first constraint. Therefore, a valid $\lambda$ cannot fall below $\lambda ^*$.

It remains to verify that this $\lambda^*$ satisfies the second constraint. When $\lambda = \lambda^*$, we have
\[
\frac{\lambda}\theta(1 - \theta) = f_1(p^*) = f_3(p^*) = p^*, \text{and thus } \lambda = p^* \cdot \frac{\theta}{1 - \theta} = \left(\frac{\frac1{k + 1} + \veps}{\theta}\right)^{1/k} \cdot \frac{\theta}{1 - \theta}.
\] Therefore, the second constraint reduces to
\[
 \left(\frac{\frac1{k + 1} + \veps}{\theta}\right)^{1/k} \cdot \left(1 - \left(\frac1{k + 1} + \veps\right)\right) \geq 1 - \theta.
\]

It can be verified by computing
\begin{align*}
& \left(\frac{\frac1{k + 1} + \veps}{\theta}\right)^{1/k} \cdot \left(1 - \left(\frac1{k + 1} + \veps\right)\right)\\
\geq {}&{} \left(\frac1{k + 1} + \veps\right)^{1/k} \cdot \left(1 - \left(\frac1{k + 1} + \veps\right)\right) \tag{$\theta \leq 1$}\\
\geq {}&{} \left(\frac1{k + 1}\right)^{1/k} \cdot \left(1 - \left(\frac1{k + 1} + \veps\right)\right) \\
\geq {}&{} \left(\frac1{k + 1}\right)^{1/k} \cdot \left(1 - \frac2k\right) \tag{assumption that $\frac1{k + 1} + \veps \leq \frac2k$}\\
\geq {}&{} \frac12 \tag{can be confirmed by a standard derivative check if integer $k \geq 7$}\\
\geq {}&{} 1 - \theta \tag{$\theta > 1/2$}
\end{align*} This completes the proof.
\end{proof}

\section{Missing Details from Proof of \texorpdfstring{\cref{thm:mixing_thm}}{a Theorem}}
We restate the definition of $\lambda(\theta, k, \veps)$, which is valid for all integer $k \geq 7$:
\[
\lambda(\theta, k, \veps) = \frac{\theta}{1-\theta}\left(\frac{1}{\theta (k+1)}+\frac{\veps}{\theta}\right)^{1/k}.
\]

Throughout this section, we will use $L_k$ for $\ln(k / 4) / k$. Then
\[
\frac{\diff}{\diff k}L_k = \frac{1 - \ln(k/4)}{k^2},
\] so $L_k$ is maximized at $k = 4e$ when $k > 0$, with $L_{4e} = \frac1{4e}$.

\label{sec:missing_mixing_thm}
\begin{claim}\label{clm:lambda} Let $\tilde{\beta} = \frac19 L_k$ and $\veps_2 = \frac1k$. For any integer $k \geq 7$, we have
  \[
1+2\lambda\left(\frac12 + \tilde{\beta}, k, \veps_2\right) \leq 3 - L_k + L_k^2 \leq 3.
  \]
\end{claim}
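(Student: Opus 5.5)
The plan is to bound the two factors of $\lambda$ separately, reduce everything to a one-variable inequality in $L = L_k \in (0, \frac{1}{4e}]$, and then check that inequality with elementary Taylor bounds. Since $k \geq 7 > 4$, we have $L > 0$, and the maximization of $L_k$ noted above gives $L \leq \frac1{4e} < 0.1$. Write $\theta = \frac12 + \tilde\beta$ with $\tilde\beta = L/9$. By definition,
\[
\lambda\left(\theta, k, \tfrac1k\right) = \frac{\theta}{1-\theta}\left(\frac{\frac1{k+1}+\frac1k}{\theta}\right)^{1/k}.
\]

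First I handle the root factor. Since $\frac1{k+1}+\frac1k \leq \frac2k$ and $\theta > \frac12$, the base is at most $\frac4k$. Hence the root factor is at most $(4/k)^{1/k} = \exp(-\ln(k/4)/k) = e^{-L}$. This is precisely why $L_k$ is defined with $k/4$. I then use $e^{-L} \leq 1 - L + L^2/2$, valid for $L \geq 0$.

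Next I handle the ratio. Put $a = 2\tilde\beta = 2L/9$. Then
\[
\frac{\theta}{1-\theta} = \frac{1+a}{1-a} = 1 + \frac{2a}{1-a}.
\]
Because $a < 0.03$, this is at most $1 + 2a(1+2a) = 1 + \frac{4L}{9} + c L^2$ for an explicit small constant $c$ (about $0.2$).

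Multiplying the two bounds gives
\[
\lambda \leq \left(1 + \tfrac{4L}{9} + cL^2\right)\left(1 - L + \tfrac{L^2}{2}\right) = 1 - \tfrac{5L}{9} + O(L^2).
\]
The target $1 + 2\lambda \leq 3 - L + L^2$ is equivalent to $\lambda \leq 1 - \frac{L}{2} + \frac{L^2}{2}$. So after expanding, it suffices to show that a polynomial of the form $\frac{L}{18} - C_2 L^2 - C_3 L^3 - \dots$ is nonnegative, with explicit constants $C_i$. Dividing by $L$, this reduces to $\frac1{18} \geq C_2 L + C_3 L^2 + \dots$, and it is enough to check this at the endpoint $L = \frac1{4e} \approx 0.092$.

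The main obstacle is this last step, because the margin is thin. The linear slack is only $L/18$, and it must absorb quadratic terms of order $L^2 \cdot (\frac12 + \frac49 + c)$. At $L \approx 0.092$ these quadratic terms are roughly $0.09L$, which is comparable to $L/18 \approx 0.056L$. So crude bounds may not suffice. If the estimate above fails, I would collect the $L^2$ terms exactly: $\frac{L^2}{2}$ from the exponential, $-\frac{4L^2}{9}$ from the cross term, and $+cL^2$ from the ratio. These largely cancel, and the right-hand side also contributes $+\frac{L^2}{2}$ in our favour. I would then verify the resulting cubic inequality on $(0, \frac1{4e}]$ directly.

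Finally, the second inequality $3 - L + L^2 \leq 3$ is immediate, since $0 < L \leq 1$.
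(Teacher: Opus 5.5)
Your proof is correct and follows the paper's route: bound the root factor by $(4/k)^{1/k}=e^{-L_k}\le 1-L_k+\frac12L_k^2$, and bound $\frac{\theta}{1-\theta}$ separately. The only difference is the ratio bound. The paper uses the linear estimate $\frac{2(1+2x)}{1-2x}\le 2+9x$ on $[0,\frac1{18}]$, so that $9\tilde\beta=L_k$ and $(2+L_k)(1-L_k+\frac12L_k^2)=2-L_k+\frac12L_k^3$, with the quadratic term cancelling exactly. Your quadratic estimate $1+2a+4a^2$ also closes, because the exact slack is $\frac{L}{18}+\frac{20}{81}L^2-\frac{2}{81}L^3-\frac{8}{81}L^4>0$ on $(0,\frac1{4e}]$; the margin you worried about is not actually thin.
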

\begin{proof}[Proof of \cref{clm:lambda}]
  We can compute
  \begin{align*}
  & 1+2\lambda\left(\frac12 + \tilde{\beta}, k, \veps_2\right) \\
  = {}&{} 1+ 2 \cdot \frac{1 + 2 \tilde{\beta}}{1 - 2\tilde{\beta}} \cdot \left(\frac1{\frac12 + \tilde{\beta}} \cdot \left(\frac1{k +1} + \frac1k\right)\right)^{1/k}\\
  \leq {}&{} 1 + 2 \cdot \frac{1 + 2 \tilde{\beta}}{1 - 2\tilde{\beta}} \cdot \left(2 \cdot \frac2k\right)^{1/k}\\
  =  {}&{} 1 + 2 \cdot \frac{1 + 2 \tilde{\beta}}{1 - 2\tilde{\beta}} \cdot \left(\frac4k\right)^{1/k}\\
  \leq {}&{} 1 + \left(2 + 9\tilde{\beta}\right)\cdot \left(\frac4k\right)^{1/k} \tag{$\frac{2(1+2x)}{1-2x} \leq 2+9x$ for any $x \in [0, \frac1{18}]$ and $0 \leq \tilde{\beta} = \frac19L_k \leq \frac1{36e} \leq \frac1{18}$ for any integer $k \geq 4$}\\
  = {}&{} 1 + (2 + L_k) \cdot \exp(-L_k)\\
  \leq {}&{} 1 + (2 + L_k) \cdot \left(1 - L_k + \frac12L_k^2\right) \tag{$\exp(-x) \leq 1 - x +\frac12x^2$ for any $x \geq 0$ and $L_k \geq 0$ for any integer $k \geq 4$}\\
  \leq {}&{} 3 -2L_k + L_k^2 + L_k - L_k \cdot \left(L_k - \frac12 L_k^2\right)\\
  = {}&{} 3 - L_k + L_k^2 - L_k \cdot \left(L_k - \frac12 L_k^2\right)\\
  \leq {}&{} 3 - L_k + L_k^2 \tag{$x - \frac12x^2 \geq 0$ for any $x \in [0, 2]$ and $0 \leq L_k \leq \frac1{4e} \leq 2$ for any integer $k \geq 4$}\\
  \leq {}&{} 3. \tag{$x - x^2 \geq 0$ for any $x \in [0, 1]$ and $0 \leq L_k \leq \frac1{4e} \leq 1$ for any integer $k \geq 4$}
  \end{align*} This completes the proof.
\end{proof}

The following two results are corollaries of \cref{clm:lambda}. The proof of \cref{clm:1} is immediate. We will only prove \cref{clm:2}.

\begin{claim}\label{clm:1}
Let $\tilde{\beta} = \frac19 L_k$, $\veps_2 = \frac1k$. For any integer $k \geq 7$, we have
  \[
\left(1+2\lambda\left(\frac12 + \tilde{\beta}, k, \veps_2\right)\right)\left(\frac4{\frac12 + \tilde{\beta}}-3\right) \leq 15.
  \]
\end{claim}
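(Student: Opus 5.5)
The plan is to bound the two factors separately and multiply. The first factor is handled entirely by \cref{clm:lambda}, which already gives $1+2\lambda\left(\frac12+\tilde{\beta},k,\veps_2\right)\le 3-L_k+L_k^2$. For the second factor, substituting $\tilde{\beta}=\frac19L_k$ gives
\[
\frac{4}{\frac12+\tilde{\beta}}-3=\frac{8}{1+2\tilde{\beta}}-3 = 5-\frac{16\tilde{\beta}}{1+2\tilde{\beta}}.
\]
So it suffices to show
\[
\left(3-L_k+L_k^2\right)\left(5-\frac{16\tilde{\beta}}{1+2\tilde{\beta}}\right)\le 15 .
\]

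Both factors are nonnegative. The first is at least $3-L_k\ge 0$, and the second is at least $5-16\tilde\beta\ge 0$ because $\tilde{\beta}\le\frac1{36e}$, as already noted in the proof of \cref{clm:lambda}. The cheapest route would then be to use $3-L_k+L_k^2\le 3$ and $\frac{8}{1+2\tilde\beta}-3\le 5$, which immediately gives at most $15$. Since both factors are individually bounded by $3$ and $5$, that is all the claim requires.

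For a sharper statement, which the later \cref{clm:2} presumably needs, I would instead expand the product. I would use $\frac{16\tilde\beta}{1+2\tilde\beta}\ge 16\tilde\beta(1-2\tilde\beta)$ and $L_k\le\frac1{4e}$ to show that the product is at most $15-c\,L_k$ for a small explicit $c>0$ (roughly $5-\frac{48}{9}$ times leading-order terms). This refinement is not needed for the bound of $15$ stated here.

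Main obstacle: essentially none. The only care needed is confirming nonnegativity of both factors so that the termwise upper bounds may be multiplied, which follows from $0\le L_k\le \frac1{4e}$ for integers $k\ge 7$.
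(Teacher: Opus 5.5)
Your proof is correct and is the same argument the paper uses; the paper simply calls this claim immediate. \cref{clm:lambda} bounds the first factor by $3$, and since $0 \le \tilde{\beta} \le \frac{1}{36e}$, the second factor $\frac{5-6\tilde{\beta}}{1+2\tilde{\beta}}$ lies in $[0,5]$, so the product is at most $15$. Your sharper-refinement paragraph is unnecessary: \cref{clm:2} involves the factor $1+4\alpha$ rather than this one, so it does not rely on any such refinement.
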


\begin{claim}\label{clm:2}
Let $\alpha = \frac1{24}L_k$, $\tilde{\beta} = \frac19 L_k$, and $\veps_2 = \frac1k$. For any integer $k \geq 7$, we have
  \[
\left(1+2\lambda\left(\frac12 + \tilde{\beta}, k, \veps_2\right)\right)(1 + 4\alpha) \leq 3 - \frac12 L_k + L_k^2.
  \]
\end{claim}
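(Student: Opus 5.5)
We combine \cref{clm:lambda} with the explicit value $\alpha = \frac1{24}L_k$. By \cref{clm:lambda}, for every integer $k \geq 7$,
\[
1+2\lambda\left(\tfrac12 + \tilde{\beta}, k, \veps_2\right) \leq 3 - L_k + L_k^2,
\]
and since $1 + 4\alpha = 1 + \frac16 L_k > 0$, we may multiply through. This gives
\[
\left(1+2\lambda\left(\tfrac12 + \tilde{\beta}, k, \veps_2\right)\right)(1+4\alpha) \leq \left(3 - L_k + L_k^2\right)\left(1 + \tfrac16 L_k\right).
\]

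We now expand the right-hand side. It equals
\[
3 + \tfrac12 L_k - L_k - \tfrac16 L_k^2 + L_k^2 + \tfrac16 L_k^3 = 3 - \tfrac12 L_k + \tfrac56 L_k^2 + \tfrac16 L_k^3.
\]
It therefore suffices to show $\frac56 L_k^2 + \frac16 L_k^3 \leq L_k^2$, which is equivalent to $\frac16 L_k^3 \leq \frac16 L_k^2$, that is, $L_k \leq 1$ (using $L_k \geq 0$).

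The needed range of $L_k$ was already established in the preamble and in the proof of \cref{clm:lambda}. For integer $k \geq 7 > 4$ we have $\ln(k/4) > 0$, so $L_k \geq 0$. Also, $L_k$ is maximized at $k = 4e$ with value $\frac{1}{4e} < 1$, so $L_k \leq 1$.

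The argument is purely algebraic and has no real obstacle. The only points requiring care are that the factor $1+4\alpha$ is positive, which justifies multiplying the inequality, and that $L_k \in [0,1]$, which makes the cubic term absorbable into the quadratic one.
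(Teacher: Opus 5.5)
Your proof is correct and takes the same route as the paper: multiply the first inequality of \cref{clm:lambda} by $1+4\alpha$. The only cosmetic difference is that the paper bounds the cross term as $4\alpha(3 - L_k + L_k^2) \leq 12\alpha$ using the second inequality of \cref{clm:lambda}, whereas you expand fully and absorb the cubic term using $L_k \leq 1$; these rest on the same fact.
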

\begin{proof}[Proof of \cref{clm:2}]
  We can compute
  \begin{align*}
  & \left(1+2\lambda\left(\frac12 + \tilde{\beta}, k, \veps_2\right)\right)(1 + 4\alpha)\\
  \leq {}&{} \left(3 - L_k + L_k^2\right)(1 + 4\alpha)\tag{the first inequality in \cref{clm:lambda}}\\
  = {}&{} 3 - L_k + L_k^2 + 4\alpha \cdot (3 - L_k + L_k^2) \\
  \leq {}&{} 3 - L_k + L_k^2 + 12\alpha \tag{the second inequality in \cref{clm:lambda}}\\
  = {}&{} 3 - L_k + L_k^2 + \frac12 L_k\\
  = {}&{} 3 - \frac12L_k + L_k^2 
  \end{align*} This completes the proof.
\end{proof}

We now show the sufficient condition when the biased metric is strongly $\left(\alpha, \frac12 + \tilde{\beta}\right)$-consistent.

\begin{claim}\label{clm:cond1}
  Let $\alpha = \frac1{24}L_k$, $\tilde{\beta} = \frac19 L_k$, $\veps_1 = \frac1{150000}L_k^3$, $\veps_2 = \frac1k$, and $\mu = 1 - \frac1{2000}L_k^2$. For any integer $k \geq 7$, we have
  \[
  \mu (3+28\veps_1) + (1 - \mu) \left(1+2\lambda\left(\frac12 + \tilde{\beta}, k, \veps_2\right)\right)(1+4\alpha) < 3.
  \]
\end{claim}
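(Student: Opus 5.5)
Using \cref{clm:2}, it suffices to plug the bound $3 - \frac12 L_k + L_k^2$ into the second term and expand. With $1-\mu = \frac1{2000}L_k^2$, the left-hand side is at most
\[
\mu\cdot 3 + \mu\cdot 28\veps_1 + (1-\mu)\left(3 - \tfrac12 L_k + L_k^2\right) \leq 3 + 28\veps_1 - (1-\mu)\left(\tfrac12 L_k - L_k^2\right),
\]
where we used $\mu \leq 1$ on the $28\veps_1$ term. So the goal reduces to the single inequality
\[
\frac{28}{150000}L_k^3 < \frac1{2000}L_k^2\left(\frac12 L_k - L_k^2\right),
\]
i.e., after dividing by $L_k^3>0$ (valid since $k\geq 7 > 4$ gives $\ln(k/4)>0$),
\[
\frac{28}{150000} < \frac1{2000}\left(\frac12 - L_k\right).
\]

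The remaining step is a numeric check using the bound $L_k \leq \frac1{4e}$ established at the start of this appendix. Then $\frac12 - L_k \geq \frac12 - \frac1{4e} > 0.40$, so the right side exceeds $2\times 10^{-4}$, while the left side is $28/150000 \approx 1.87\times 10^{-4}$. This gives the strict inequality. I also need $\mu\in(0,1)$, which holds since $L_k^2/2000$ is tiny and positive.

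The only delicate point is the margin: the constants $150000$ and $2000$ are tuned so that $28\veps_1$ is beaten by the gain $(1-\mu)\cdot\frac12 L_k$ with only about a $7\%$ slack. I therefore must not lose more than that factor anywhere. In particular, I must keep the $-L_k^2$ correction bounded via $L_k \leq 1/(4e)$ rather than by a cruder estimate like $L_k\leq 1$, which would make the bracket negative.
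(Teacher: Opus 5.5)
Your proposal is correct and follows the paper's proof step for step. You apply \cref{clm:2}, use $\mu \le 1$ on the $28\veps_1$ term, factor out $L_k^3>0$, and finish with the numeric check $\frac{28}{150000} < \frac{1}{2000}\bigl(\frac12 - \frac1{4e}\bigr)$ via $L_k \le \frac1{4e}$, which is exactly the paper's final step.
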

\begin{proof}[Proof of \cref{clm:cond1}]
We can compute
\begin{align*}
& \mu (3+28\veps_1) + (1 - \mu) \left(1+2\lambda\left(\frac12 + \tilde{\beta}, k, \veps_2\right)\right)(1+4\alpha)\\
\leq {}&{} \mu (3 + 28\veps_1) + (1 - \mu)\left(3 - \frac12L_k + L_k^2 \right) \tag{\cref{clm:2}}\\
= {}&{} 3 + 28\mu\veps_1 + (1 - \mu) \left( - \frac12L_k + L_k^2 \right)\\
\leq {}&{} 3 + 28\veps_1 + (1 - \mu) \left( - \frac12L_k + L_k^2 \right)\\
= {}&{} 3 + \frac{28}{150000}L_k^3 +\frac1{2000}L_k^2 \cdot \left( - \frac12L_k + L_k^2 \right)\\
= {}&{} 3 + L_k^3 \left(\frac{28}{150000} + \frac{L_k - \frac12}{2000}\right)\\
\leq {}&{} 3 + L_k^3 \left(\frac{28}{150000} + \frac{\frac1{4e} - \frac12}{2000}\right) \tag{$L_k \leq \frac1{4e}$ for any integer $k > 0$}\\
< {}&{} 3 \tag{$L_k > 0$ for any integer $k \geq 5$}.
\end{align*} This completes the proof.
\end{proof}

Next, we show the sufficient condition when the biased metric is not strongly $\left(\alpha, \frac12 + \tilde{\beta}\right)$-consistent.

\begin{claim}\label{clm:cond2}
 Let $\alpha = \frac1{24}L_k$, $\tilde{\beta} = \frac19 L_k$, $\veps_1 = \frac1{150000}L_k^3$, $\veps_2 = \frac1k$, and $\mu = 1 - \frac1{2000}L_k^2$. For any integer $k \geq 7$, we have
  \[
  \mu(3 + 28\veps_1 - 2\alpha\tilde{\beta}) + (1 - \mu)\left(1+2\lambda\left(\frac12 + \tilde{\beta}, k, \veps_2\right)\right)\left(\frac{4}{1/2 + \tilde{\beta}} - 3\right) < 3.
  \]
\end{claim}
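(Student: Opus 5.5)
We bound the second summand by \cref{clm:1}: it is at most $15$. Then we expand the left-hand side, exactly as in \cref{clm:cond1}. Using $\mu\le 1$ for the $28\veps_1$ term, we get
\[
\mu(3 + 28\veps_1 - 2\alpha\tilde{\beta}) + 15(1-\mu) \le 3 + 28\veps_1 - 2\mu\alpha\tilde{\beta} + 12(1-\mu).
\]
The goal is to show that the loss $2\mu\alpha\tilde\beta$ dominates the two positive error terms $28\veps_1$ and $12(1-\mu)$.

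Substituting the parameter choices, we have $2\alpha\tilde{\beta} = 2\cdot\frac1{24}\cdot\frac19 L_k^2 = \frac1{108}L_k^2$, $12(1-\mu) = \frac{12}{2000}L_k^2 = \frac{3}{500}L_k^2$, and $28\veps_1 = \frac{28}{150000}L_k^3$. For $\mu$, we use $\mu = 1 - \frac1{2000}L_k^2 \ge 1 - \frac{1}{2000(4e)^2}$, since $L_k \le \frac1{4e}$ as noted at the start of this appendix. This gives a crude lower bound such as $\mu \ge 0.999$. The whole expression is then at most
\[
3 + L_k^2\left(\frac{28}{150000}L_k + \frac{3}{500} - \frac{0.999}{108}\right).
\]

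For the final check, we use $\frac{3}{500}=0.006$ and $\frac{0.999}{108}\approx 0.00925$. Also $\frac{28}{150000}L_k \le \frac{28}{150000\cdot 4e} < 2\cdot 10^{-5}$. So the bracket is negative, roughly $-0.0032$. Because $L_k>0$ for every integer $k\ge 7$ (indeed for $k\ge5$), the total is strictly less than $3$.

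The main point needing care is the margin between $\frac{1}{108}$ and $\frac{12}{2000}$. This margin is positive but modest, so the lower bound on $\mu$ must be tight enough, and $\mu\ge 0.999$ suffices comfortably. Everything else is routine arithmetic. One could also avoid using $\mu\ge 0.999$ by writing $-2\mu\alpha\tilde\beta = -2\alpha\tilde\beta + \frac{2\alpha\tilde\beta}{2000}L_k^2$ and absorbing the resulting $O(L_k^4)$ term using $L_k\le \frac1{4e}$.
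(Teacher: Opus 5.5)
Your proposal is correct and follows essentially the same route as the paper. Both arguments bound the second term by $15$ via \cref{clm:1}, reduce to $3 + 12(1-\mu) - \mu(2\alpha\tilde{\beta} - 28\veps_1)$, substitute the parameters, and use $L_k \le \frac{1}{4e}$ together with the resulting lower bound on $\mu$ to show that the coefficient of $L_k^2$ is negative. The only cosmetic difference is that you drop $\mu$ from the $28\veps_1$ term via $\mu \le 1$ and use the cruder bound $\mu \ge 0.999$, whereas the paper keeps $\mu$ as a common factor; both give a margin of roughly $-0.0032\,L_k^2$.
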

\begin{proof}[Proof of \cref{clm:cond2}]
  We can compute
  \begin{align*}
 & \mu(3 + 28\veps_1 - 2\alpha\tilde{\beta}) + (1 - \mu)\left(1+2\lambda\left(\frac12 + \tilde{\beta}, k, \veps_2\right)\right)\left(\frac{4}{1/2 + \tilde{\beta}} - 3\right)\\
 \leq {}&{} \mu(3 + 28\veps_1 - 2\alpha\tilde{\beta}) + 15(1 - \mu) \tag{\cref{clm:1}}\\
 = {}&{} 3 + 12 (1 - \mu) - \mu(2\alpha\tilde{\beta} - 28\veps_1)\\
 = {}&{} 3 + 12 (1 - \mu) - \mu\left(\frac1{108}L_k^2 - 28\veps_1\right)\\
 = {}&{}  3 + \frac{12}{2000}L_k^2 - \mu\left(\frac1{108}L_k^2 - \frac{28}{150000}L_k^3\right)\\
 ={}&{} 3 + L_k^2\left(\frac{12}{2000} - \mu\left(\frac1{108} - \frac{28}{150000} L_k\right)\right)\\
 \leq {}&{} 3 + L_k^2\left(\frac{12}{2000} - \mu\left(\frac1{108} - \frac{28}{150000 \cdot 4e}\right)\right) \tag{$L_k \leq \frac1{4e}$ for any integer $k > 0$}\\
 \leq {}&{} 3 +L_k^2\left(\frac{12}{2000} - \left(1 - \frac1{2000}\cdot \left(\frac1{4e}\right)^2\right) \cdot \left(\frac1{108} - \frac{28}{150000 \cdot 4e}\right)\right) \tag{$\mu = 1 - \frac1{2000}L_k^2 \geq 1 - \frac1{2000}\cdot \left(\frac1{4e}\right)^2$ for any integer $k \geq 4$}\\
 < {}&{} 3\tag{$L_k > 0$ for any integer $k \geq 5$}.
  \end{align*} This completes the proof.
\end{proof}
 
\end{document}